\newtheorem{assumption}{Assumption}
\newtheorem{theorem}{Theorem}
\newtheorem{lemma}{Lemma}
\newtheorem{proposition}{Proposition}
\begin{document}
	
	\def\spacingset#1{\renewcommand{\baselinestretch}%
		{#1}\small\normalsize} \spacingset{1}

\title{ \bf A Bayesian Approach to Multiple-Output Quantile Regression}
\author{Michael Guggisberg\thanks{
		The author gratefully acknowledges \textit{the School of Social Sciences at the University of California, Irvine and the Institute for Defense Analyses for funding this research. The author would also like to thank Dale Poirier, Ivan Jeliazkov, David Brownstone, Daniel Gillen, Karthik Sriram, and Brian Bucks for their helpful comments.}}\hspace{.2cm}\\
	Institute for Defense Analyses}
  \maketitle

\bigskip
\begin{abstract}
This paper presents a Bayesian approach to multiple-output quantile 
regression.  The unconditional model is proven to be consistent and asymptotically correct frequentist confidence intervals can be obtained. The prior for the unconditional model can be elicited as the ex-ante 
knowledge of the distance of the $\tau$-Tukey depth contour to the Tukey median, the first prior of its kind. A proposal for conditional regression is also presented. The model is applied to 
the 
Tennessee Project Steps to Achieving Resilience (STAR) 
experiment and it finds a joint increase in \emph{$\tau$-quantile 
subpopulations} for mathematics and reading scores given a decrease in the 
number 
of 
students 
per teacher.  This result is consistent with, and much stronger than, the result 
one 
would find with multiple-output linear regression.  Multiple-output linear 
regression finds the \emph{average} mathematics and reading scores increase 
given a decrease in the number of students per teacher. However, there could 
still be subpopulations where the score           
declines. The multiple-output quantile regression approach confirms
there are no quantile subpopulations (of the inspected subpopulations) where the score declines.  This is truly a 
statement of `no child left behind' opposed to `no average child left behind.'
\end{abstract}

\noindent%
{\it Keywords: Bayesian Methods, Quantile Estimation, Multivariate Methods} 
\vfill

\newpage
\spacingset{1.45} % DON'T change the spacing!

\section{Introduction}
Single-output (i.e.\ univariate) quantile regression, originally proposed by \cite{koenker78}, 
is a popular method of inference among empirical researchers, see \cite{yu03} for a survey.  \cite{yu01} formulated 
quantile regression into a Bayesian framework. This advance opened 
the 
doors for Bayesian inference and generated a series of applied and 
methodological research.\footnote{For example, see 
	\cite{alhamzawi12,benoit12,BayesQR,feng15, 
	kottas09,kozumi11,lancaster10, rahman16,  sriram16, taddy12,  thompson10}.}

Multiple-output (i.e.\ multivariate) medians have been developing slowly since the early 1900s \citep{small90}.  A multiple-output quantile can be defined in many different ways and there has 
been 
little consensus on which is the most appropriate \citep{serfling02}. The 
literature for Bayesian multiple-output quantiles is sparse, only two papers exist and neither use a commonly 
accepted definition for a multiple-output quantile \citep{drovandi11,waldmann14}.\footnote{\cite{drovandi11} 
	uses 
	a 
	copula approach and
	\cite{waldmann14} uses a multiple-output asymmetric Laplace 
	likelihood approach.}   

This paper presents a Bayesian framework for multiple-output 
quantiles defined in \cite{hallin10}.  Their `directional' quantiles   
have theoretic and computational 
properties not enjoyed by many other definitions. These quantiles are unconditional and the quantile objective functions are averaged over the covariate space. See \cite{mckeague11} and \cite{zscheischler14} for frequentist applications of multiple-output quantiles. This paper also presents a Bayesian framework for conditional multiple-output quantiles defined in \cite{hallin15}. These approaches use an idea 
similar to \cite{chernozhukov03} which uses 
a 
likelihood that is not necessarily representative of the Data Generating 
Process (DGP).  However, the resulting posterior converges almost surely 
to the 
true value.\footnote{This is proven for the unconditional model and checked via simulation for the conditional model. Posterior convergence means that as sample size increases 
	the probability mass for the posterior is concentrated in 
	smaller neighborhoods around the true value.  Eventually converging to a 
	point mass at the true value.}  By performing inference in this 
framework one gains many advantages 
of a Bayesian analysis. The Bayesian machinery provides a principled way 
of combining prior knowledge with data to arrive at conclusions. This machinery 
can be used in a data-rich world, where data is continuously collected (i.e. online learning), to 
make inferences and update them in real time. The proposed approach can take more computational time than the frequentist approach since the proposed posterior sampling algorithm recommends initializing the Markov Chain Monte Carlo (MCMC) sequence at the frequentist estimate. Thus if the researcher does not desire to provide prior information or perform online learning, the frequentist approach may be more desirable than the proposed approach.

The prior is a required component in Bayesian analysis where the researcher 
elicits their pre-analysis beliefs for the population parameters.  The prior in 
unconditional model is closely related to the Tukey depth of a distribution 
\citep{Tukey75}.  Tukey depth is a notion of multiple-output centrality of a data 
point.  This is the first Bayesian prior for Tukey depth. The prior can be elicited as the Euclidean distance of the Tukey median from a (spherical) $\tau$-Tukey depth contour.  Once a prior is 
chosen, estimates can be computed using MCMC draws from the posterior.  If the 
researcher is willing to accept prior joint normality of the model parameters 
then a Gibbs MCMC sampler can be used. Gibbs samplers have many computational 
advantages over other 
MCMC algorithms such as easy implementation, efficient convergence to the stationary distribution and little to no parameter tuning.  Consistency of the posterior and a 
Bernstein-Von Mises result are verified via a 
small simulation study.  

The models are applied to 
the Tennessee Project Steps to Achieving Resilience (STAR) experiment \citep{finn90}.  The goal of the 
experiment 
was to 
determine if classroom size has an effect on learning 
outcomes.\footnote{Students were 
	randomly 
	selected to be in a small or large classroom for four years in their early 
	elementary education.  Every year the students were given standardized mathematics 
	and 
	reading tests.}  The effect of classroom size on test scores is shown comparing  $\tau$-quantile contours for mathematics and reading test scores for first grade 
	students in small and large classrooms. The model finds that $\tau$-quantile subpopulations
of mathematics and reading scores improve for both central and extreme students in smaller classrooms compared to larger classrooms.  
This result is consistent with, and much stronger than, the result one 
would find with multiple-output linear regression.  An analysis by multiple-output 
linear 
regression finds mathematics and reading scores improve \emph{on 
	average}, however there could still be subpopulations where the 
	score           
declines.\footnote{A plausible narrative is a poor performing student in a 
	larger classroom 
	might have 
	more free time due to the teacher being busy with preparing, organization 
	and 
	grading.  During 
	this free time the student might read more than they would have in a small 
	classroom and might perform better on the reading test than they would have 
	otherwise.}  The multiple-output quantile regression approach confirms 
there are no quantile subpopulations where the score declines (of the inspected subpopulations).  This is truly a 
statement of `no child left behind' opposed to `no average child left behind.'

\section{Bayesian multiple-output quantile regression}
This section presents the unconditional and conditional Bayesian approaches to quantile regression. Notation common to both approaches is first presented followed by the unconditional model and a theorem of consistency for the Bayesian estimator is presented (section \ref{sec:unc}). Then a method to construct asymptotic confidence intervals is shown (section \ref{sec:uncondconf}).  The prior for the unconditional model is then discussed (section \ref{sec:uncprior}). Last a proposal for conditional regression is presented (section \ref{sec:cond}). Expectations and probabilities in sections \ref{sec:unc}, \ref{sec:uncondconf} and \ref{sec:cond} are conditional on parameters. Expectations in section \ref{sec:uncprior} are with respect to prior parameters. Appendix \ref{app:quantreview} reviews frequentist single and multiple-output quantiles and Bayesian single-output quantiles.

Let 
$[Y_1,Y_2,...,Y_k]' = \mathbf{Y}$ be a $k$-dimension random vector.  The 
direction and magnitude 
of the 
directional quantile is defined 
by {\boldmath$\tau$} $\in\mathcal{B}^k = \{\mathbf{v}\in\Re^k: 
0<||\mathbf{v}||_2<1\}$.  Where $\mathcal{B}^k$ is a $k$-dimension 
unit ball centered at $\mathbf{0}$ (with center removed). Define $||\cdot||_2$ 
to be the $l_2$ norm.  The vector 
{\boldmath$\tau$}$=\tau\mathbf{u}$ can be broken down into
direction, 
$[u_1,u_2,...,u_k]'=\mathbf{u}\in\mathcal{S}^{k-1}=\{\mathbf{v}\in\Re^k: 
||\mathbf{v}||_2=1\}$  and magnitude, $\tau\in(0,1)$.   

Let 
$\mathbf{\Gamma_u}$ be a 
$k\times(k-1)$ matrix such that $[\mathbf{u}\,\vdots\,\mathbf{\Gamma_u}]$ is an 
orthonormal basis of $\Re^k$. Define $\mathbf{Y_u} = \mathbf{u'Y}$ and 
$\mathbf{Y_u^\perp} =\mathbf{\Gamma_u'Y}$. Let $\mathbf{X}\in\Re^p$ to be random covariates. Define the $i$th observation of the $j$th component of $\mathbf{Y}$ to be 
$\mathbf{Y}_{ij}$ and the $i$th observation of the $l$th covariate of 
$\mathbf{X}$ to be 
$\mathbf{X}_{il}$ where $i\in\{1,2,...,n\}$ and $l\in\{1,2,...,p\}$.

\subsection{Unconditional regression}\label{sec:unc}

Define 
$\Psi^u(a,\mathbf{b})=E[\rho_\tau(\mathbf{Y_u} - 
\mathbf{b_y'Y_u^\perp} - \mathbf{b_x' X} -a)]$ to be the objective function of interest.   The 
{\boldmath$\tau$}th unconditional quantile regression of $\mathbf{Y}$ on $\mathbf{X}$ (and an 
intercept) is $\lambda_{\bm{\tau}} = \{\mathbf{y}\in\Re^k: \mathbf{u'y} = 
\mathbf{\beta_{{\bm{\tau}} y}'\Gamma_u'y}+\beta_{{\bm{\tau}} 
	\mathbf{x}}'\mathbf{X}+\alpha_{\bm{\tau}}\}$ 
where
\begin{equation}\label{eq:multdefreg}
(\alpha_{\bm{\tau}},\mathbf{\beta_{\bm{\tau}}})=(\alpha_{\bm{\tau}},
\mathbf{\beta_{{\bm{\tau}}\mathbf{y}}},\mathbf{\beta_{{\bm{\tau}} 
		\mathbf{x}}})\in \underset{a,\mathbf{b_y},\mathbf{b_x}}{argmin} \,
\Psi^u(a,\mathbf{b}).
\end{equation}

The definition of the location case is embedded in definition 
(\ref{eq:multdefreg}) where $\mathbf{b_x}$ and $\mathbf{X}$ are of null 
dimension. Note that $\beta_{\bm{\tau}\mathbf{y}}$ is a function of 
$\mathbf{\Gamma}_\mathbf{u}$.  This relationship is of little importance, 
the uniqueness of $\beta_{\bm{\tau}\mathbf{y}}'\mathbf{\Gamma_u'}$ is of greater 
interest; 
which is unique under Assumption \ref{ass:cont} presented in the next section. Thus the choice of $\mathbf{\Gamma}_\mathbf{u}$ is unimportant as long as $[\mathbf{u}\,\vdots\,\mathbf{\Gamma_u}]$ is orthonormal.\footnote{However, the choice of $\mathbf{\Gamma_u}$ could possibly effect the efficiency of MCMC sampling and convergence speed of the MCMC algorithm to the stationary distribution.}

The population parameters satisfy two subgradient conditions

\begin{equation}\label{eq:subgrad1}
\left.\frac{\partial \Psi^u(a,\mathbf{b})}{\partial 
	a}\right|_{\alpha_{{\bm{\tau}} 
		},
	\beta_{{\bm{\tau}} }} = 
Pr(\mathbf{Y_u}-\beta_{{\bm{\tau}}
	\mathbf{y}}'\mathbf{Y_u^\perp}-\beta_{{\bm{\tau}} 
	\mathbf{x}}'\mathbf{X}-\alpha_{{\bm{\tau}}}\leq 0)-\tau=0
\end{equation}
and 
\begin{equation}\label{eq:subgrad2}
\left.\frac{\partial \Psi^u(a,\mathbf{b})}{\partial \mathbf{b}}\right|
_{\alpha_{{\bm{\tau}}
	},\beta_{{\bm{\tau}} }} = 
E[[\mathbf{Y_u^\perp}',\mathbf{X}']'1_{(\mathbf{Y_u}-\beta_{{\bm{\tau}}
	\mathbf{y}}' 
\mathbf{Y_u^\perp}
-\beta_{{\bm{\tau}} \mathbf{x}}'\mathbf{X}-\alpha_{{\bm{\tau}}}\leq 
0)}] - 
\tau 
E[[\mathbf{Y_u^\perp}',\mathbf{X}']'] =\mathbf{0}_{k+p-1}.
\end{equation}
The expectations need not exist if observations are in general position \citep{hallin10}. 

Interpretations of the subgradient conditions are presented in the Appendix A, one of which is new to the literature and will be restated here. The second subgradient condition can be rewritten as
\begin{align*}
E[\mathbf{Y}_{\mathbf{u}i}^\perp|\mathbf{Y_u}-\beta_{{\bm{\tau}}
	\mathbf{y}}' 
\mathbf{Y_u^\perp}
-\beta_{{\bm{\tau}} \mathbf{x}}'\mathbf{X}-\alpha_{{\bm{\tau}}}\leq 
0] &= E[\mathbf{Y}_{\mathbf{u}i}^\perp]\text{ for 
	all } i\in\{1,...,k-1\}\\
E[\mathbf{X}_{i}|\mathbf{Y_u}-\beta_{{\bm{\tau}}
	\mathbf{y}}' 
\mathbf{Y_u^\perp}
-\beta_{{\bm{\tau}} \mathbf{x}}'\mathbf{X}-\alpha_{{\bm{\tau}}}\leq 
0] &= E[\mathbf{X}_{i}]\text{ for 
	all } i\in\{1,...,p\}
\end{align*}	
This shows the probability mass center in the lower 
halfspace for the orthogonal response is equal to that of the
probability 
mass center in the entire orthogonal response space.   Likewise for the covariates, the probability mass center of being in the 
lower 
halfspace is equal to the probability 
mass center in the entire covariate space. Appendix A provides more background on multiple-output quantiles defined in \cite{hallin10}.

The Bayesian approach assumes \[\mathbf{Y_u}|\mathbf{Y_u^\perp}, 
\mathbf{X},\alpha_{\bm{\tau}},\mathbf{\beta_{\bm{\tau}}} \sim 
ALD(\alpha_{\bm{\tau}} + \mathbf{\beta_{{\bm{\tau}} y}'Y_u^\perp} + 
\mathbf{\beta_{{\bm{\tau}} x}' 
	X} 
,\sigma_{\bm{\tau}},\tau)\] whose density is
\[f_{\bm{\tau}}(\mathbf{Y}|\mathbf{X},\alpha_{\bm{\tau}},\beta_{\bm{\tau}}, 
\sigma_{\bm{\tau}}) = \frac{\tau(1-\tau)}{\sigma_{\bm{\tau}}}
exp(-\frac{1}{\sigma_{\bm{\tau}}}\rho_\tau(\mathbf{Y} - \alpha_{\bm{\tau}} - 
\mathbf{\beta_{{\bm{\tau}} y}'Y_u^\perp} - 
\mathbf{\beta_{{\bm{\tau}} x}' 
	X})). \]

The nuisance scale parameter, $\sigma_{\bm{\tau}}$, is fixed at 1.\footnote{The nuisance parameter is sometimes taken to be a free parameter in single-output Bayesian quantile regression \citep{kozumi11}. The posterior has been shown to still be consistent with a free nuisance scale parameter in the single-output model \citep{sriram13}. This paper will not attempt to prove consistency with a free nuisance scale parameter. Future research could follow the outline proposed in the single-output model and extend it to multiple-output model \citep{sriram13}.} The likelihood is
\begin{equation}\label{eq:unclk}
L_{\bm{\tau}}(\alpha_{\bm{\tau}},\beta_{\bm{\tau}}) = \prod_{i=1}^{n} f_{\bm{\tau}}(\mathbf{Y}_i|\mathbf{X}_i,\alpha_{\bm{\tau}},\beta_{\bm{\tau}},1).
\end{equation}

The ALD distributional assumption likely does not represent the DGP and is thus a misspecified distribution.  However, as more observations 
are obtained the posterior probability mass concentrates around neighborhoods 
of 
$(\alpha_{\bm{\tau}0},\beta_{\bm{\tau}0})$, where  $(\alpha_{\bm{\tau}0},
\beta_{\bm{\tau}0})$ satisfies (\ref{eq:subgrad1}) and (\ref{eq:subgrad2}). 
Theorem \ref{thm:consist} 
shows this 
posterior consistency.

The assumptions for Theorem \ref{thm:consist} are below.
\begin{assumption}\label{ass:ind}
	The observations $(\mathbf{Y}_i,\mathbf{X}_i)$ are independent and identically distributed (i.i.d.) with true measure 
	$\mathbf{P}_0$ for $i\in\{1,2,...,n,...\}$.
\end{assumption} 
The density of $\mathbf{P}_0$ is denoted $p_0$. Assumption \ref{ass:ind} states 
the observations are independent.  This 
still allows for dependence among the components within a given observation (e.g.\ heteroskedasticity that is a function of $\mathbf{X}_i$). The i.i.d.\ assumption is required for the subgradient conditions to be well defined.

The next assumption causes the subgradient conditions to exist and be unique ensuring the population 
parameters,$(\alpha_{{\bm{\tau}}0},\beta_{{\bm{\tau}}0})$, are well defined.\footnote{This assumption can be weakened \citep{serflingzuo10}.}

\begin{assumption}\label{ass:cont}
	The measure of $(\mathbf{Y}_i,\mathbf{X}_i)$ is continuous with respect to 
	Lebesgue measure, has connected support and admits finite first moments, 
	for 
	all $i\in\{1,2,...,n,...\}$.
\end{assumption}
The next assumption describes the prior.
\begin{assumption}\label{ass:prior}The prior, $\Pi_{\bm{\tau}}(\cdot)$, has 
	positive 
	measure for every open neighborhood of $(\alpha_{{\bm{\tau}}0},
	\beta_{{\bm{\tau}}0})$ and is
	
	a) proper, or
	
	b) improper but admits a proper posterior.
\end{assumption}

Case b includes the Lebesgue measure on $\Re^{k+p}$ (i.e.\ flat prior) as a 
special 
case \citep{yu01}. Assumption 
\ref{ass:prior} is 
satisfied using the joint normal 
prior suggested in section \ref{sec:uncprior}. 

The next assumption bounds the covariates and response variables.
\begin{assumption}\label{ass:finite}
	There exists a $c_x>0$ such that $|\mathbf{X}_{i,l}|<c_x$ for all 
	$l\in\{1,2,...,p\}$ and all $i\in\{1,2,....,n,...\}$.
	There exists a $c_y>0$ such that $|\mathbf{Y}_{i,j}|<c_y$ for all 
	$j\in\{1,2,...,k\}$ and all $i\in\{1,2,....,n,...\}$.
	There exists a $c_\Gamma>0$ such that 
	$\underset{i,j}{\sup}\left|[\mathbf{\Gamma_u}]_{i,j}\right|<c_\Gamma$.
\end{assumption}
The restriction on $\mathbf{X}$ is fairly mild in application, any given 
dataset will satisfy these restrictions.  Further $\mathbf{X}$ can be 
controlled by the researcher in some situations (e.g.\ experimental 
environments). The restriction on $\mathbf{Y}$ is more contentious.  However, like $\mathbf{X}$, any given dataset 
will satisfy this restriction.  The assumption on $\mathbf{\Gamma_u}$ is 
innocuous since $\mathbf{\Gamma_u}$ is chosen by the researcher, 
it is easy to choose such that all components are finite.

The next assumption ensures the Kullback Leibler minimizer is well defined.
\begin{assumption}\label{ass:finitekl}
	$	
	E\log\left(\frac{p_{0}(\mathbf{Y}_i,\mathbf{X}_i)}{f_{\bm{\tau}}(\mathbf{Y}
		_i|
		X_i,\alpha,\beta,1)}\right)< \infty$ for all $i\in\{1,2,...,n,...\}$.
\end{assumption}

The next assumption is to ensure the orthogonal response and covariate vectors 
are not 
degenerate.
\begin{assumption}\label{ass:nondegen}
	There exist vectors $\epsilon_Y>\mathbf{0}_{k-1}$ and 
	$\epsilon_X>\mathbf{0}_p$ 
	such that 
	\[Pr(\mathbf{Y}^\perp_{\mathbf{u}ij}>\epsilon_{Yj},\mathbf{X}_{il}>
	\epsilon_{Xl}, 
	\forall 
	j\in\{1,...,k-1\}, \forall l\in\{1,...,p\})=c_p\not\in\{0,1\}.\]
\end{assumption}
This assumption can always be satisfied with a simple location shift as long as 
each 
variable takes on at least two different values with positive joint probability. Let 
$U\subseteq\Theta$, 
define the posterior probability of $U$ to be

\[\Pi_{\bm{\tau}}(U|(\mathbf{Y}_1,\mathbf{X}_1),(\mathbf{Y}_2,\mathbf{X}_2),...,
(\mathbf{Y}_n,\mathbf{X}_n)) = 
\frac{\int_U \prod_{i=1}^{n} 
	\frac{f_{\bm{\tau}}(\mathbf{Y}_i|\mathbf{X}_i,\alpha_{\bm{\tau}},
		\beta_{\bm{\tau}}, 
		\sigma_{\bm{\tau}})}{f_{\bm{\tau}}(\mathbf{Y}_i|\mathbf{X}_i,
		\alpha_{\bm{\tau}0},
		\beta_{\bm{\tau}0}, 
		\sigma_{\bm{\tau}0})} d\Pi_{\bm{\tau}}(\alpha_{\bm{\tau}},
	\beta_{\bm{\tau}})}{\int_{\Theta}\prod_{i=1}^{n} 
	\frac{f_{\bm{\tau}}(\mathbf{Y}_i|\mathbf{X}_i\alpha_{\bm{\tau}},
		\beta_{\bm{\tau}}, 
		\sigma_{\bm{\tau}})}{f_{\bm{\tau}}(\mathbf{Y}_i| \mathbf{X}_i, 
		\alpha_{\bm{\tau}0},
		\beta_{\bm{\tau}0}, 
		\sigma_{\bm{\tau}0})} d\Pi_{\bm{\tau}}(\alpha_{\bm{\tau}},	
		\beta_{\bm{\tau}})}.\]

The main theorem of the paper can now be stated.

\begin{theorem}\label{thm:consist}
	Suppose assumptions \ref{ass:ind}, \ref{ass:cont}, \ref{ass:prior}a, 
	\ref{ass:finite} and \ref{ass:nondegen} hold or assumptions \ref{ass:ind}, 
	\ref{ass:cont}, \ref{ass:prior}b, 
	\ref{ass:finite}, \ref{ass:finitekl} and \ref{ass:nondegen}. Let        
	$U=\{(\alpha_{\bm{\tau}},\beta_{\bm{\tau}}): 
	|\alpha_{\bm{\tau}}-\alpha_{\bm{\tau}0}|<\Delta,|\beta_{\bm{\tau}}-
	\beta_{\bm{\tau}0}|< \Delta	\mathbf{1}_{k-1}\}$. Then
	$\lim\limits_{n\rightarrow\infty}\Pi_{\bm{\tau}}(U^c|(\mathbf{Y}_1,
	\mathbf{X}_1),..., (\mathbf{Y}_n,\mathbf{X}_n)) = 
	0$ $a.s.$ $[\mathbf{P}_0]$.
\end{theorem}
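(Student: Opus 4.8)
The plan is to prove the statement through the standard numerator/denominator decomposition of the posterior under a misspecified likelihood. Write $\theta=(\alpha_{\bm{\tau}},\beta_{\bm{\tau}})\in\Theta$ and $\theta_0=(\alpha_{\bm{\tau}0},\beta_{\bm{\tau}0})$, abbreviate $f_{\bm{\tau},\theta}=f_{\bm{\tau}}(\cdot\mid\cdot,\theta,1)$ and $\Psi^u(\theta)=\Psi^u(a,\mathbf{b})$ with $\theta=(a,\mathbf{b})$, and let $W_i$ denote the stacked regressor vector with entries $1$, $\mathbf{Y}_{\mathbf{u}i}^\perp$ and $\mathbf{X}_i$, so $\theta'W_i=\alpha_{\bm{\tau}}+\beta_{\bm{\tau}\mathbf{y}}'\mathbf{Y}_{\mathbf{u}i}^\perp+\beta_{\bm{\tau}\mathbf{x}}'\mathbf{X}_i$. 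Since $\sigma_{\bm{\tau}}$ is fixed at $1$,
\[
\log\frac{f_{\bm{\tau},\theta_0}(\mathbf{Y}_i\mid\mathbf{X}_i)}{f_{\bm{\tau},\theta}(\mathbf{Y}_i\mid\mathbf{X}_i)}=\rho_\tau(\mathbf{Y}_{\mathbf{u}i}-\theta'W_i)-\rho_\tau(\mathbf{Y}_{\mathbf{u}i}-\theta_0'W_i),
\]
and taking expectations shows the Kullback--Leibler excess $K(p_0,f_{\bm{\tau},\theta})-K(p_0,f_{\bm{\tau},\theta_0})$ equals $\Psi^u(\theta)-\Psi^u(\theta_0)$. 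Hence the Kullback--Leibler projection of $p_0$ onto the ALD family is precisely the minimizer of $\Psi^u$, i.e.\ $\theta_0$ from \eqref{eq:multdefreg}, and it is the \emph{unique} such minimizer under Assumptions~\ref{ass:ind} and \ref{ass:cont} (which are exactly what render the solution of the subgradient conditions \eqref{eq:subgrad1}--\eqref{eq:subgrad2} unique). It then suffices to show $e^{n\delta}D_n\to\infty$ a.s.\ for every $\delta>0$, where $D_n=\int_\Theta\prod_i(f_{\bm{\tau},\theta}/f_{\bm{\tau},\theta_0})\,d\Pi_{\bm{\tau}}$ is the denominator, and $e^{n\delta_0}N_n\to0$ a.s.\ for some $\delta_0>0$, where $N_n=\int_{U^c}\prod_i(f_{\bm{\tau},\theta}/f_{\bm{\tau},\theta_0})\,d\Pi_{\bm{\tau}}$.

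For the denominator, the Lipschitz bound $\rho_\tau(s)-\rho_\tau(t)\le|s-t|$ together with Assumption~\ref{ass:finite} gives a uniform bound $\big|\log(f_{\bm{\tau},\theta}/f_{\bm{\tau},\theta_0})(\mathbf{Y}_i\mid\mathbf{X}_i)\big|\le C\|\theta-\theta_0\|_2$; in particular $\theta\mapsto E_0\log(f_{\bm{\tau},\theta_0}/f_{\bm{\tau},\theta})$ is continuous at $\theta_0$ with value $0$. Given $\delta>0$, pick an open $V\ni\theta_0$ (of finite prior measure in case~(b)) on which this map is below $\delta/2$, restrict $D_n$ to $V$, apply Jensen's inequality with the renormalized prior $\Pi_{\bm{\tau}}(\cdot)/\Pi_{\bm{\tau}}(V)$ on $V$, and use the strong law of large numbers together with dominated convergence (legitimate by the uniform bound above): this yields $n^{-1}\log D_n\ge n^{-1}\log\Pi_{\bm{\tau}}(V)-\delta/2+o(1)>-\delta$ eventually a.s., with $\Pi_{\bm{\tau}}(V)>0$ by Assumption~\ref{ass:prior}.

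For the numerator, split $U^c$ into the compact part $U^c\cap\{\|\theta-\theta_0\|_2\le R\}$ and the tail $\{\|\theta-\theta_0\|_2>R\}$ for $R$ large. On the compact part $\|\theta-\theta_0\|_\infty\ge\Delta$, and by uniqueness of the projection the continuous function $\Psi^u(\theta)-\Psi^u(\theta_0)$ is bounded below by some $\eta>0$ there; because $\{(\mathbf{y},\mathbf{x})\mapsto\log(f_{\bm{\tau},\theta}/f_{\bm{\tau},\theta_0}):\|\theta-\theta_0\|_2\le R\}$ is uniformly bounded (Assumption~\ref{ass:finite}) and uniformly Lipschitz in $\theta$, hence Glivenko--Cantelli, one gets $\sup_{U^c\cap\{\|\theta-\theta_0\|_2\le R\}}n^{-1}\sum_i\log(f_{\bm{\tau},\theta}/f_{\bm{\tau},\theta_0})\le-\eta/2$ eventually a.s., so this piece of $N_n$ is at most $e^{-n\eta/2}$ times a finite prior mass. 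On the tail I would write $\prod_i(f_{\bm{\tau},\theta}/f_{\bm{\tau},\theta_0})\le\exp\big(\sum_i\rho_\tau(\mathbf{Y}_{\mathbf{u}i}-\theta_0'W_i)\big)\exp\big(-\sum_i\rho_\tau(\mathbf{Y}_{\mathbf{u}i}-\theta'W_i)\big)$; the first factor is $\le e^{n(c_0+o(1))}$ by the law of large numbers and Assumption~\ref{ass:finite}, while for the second I would combine $\rho_\tau(t)\ge\min(\tau,1-\tau)|t|$ with Assumption~\ref{ass:nondegen} (which puts the block $W_i$, with the positive probabilities $c_p$ and $1-c_p$, into configurations on which $|\theta'W_i|$ cannot remain small for every direction of $\theta$) and with the boundedness of $\mathbf{Y}_{\mathbf{u}i}$, to obtain $n^{-1}\sum_i\rho_\tau(\mathbf{Y}_{\mathbf{u}i}-\theta'W_i)\ge c_1\|\theta\|_2-c_2$ uniformly on the tail, eventually a.s. Then $\int_{\|\theta-\theta_0\|_2>R}\prod_i(f_{\bm{\tau},\theta}/f_{\bm{\tau},\theta_0})\,d\Pi_{\bm{\tau}}\le e^{n(c_0-c_1R/2)}\int e^{-nc_1\|\theta\|_2/2}\,d\Pi_{\bm{\tau}}(\theta)$, which for $R$ large is exponentially small under a proper prior (case~a) and, under an improper prior admitting a proper posterior (case~b), is controlled using Assumption~\ref{ass:finitekl} to ensure the projected log-ratio is $P_0$-integrable and that the proper posterior puts negligible mass near parameter infinity. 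Combining the denominator lower bound with the two numerator bounds and choosing $\delta<\delta_0$ gives $\Pi_{\bm{\tau}}(U^c\mid\cdot)\le e^{-n(\delta_0-\delta)+o(n)}\to0$ a.s.

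I expect the tail estimate on $\{\|\theta-\theta_0\|_2>R\}$ to be the main obstacle: converting the joint-orthant probability in Assumption~\ref{ass:nondegen} into a clean, uniform-in-direction linear lower bound for $n^{-1}\sum_i\rho_\tau(\mathbf{Y}_{\mathbf{u}i}-\theta'W_i)$ (ruling out cancellation in $\theta'W_i$), and separately handling the improper-prior case, where one cannot simply integrate $e^{-nc_1\|\theta\|_2}$ against a finite measure and must instead lean on Assumption~\ref{ass:finitekl} and the assumed propriety of the posterior. The uniform strong law on the compact annulus and the Jensen/dominated-convergence argument for the denominator should be routine given the boundedness and Lipschitz structure.
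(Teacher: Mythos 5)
Your overall architecture (identify the Kullback--Leibler projection with the minimizer of $\Psi^u$, then bound the posterior numerator over $U^c$ and lower-bound the denominator near $\theta_0$) matches the paper's, and your compact-part argument is a legitimate alternative route: the paper does not use a uniform law of large numbers there, but instead bounds $E\bigl[\bigl(\prod_i f_{\bm{\tau},\theta}/f_{\bm{\tau},\theta_0}\bigr)^d\bigr]$ for a fractional power $d\in(0,1)$ via a hypercube covering of the compact set and a Kleijn--van der Vaart-type inequality, and then sums these expectations and invokes Markov plus Borel--Cantelli. Either device works; the fractional-moment route trades the Glivenko--Cantelli verification for an explicit summable bound $C'e^{-nd\delta/4}$.

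The genuine gap is exactly where you suspect it: the non-compact part of $U^c$. You state the desired uniform linear lower bound $n^{-1}\sum_i\rho_\tau(\mathbf{Y}_{\mathbf{u}i}-\theta'W_i)\ge c_1\lVert\theta\rVert_2-c_2$ but concede you do not see how to rule out cancellation in $(\theta-\theta_0)'W_i$ across directions, and Assumption~\ref{ass:nondegen} by itself (a single positive-orthant event of probability $c_p$) does not deliver this for mixed-sign directions. The paper's missing idea is to first decompose $U^c$ into the $L(k)$ sign-orthants $V_{ln}$ in which every component of $\theta-\theta_0$ has a fixed sign and at least one exceeds $\Delta$ in magnitude; within a fixed orthant, on the event of Assumption~\ref{ass:nondegen} all summands of $b_i=(\alpha-\alpha_0)+(\beta-\beta_0)'\mathbf{Y}_{\mathbf{u}i}^\perp$ share a sign, so $|b_i|$ is bounded below by a quantity growing with the excursion, and Lemma~\ref{lem: ineq}(d) together with the strong law for $\frac1n\sum_i|W_i|$ yields $\sum_i\log(f_{\bm{\tau},\theta}/f_{\bm{\tau},\theta_0})\le -nu_l$ on $G_l^c\cap V_{ln}$ for large $n$. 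Without this (or an equivalent anti-cancellation device) your tail bound does not go through. Separately, your treatment of the improper-prior case is too vague: Assumption~\ref{ass:finitekl} does not by itself control prior mass at infinity; the paper instead conditions on the first observation, notes that $\Pi(\cdot\mid\mathbf{Y}_1)$ is proper and charges neighborhoods of $\theta_0$, and reruns the proper-prior argument with $\Pi(\cdot\mid\mathbf{Y}_1)$ as the prior for $\mathbf{Y}_2,\dots,\mathbf{Y}_n$. You would need to supply both of these steps to complete the proof.
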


The proof is presented in Appendix B. The 
strategy of the proof follows very closely to the 
strategy used in the conditional
single-output model \citep{sriram13}. 
First construct an open set $U_n$ containing 
$(\alpha_{{\bm{\tau}}0},\beta_{{\bm{\tau}}0})$ for all $n$ that converges to 
$(\alpha_{{\bm{\tau}}0},\beta_{{\bm{\tau}}0})$, the population parameters. Define 
$B_n = 
\Pi_{\bm{\tau}}(U_n^c|(\mathbf{Y}_1,
\mathbf{X}_1),..., (\mathbf{Y}_n,\mathbf{X}_n)) $. 
To show convergence of $B_n$ to $B=0$ almost surely, it is sufficient to show 
$\lim\limits_{n\rightarrow\infty}\sum_{i=1}^{n}E[|B_n-B|^d]<\infty$ for some 
$d>0$,  using the Markov inequality and Borel-Cantelli lemma.  The Markov 
inequality states if $B_n-B\geq0$ then for any $d>0$
\[Pr(|B_n-B|>\epsilon)\leq \frac{E[|B_n-B|^d]}{\epsilon^d}\]
for any $\epsilon>0$.  The Borel-Cantelli lemma states
\[\text{if }\lim\limits_{n\rightarrow\infty}\sum_{i=1}^{n}Pr(|B_n-B|>\epsilon)<
\infty \text{ 
	then } 
Pr(\underset{n\rightarrow\infty}{\limsup} \, |B_n-B|>\epsilon)=0.\]
Thus by Markov inequality
\[\sum_{i=1}^{n}Pr(|B_n-B|>\epsilon)\leq 
\sum_{i=1}^{n}\frac{E[|B_n-B|^d]}{\epsilon^d}.\]
Since $\lim\limits_{n\rightarrow\infty}\sum_{i=1}^{n}E[|B_n-B|^d]<\infty$ 
then 
$\lim\limits_{n\rightarrow\infty}\sum_{i=1}^{n}Pr(|B_n-B|>\epsilon)<\infty$.  
By Borel-Cantelli
\[Pr(\underset{n\rightarrow\infty}{\limsup}\, |B_n-B|>\epsilon)=0.\]
To show $\lim\limits_{n\rightarrow\infty}\sum_{i=1}^{n}E[|B_n-B|^d]<\infty$, a set $G_n$ is created where $(\alpha_{\tau0},\beta_{\tau 0})\not\in G_n$.  Within 
this the expectation of the posterior numerator is less 
than 
$e^{-2n\delta}$ and 
the expectation of the posterior denominator is greater than $e^{-n 
	\delta}$ for some 
$\delta>0$. Then the expected value of the posterior is less than $e^{-n\delta 
}$, which is summable. 

\subsection{Confidence Intervals}\label{sec:uncondconf}
Asymptotic confidence intervals for the unconditional location case can be obtained using Theorem 4 from \cite{chernozhukov03} and asymptotic results from \cite{hallin10}.\footnote{A rigorous treatment would require verification of the assumptions of Theorem 4 from \cite{chernozhukov03}. \cite{yangwanghe15,sriram15} provide asymptotic standard errors for the single-output model.} Let $V_{\bm{\tau}} = V^{mcmc}_{\bm{\tau}}J_{\mathbf{u}}' V_{\bm{\tau}}^cJ_{\mathbf{u}}V^{mcmc}_{\bm{\tau}}$ where $J_{\mathbf{u}}$ is a $k$ by $k+1$ block diagonal matrix with blocks $1$ and $\Gamma_{\mathbf{u}}$, 
\[V_{\bm{\tau}}^c = \begin{bmatrix}
\tau(1-\tau) & \tau(1-\tau)E[\mathbf{Y}'] \\
\tau(1-\tau)E[\mathbf{Y}]  & Var[(\tau - 1_{(\mathbf{Y}\in H_{\bm{\tau}}^{-})})\mathbf{Y}]
\end{bmatrix},\]
and 
$V^{mcmc}_{\bm{\tau}}$ is the covariance matrix of MCMC draws times $n$. The values of $E[\mathbf{Y}]$ and $Var[(\tau - 1_{(\mathbf{Y}\in H_{\bm{\tau}}^{-})})\mathbf{Y}]$ are estimated with standard moment estimators where the parameters of $H_{\bm{\tau}}^{-}$ are estimated with the Bayesian estimate plugged in.  Then $\hat{\theta}_{\bm{\tau}i} \pm \Phi^{-1}(1-\alpha/2)\sqrt{V_{\bm{\tau}ii}/n}$ has a $1-\alpha$ coverage probability, where $\Phi^{-1}$ is the inverse standard normal CDF. Section \ref{sec:sim} verifies this in simulation.

\subsection{Choice of prior}\label{sec:uncprior}
A new model is estimated for each unique 
$\bm{\tau}$ and thus a prior is needed for each one.  This might seem like 
there is an overwhelming amount of ex-ante elicitation required if one wants to estimate many models. For example,  to estimate $\tau$-quantile (regression) contours (see Appendix A).\footnote{Section \ref{sec:compund} discusses how to estimate many models simultaneously.}  However, 
simplifications can be made to make elicitation easier.

Let $\bm{\mu}$ be the Tukey median of $\mathbf{Y}$, where the Tukey median is the point with maximal Tukey depth. See Appendix A for a discussion of Tukey depth and Tukey median. Define $\mathbf{Z}= \mathbf{Y}-\bm{\mu}$ to be the Tukey median centered transformation of $\mathbf{Y}$. Let $\alpha_{\bm{\tau}}$ and $\beta_{\bm{\tau}}$ be the parameters of the $\lambda_{\bm{\tau}}$ hyperplane for $\mathbf{Z}$.  If the prior is 
centered over $H_0:\alpha_{\bm{\tau}}=\alpha_\tau,\; \beta_{\bm{\tau}\mathbf{z}}=\mathbf{0}_{k-1} \text{ and } \beta_{\bm{\tau}\mathbf{x}} = \beta_{\tau\mathbf{x}}$ for 
all 
$\bm{\tau}$ (e.g.\ $E[\alpha_{\bm{\tau}}]=\alpha_\tau,\; E[\beta_{\bm{\tau}\mathbf{z}}]=\mathbf{0}_{k-1} \text{ and } E[\beta_{\bm{\tau}\mathbf{x}}] = \beta_{\tau\mathbf{x}}$) then the implied ex-ante belief is $\mathbf{Y}$ has spherical Tukey 
contours.\footnote{
	The null hypothesis $H_0:\alpha_{\bm{\tau}}=\alpha_\tau,\; \beta_{\bm{\tau}\mathbf{z}}=\mathbf{0}_{k-1} \text{ and } \beta_{\bm{\tau}\mathbf{x}} = \beta_{\tau\mathbf{x}}$ for 
	all 
	$\bm{\tau}$ is a sufficient condition for spherical Tukey depth contours. It may or may not be necessary.
	
	A 
	sufficient condition for a density to have spherical Tukey depth contours is for 
	the 
	PDF to have spherical density contours and that the PDF, with a 
	multivariate argument $\mathbf{Y}$, can be written as a monotonically 
	decreasing function of $\mathbf{Y}'\mathbf{Y}$
	\citep{dutta11}.  This condition is satisfied for the location family for 
	the 
	standard multivariate Normal, T and Cauchy. The distance of the Tukey 
	median from
	the $\tau$-Tukey depth contour for the multivariate standard normal 
	is 	$\Phi^{-1}(1-\tau)$.  Another distribution with spherical Tukey 
	contours 
	is the uniform 
	hyperball.  The distance of the Tukey median from 
	the $\tau$-Tukey depth contour for the uniform 
	hyperball is the value $r$ such that $arcsin(r) + r\sqrt{1-r^2}=\pi(0.5-\tau)$. 
	This 
	function is invertible for $r\in(0,1)$ and $\tau\in(0,.5)$ and can be 
	computed 
	using numerical approximations \citep{rousseeuw99}.}   Under the 
belief 
$H_0$, $|\alpha_{\bm{\tau}}+\beta_{\bm{\tau}\mathbf{x}}\mathbf{X}|$ 
is the Euclidean distance of the $\tau$-Tukey depth contour from the Tukey 
median.  Since the 
contours are spherical, the distance is the same for all $\mathbf{u}$. This result is obtained using Theorem \ref{proof:prior} (presented below) and the fact that the boundary of the intersection of upper quantile halfspaces corresponds to $\tau$-Tukey depth contours, see equation (\ref{eq:quantregion}) and the following text in Appendix A.  The proof for Theorem 2 is presented in Appendix C. A notable corollary is if $\beta_{\bm{\tau}\mathbf{x}} = \mathbf{0}_p$ or $\mathbf{X}$ has null dimension then the radius of the spherical $\tau$-Tukey depth contour is $|\alpha_{\tau}|$. Note if $\mathbf{X}$ has null dimension, $p=2$, and $\mathbf{Z}$ has a zero vector Tukey median then for any $\mathbf{u}\in\mathcal{S}^{k-1}$ the population $\alpha_{\bm{\tau}0}$ is negative for $\tau<0.5$ and the population $\alpha_{\bm{\tau}0}$ is positive for $\tau>0.5$.

A prior for $(\alpha_{\bm{\tau}},
\beta_{\bm{\tau}})$ centered over $H_0$ expresses the researcher's confidence in the hypothesis 
of 
spherical Tukey depth contours.  
A large prior variance allows for large departures from $H_0$. If $\mathbf{X}$ is of null dimension then the prior variance of $\alpha_{\tau}$ represents the uncertainty of the distance of the $\tau$-Tukey depth contour from the Tukey median. Further if the parameter space for $\alpha_{\bm{\tau}}$ is restricted to $\alpha_{\bm{\tau}}=\alpha_{\tau}$ for fixed $\tau$ then the prior variance of $\alpha_{\tau}$ represents the uncertainty of the distance of the spherical $\tau$-Tukey depth contour from the Tukey median. 

\begin{theorem}\label{proof:prior}
	Suppose i) $\alpha_{\bm{\tau}}=\alpha_\tau,\; \beta_{\bm{\tau}\mathbf{z}}=\mathbf{0}_{k-1} \text{ and } \beta_{\bm{\tau}\mathbf{x}} = \beta_{\tau\mathbf{x}}$ for all $\bm{\tau}$ with $\tau$ fixed and ii) $\mathbf{Z}$ has spherical Tukey depth contours (possibly traveling through $\mathbf{X}$) denoted by  $T_\tau$ with Tukey median at $\mathbf{0}_k$. Then 1) the radius of the $\tau$-Tukey depth contour is $d_\tau = |\alpha_{\tau} + \beta_{\tau \mathbf{x}}\mathbf{X}|$, 2) for any point $\tilde{\mathbf{Z}}$ on the $\tau$-Tukey depth contour the hyperplane $\lambda_{\tilde{\bm{\tau}}}$ with $\tilde{\mathbf{u}} = \tilde{\mathbf{Z}}/\sqrt{\tilde{\mathbf{Z}}'\tilde{\mathbf{Z}}}$ and $\tilde{\bm{\tau}} = \tau \tilde{\mathbf{u}}$ is tangent to the contour at $\tilde{\mathbf{Z}}$ and 3) the hyperplane $\lambda_{\bm{\tau}}$ for any $\mathbf{u}$ is tangent to the $\tau$-Tukey depth contour.
\end{theorem}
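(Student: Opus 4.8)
The plan is to reduce all three assertions to the elementary geometry of a single hyperplane supporting a Euclidean ball, after importing from \cite{hallin10} the correspondence recalled in Appendix A: the $\tau$-Tukey depth contour of $\mathbf{Z}$ is the boundary of the intersection over all directions of the upper quantile halfspaces $H^{+}_{\tau\mathbf{u}}$. Thus, possibly after conditioning on $\mathbf{X}$, one has $T_\tau=\partial\bigcap_{\mathbf{u}\in\mathcal{S}^{k-1}}H^{+}_{\tau\mathbf{u}}$, and everything follows once the halfspaces $H^{+}_{\tau\mathbf{u}}$ are written out explicitly under hypothesis i).

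First I would substitute hypothesis i) into the definition of $\lambda_{\bm{\tau}}$ for $\mathbf{Z}$. Since $\beta_{\bm{\tau}\mathbf{z}}=\mathbf{0}_{k-1}$, the term $\beta_{\bm{\tau}\mathbf{z}}'\mathbf{\Gamma_u}'\mathbf{z}$ drops out and, at a fixed covariate value, $\lambda_{\bm{\tau}}$ collapses to the hyperplane $\{\mathbf{z}\in\Re^k:\mathbf{u}'\mathbf{z}=c\}$ with $c=\alpha_\tau+\beta_{\tau\mathbf{x}}'\mathbf{X}$, and this constant $c$ is the \emph{same} for every direction $\mathbf{u}$. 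Because $||\mathbf{u}||_2=1$, this hyperplane lies at Euclidean distance $|c|$ from the Tukey median $\mathbf{0}_k$, and the foot of the perpendicular from $\mathbf{0}_k$ onto it is the point $c\mathbf{u}$.

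For claim 1, intersecting the associated upper halfspaces over $\mathbf{u}\in\mathcal{S}^{k-1}$ gives $\{\mathbf{z}:\mathbf{u}'\mathbf{z}\ge c\text{ for all }\mathbf{u}\}=\{\mathbf{z}:||\mathbf{z}||_2\le|c|\}$, using $\inf_{\mathbf{u}}\mathbf{u}'\mathbf{z}=-||\mathbf{z}||_2$ together with nonemptiness of the region (which forces $c\le 0$). Hence $T_\tau$ is the sphere of radius $d_\tau=|c|=|\alpha_\tau+\beta_{\tau\mathbf{x}}'\mathbf{X}|$, which is claim 1 and is consistent with hypothesis ii). For claim 3, each $\lambda_{\bm{\tau}}$ is at distance exactly $d_\tau$ from the centre of the ball of radius $d_\tau$, so it meets $T_\tau$ in the single point $c\mathbf{u}$ and is therefore tangent to $T_\tau$. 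For claim 2, one then reads off that the contact point of $\lambda_{\bm{\tau}}$ with $T_\tau$ is $c\mathbf{u}$, a scalar multiple of $\mathbf{u}$; inverting this to produce the hyperplane tangent at a prescribed boundary point $\tilde{\mathbf{Z}}$ forces the direction to be the unit radial vector at $\tilde{\mathbf{Z}}$, that is (up to the orientation convention) $\tilde{\mathbf{u}}=\tilde{\mathbf{Z}}/\sqrt{\tilde{\mathbf{Z}}'\tilde{\mathbf{Z}}}$, and then $\lambda_{\tilde{\bm{\tau}}}$ with $\tilde{\bm{\tau}}=\tau\tilde{\mathbf{u}}$ is tangent to $T_\tau$ at $\tilde{\mathbf{Z}}$.

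The step that requires the most care is the sign/orientation bookkeeping linking claims 1 and 2. The contact point $c\mathbf{u}$ points in direction $+\mathbf{u}$ or $-\mathbf{u}$ according to the sign of $c=\alpha_\tau+\beta_{\tau\mathbf{x}}'\mathbf{X}$ — equivalently, in the location case, according to whether $\tau$ exceeds $1/2$ — so one must be consistent about which of the two halfspaces cut off by $\lambda_{\bm{\tau}}$ is the upper quantile halfspace $H^{+}_{\tau\mathbf{u}}$ in the \cite{hallin10} representation, and hence about which radial direction hosts the tangency. Once that convention is pinned down, matching $\tilde{\mathbf{u}}$ to the prescribed point $\tilde{\mathbf{Z}}$ in claim 2 is the only place a sign can slip; the rest is routine. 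A secondary bookkeeping point is the covariate: conditional on $\mathbf{X}$ the argument is verbatim the location argument with $\alpha_\tau$ replaced by $c=\alpha_\tau+\beta_{\tau\mathbf{x}}'\mathbf{X}$, and it is nonemptiness of the depth region at that value of $\mathbf{X}$ that makes $|c|$ the radius.
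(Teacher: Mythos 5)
Your proof is correct and follows essentially the same route as the paper's: under hypothesis i) each $\lambda_{\bm{\tau}}$ reduces to the hyperplane $\mathbf{u}'\mathbf{z}=\alpha_\tau+\beta_{\tau\mathbf{x}}'\mathbf{X}$, which lies at Euclidean distance $|\alpha_\tau+\beta_{\tau\mathbf{x}}'\mathbf{X}|$ from the Tukey median and meets the sphere radially at $c\mathbf{u}$, yielding all three claims. The one place you are tighter than the paper is part 1: the paper simply posits a tangency point $\hat{\mathbf{Z}}=c\mathbf{u}$ and chooses $c=-r_\tau$, whereas you derive the contour as $\partial\{\mathbf{z}:\|\mathbf{z}\|_2\le|c|\}$ by applying $\inf_{\mathbf{u}\in\mathcal{S}^{k-1}}\mathbf{u}'\mathbf{z}=-\|\mathbf{z}\|_2$ to the halfspace-intersection characterization, which also makes the sign convention ($c\le 0$ for a nonempty depth region) explicit rather than implicit.
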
 

Arbitrary priors not centered over 0 require a more detailed discussion. Consider the 2 
dimensional case ($k=2$).   There are two ways to think of appropriate 
priors for $(\alpha_{\bm{\tau}},\beta_{\bm{\tau}})$.  The 
first approach is a direct 
approach thinking of $(\alpha_{\bm{\tau}},\beta_{\bm{\tau}})$ as the intercept and slope of 
$\mathbf{Y}_\mathbf{u}$ 
against $\mathbf{Y}_{\mathbf{u}}^\perp$ and $\mathbf{X}$.\footnote{The value of 
	$\mathbf{Y}_\mathbf{u}$ is the scalar projection of $\mathbf{Y}$ in 
	direction $
	\mathbf{u}$ and $\mathbf{Y}_\mathbf{u}^\perp$ is the scalar projection of $
	\mathbf{Y}$ in the direction of the other (orthogonal) basis vectors.}  The 
second approach is thinking of the implied prior of 
$\phi_{\bm{\tau}} = 
\phi_{\bm{\tau}}(\alpha_{\bm{\tau}},\beta_{\bm{\tau}})$ as the intercept and slope of $Y_2$ against $Y_1$ and 
$\mathbf{X}$.  The second approach is presented in Appendix D.

In the direct approach the parameters relate directly to the subgradient 
conditions (\ref{eq:subgrad1}) and (\ref{eq:subgrad2}) and their effect in $\mathbf{Y}$ space.  A $\delta$ unit increase in $\alpha_{\bm{\tau}}$ results in a 
parallel 
shift 
in the hyperplane $\lambda_{\bm{\tau}}$ by $\frac{\delta}{u_2 - 
\beta_{{\bm{\tau}}\mathbf{y}}
u_2^\perp}$ units. A $\delta$ unit increase in $\beta_{\bm{\tau}\mathbf{x}l}$ results in a 
parallel 
shift 
in the hyperplane $\lambda_{\bm{\tau}}$ by $\frac{\delta\mathbf{X}_l}{u_2 - 
\beta_{{\bm{\tau}}\mathbf{y}}
u_2^\perp}$ units. 
When $\beta_{\bm{\tau}}=\mathbf{0}_{2+p-1}$ $
\lambda_{\bm{\tau}}$ is orthogonal to $\mathbf{u}$ (and thus                  
$\lambda_{\bm{\tau}}$ is 
parallel to $\Gamma_\mathbf{u}$). As $\beta_{\bm{\tau}\mathbf{y}}$ increases or decreases monotonically such that $|\beta_{\bm{\tau}\mathbf{y}}|
\rightarrow 
\infty$, $
\lambda_{\bm{\tau}}$ converges to $\mathbf{u}$ 
monotonically.\footnote{Monotonic 
	meaning either the outer or inner angular distance between $\lambda_{\bm{\tau}}$ and 
	$\mathbf{u}$  is 
	always decreasing for strictly increasing or decreasing $\beta_{\bm{\tau}
		\mathbf{y}}$.}  A $\delta$ unit increase in $
\beta_{\bm{\tau} \mathbf{y}}$ tilts the $\lambda_{\bm{\tau}}$ 
hyperplane.\footnote{Define 
	$slope(\delta)$ to be the slope of the hyperplane when $\beta$ is increased 
	by $
	\delta$.  The slope of the new hyperplane is $slope(\delta)=(u_2-(\beta+
	\delta)u_2^\perp)^{-1}(\delta u_1^\perp+(u_2-\beta u_2^\perp)slope(0)$}  
The 
direction of the tilt is 
determined by the vectors $\mathbf{u}$ and $\Gamma_\mathbf{u}$ and the sign of 
$\delta$.  The vectors $\mathbf{u}$ 
and 
$\Gamma_\mathbf{u}$ always form a $90^\circ$ and $270^\circ$ angle.  
For positive $\delta$, the hyperplane travels monotonically through the 
triangle formed by $\mathbf{u}$ and $\Gamma_{\mathbf{u}}$.  For negative $\delta$ the hyperplane 
travels monotonically in the opposite direction.

\begin{figure}[H]
	\centering
	\includegraphics[width=.9\linewidth]{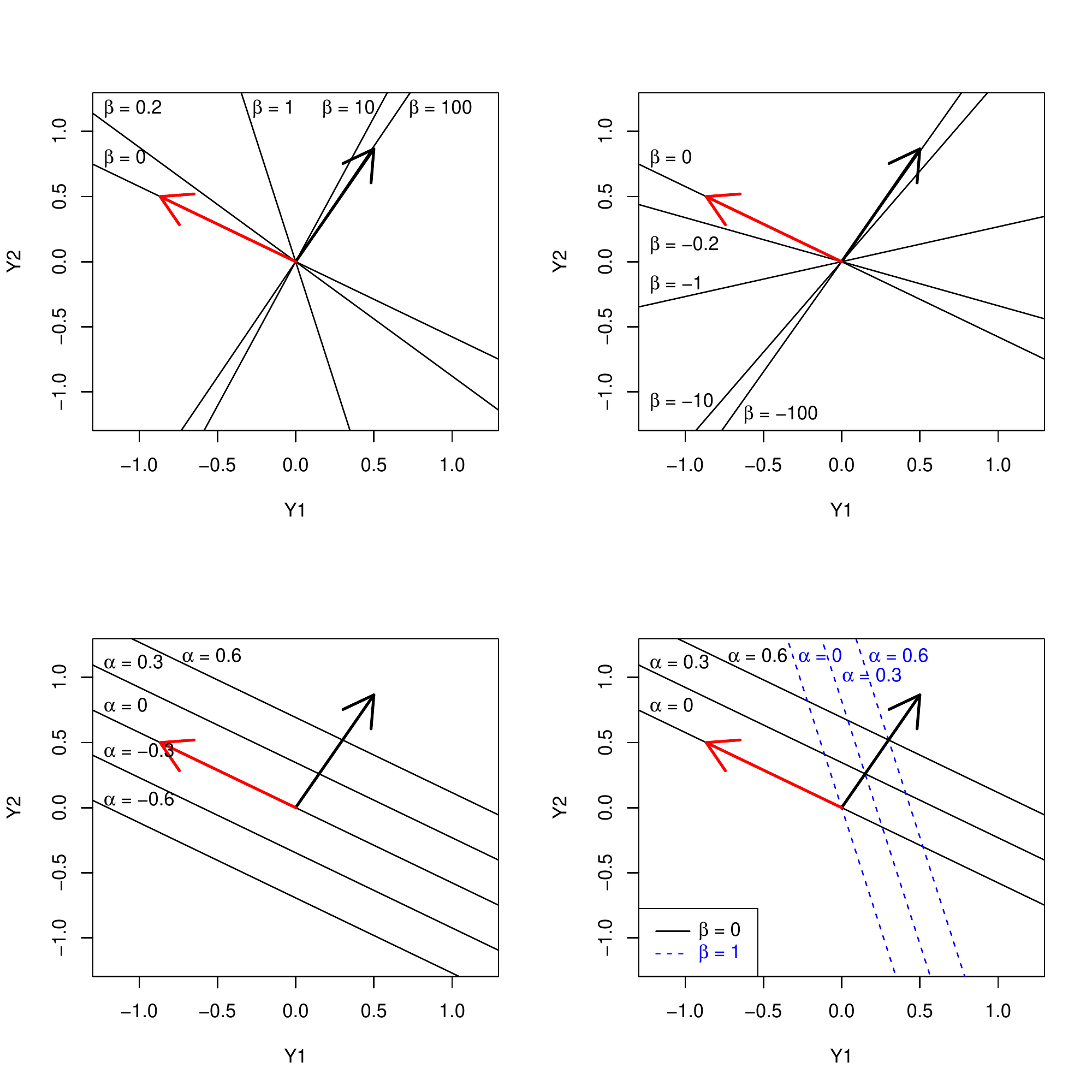}
	\caption[Implied $\lambda_{\bm{\tau}}$ from various hyperparameters (${\bm{\tau}}$ subscript 
	omitted)]{Implied $\lambda_{\bm{\tau}}$ from various hyperparameters (${\bm{\tau}}$ subscript 
		omitted). Top left, positive increasing $\beta$. Top right, negative decreasing $\beta$. 
		Bottom left, 
		different $\alpha$s. Bottom right, different $\alpha$s and $\beta$s.}
	\label{fig:priorpar}
\end{figure}

Figure \ref{fig:priorpar} shows prior $\lambda_{\bm{\tau}}$ implied from the center of the prior with various 
hyperparameters.  For all four plots $k=2$, the directional vector is 
$\mathbf{u}=(\frac{1}{\sqrt{2}},\frac{1}{\sqrt{2}})$ (black arrow) and 
$\Gamma_\mathbf{u}=(-\frac{1}{\sqrt{2}},\frac{1}{\sqrt{2}})$ (red arrow).  The top left 
plot shows $\lambda_{\bm{\tau}}$ for $\beta_{\bm{\tau}}$ increasing from $0$ to $100$ 
for fixed $\alpha_{\bm{\tau}}=0$.  At $\beta_{\bm{\tau}}=0$ the hyperplane is perpendicular 
to $\mathbf{u}$, as $\beta_{\bm{\tau}}$ increases $\lambda_{\bm{\tau}}$ travels counterclockwise until 
it becomes parallel to $\mathbf{u}$. The top right plot shows the $\lambda_{\bm{\tau}}$ 
for $\beta_{\bm{\tau}}$ decreasing from $0$ to $-100$ for fixed $\alpha_{\bm{\tau}}=0$.  At 
$\beta_{\bm{\tau}}=0$ $\lambda_{\bm{\tau}}$ is perpendicular to $\mathbf{u}$, as $\beta_{\bm{\tau}}$
decreases $\lambda_{\bm{\tau}}$ travels clockwise until it becomes parallel to 
$\mathbf{u}$.  The bottom left plot shows $\lambda_{\bm{\tau}}$ with 
$\alpha_{\bm{\tau}}$ ranging from $-0.6$ to $0.6$. The Tukey median can be 
thought of the point $(0,0)$, then $|\alpha_{\bm{\tau}}|$ is the distance of 
the intersection of $\mathbf{u}$ and $\lambda_{\bm{\tau}}$ from the Tukey 
median.\footnote{The Tukey median does not exist in these plots since there is 
no data. If there was data, the point where $\mathbf{u}$ and 
$\Gamma_\mathbf{u}$ intersect would be the Tukey median.}  For positive 
$\alpha_{\bm{\tau}}$ $\lambda_{\bm{\tau}}$ is moving in the direction $\mathbf{u}$ 
and for negative $\alpha_{\bm{\tau}}$ $\lambda_{\bm{\tau}}$ is moving in the 
direction $-\mathbf{u}$.  The bottom right plot shows $\lambda_{\bm{\tau}}$ for 
various $\alpha_{\bm{\tau}}$ and $\beta_{\bm{\tau}}$.  The solid black 
$\lambda_{\bm{\tau}}$ are for $\beta_{\bm{\tau}}=0$ and the dashed blue $\lambda_{\bm{\tau}}$ are 
for $\beta_{\bm{\tau}}=1$ and $\alpha_{\bm{\tau}}$ takes on values 0, 0.3 and 
0.6 for both values of $\beta_{\bm{\tau}}$.  This plot confirms changes in 
$\alpha_{\bm{\tau}}$ result in parallel shifts of $\lambda_{\bm{\tau}}$ while 
$\beta_{\bm{\tau}}$ tilts $\lambda_{\bm{\tau}}$.

If one 
is willing 
to accept joint normality of $(\alpha_{\bm{\tau}},
\beta_{\bm{\tau}})$ then a Gibbs sampler can be used.  The sampler is presented 
in Section \ref{sec:compund}. Further, if data is being collected and analyzed in 
real time, then the prior of the current analysis can be centered over the 
estimates from the previous analysis and the variance of the prior is the 
willingness the researcher is to allow for departures from the previous 
analysis.

\subsection{Conditional Quantiles}\label{sec:cond}
Quantile regression defined so far is unconditional on covariates. Thus the quantiles are averaged over the covariate space. A conditional quantile provides local quantile estimates conditional on covariates. A  Bayesian multiple-output conditional quantile can be defined from \cite{hallin15}.\footnote{This section omits theoretical discussion of multiple-output conditional quantiles. See \cite{hallin15} for a rigorous exploration of the properties (including contours) for multiple-output conditional quantiles.} A possible Bayesian approach is outlined but no proof of consistency is provided. Consistency is checked via simulation in Section \ref{sec:sim}. The $\lambda_{\bm{\tau}}$ hyperplanes are separately estimated for each conditioning value, thus the approach can be computationally expensive. Define $\mathcal{M}_{\mathbf{u}}  = \{(a,\mathbf{d}): a\in\Re, \; \mathbf{d}\in \Re^k \text{ subject to } \mathbf{d}'\mathbf{u}=1\}$ the population parameters are

\[
(\alpha_{\bm{\tau};\mathbf{x}_0},\delta_{\bm{\tau};\mathbf{x}_0}) = \underset{(a,\mathbf{d})\in\mathcal{M}_{\mathbf{u}}}{argmin} \; E[\rho_\tau (\mathbf{d}'\mathbf{Y} - a) | \mathbf{X} = \mathbf{x}_0].
\]

If $\mathbf{d}=\mathbf{u} - \mathbf{b}\Gamma_{\mathbf{u}}'$ the population objective function can be rewritten as $\Psi^c(a,\mathbf{b})=E[\rho_\tau (\mathbf{Y}_{\mathbf{u}} - \mathbf{b}' \mathbf{Y}_{\mathbf{u}}^\perp - a) | \mathbf{X} = \mathbf{x}_0]$. The population parameters are

\begin{equation}\label{eq:popobj2}
(\alpha_{\bm{\tau};\mathbf{x}_0},\beta_{\bm{\tau};\mathbf{x}_0}) = \underset{(a,\mathbf{b})\in\Re^k}{argmin} \; \Psi^c(a,\mathbf{b}).
\end{equation}

The subgradient conditions are

\begin{equation}\label{eq:subgradcond1}
\left.\frac{\partial \Psi^c(a,\mathbf{b})}{\partial 
	a}\right|_{\alpha_{\bm{\tau};\mathbf{x}_0},\beta_{\bm{\tau};\mathbf{x}_0}} = 
Pr(\mathbf{Y_u}-\beta_{{\bm{\tau}};\mathbf{x}_0}'\mathbf{Y_u^\perp}-\alpha_{{\bm{\tau}};\mathbf{x}_0}\leq 0 | \mathbf{X} = \mathbf{x}_0)-\tau=0
\end{equation}
and 
\begin{equation}\label{eq:subgradcond2}
\left.\frac{\partial \Psi^c(a,\mathbf{b})}{\partial \mathbf{b}}\right|
_{\alpha_{\bm{\tau};\mathbf{x}_0},\beta_{\bm{\tau};\mathbf{x}_0}} = 
E[\mathbf{Y_u^\perp}1_{(\mathbf{Y_u}-\beta_{{\bm{\tau}};\mathbf{x}_0}' 
	\mathbf{Y_u^\perp}-\alpha_{{\bm{\tau}};\mathbf{x}_0}\leq 
	0)} | \mathbf{X} = \mathbf{x}_0] - 
\tau 
E[\mathbf{Y_u^\perp} | \mathbf{X} = \mathbf{x}_0] =\mathbf{0}_{k-1}.
\end{equation}

Assuming the distribution of $\mathbf{X}$ is continuous then the conditioning set has probability 0. \cite{hallin15} creates an empirical (frequentist) estimator using weights providing larger weight to observations near $\mathbf{x}_0$. The estimator is

\begin{equation}\label{eq:condest}
\hat{\theta}_{\bm{\tau};\mathbf{x}_0} = \underset{\mathbf{b}}{argmin} \sum_{i=1}^{n} \;K_h(\mathbf{X}_i - \mathbf{x}_0) \rho_\tau (\mathbf{Y}_{\mathbf{u}i} - b\mathcal{X}_{\mathbf{u}i}^r) \,\text{ for } \, r = c,l.
\end{equation}

The function $K_h$ is a kernel function whose corresponding distribution has zero first moment and positive definite second moment (e.g.\ uniform, Epanechnikov or Gaussian). The parameter $h$ determines bandwidth.\footnote{To guarantee consistency of the frequentist estimator $h$ must satisfy $\lim\limits_{n\rightarrow\infty} h = 0$ and $\lim\limits_{n\rightarrow\infty}nh_n^{p-1} = \infty$. \cite{hallin15} provides guidance for choosing $h$.} If $r=c$ then $\mathcal{X}_{\mathbf{u}i}^c = [1,\mathbf{Y}_{\mathbf{u}i}^{\perp \prime} ]'$ and the estimator is called a local constant estimator. If $r = l$ then $\mathcal{X}_{\mathbf{u}i}^l = [1,\mathbf{Y}_{\mathbf{u}i}^{\perp\prime}]'\otimes  [1,(\mathbf{X}_{i}-\mathbf{x}_0)']'$ and the estimator is called a local bilinear estimator. The space that $\mathbf{b}$ is minimized over is the real numbers of dimension equal to the length of $\mathcal{X}_{\mathbf{u}i}^r$. For either value of $r$ the minimization can be expressed as maximization of an asymmetric Laplace likelihood with a known (heteroskedastic) scale parameter.

The Bayesian approach assumes \[\mathbf{Y_u}|\mathcal{X}^r,\theta_{\bm{\tau};\mathbf{x}_0} \sim 
ALD(\mathbf{\theta}_{\bm{\tau}}'\mathcal{X}^r,K_h(\mathbf{X} - \mathbf{x}_0)^{-1},\tau)\] whose density is
\begin{align*}
f_{\bm{\tau}}(\mathbf{Y}|\mathcal{X}^r,\theta_{\bm{\tau};\mathbf{x}_0}, 
K_h(\mathbf{X} - \mathbf{x}_0)^{-1}) &= \tau(1-\tau)K_h(\mathbf{X} - \mathbf{x}_0)
exp(-K_h(\mathbf{X} - \mathbf{x}_0)\rho_\tau(\mathbf{Y} - 
\theta_{\bm{\tau};\mathbf{x}_0}'\mathcal{X}^r))\\
&\propto exp(-K_h(\mathbf{X} - \mathbf{x}_0)\rho_\tau(\mathbf{Y} - 
\theta_{\bm{\tau};\mathbf{x}_0}'\mathcal{X}^r))
\end{align*}

If the researcher assumes the prior distribution for $\theta_{\bm{\tau}}$ is normal then the parameters can be estimated with a Gibbs sampler, which is presented in Section \ref{sec:compcond}.  
\section{MCMC simulation}
In this section a Gibbs sampler to obtain draws from the posterior distribution is presented for unconditional regression quantiles (Section \ref{sec:compund}) and conditional regression quantiles (Section \ref{sec:compcond}). 
\subsection{Unconditional Quantiles}\label{sec:compund}
Assuming joint normality of the prior distribution for the 
parameters estimation 
can be 
performed using draws from the posterior distribution obtained from a Gibbs sampler developed in 
\cite{kozumi11}.  The approach assumes $\mathbf{Y}_{\mathbf{u}i} = 
\mathbf{\beta_{{\bm{\tau}} y}'}\mathbf{Y}_{\mathbf{u}i}^\perp + 
\mathbf{\beta_{{\bm{\tau}} 
		x}' 
}\mathbf{X}_i + \alpha_{\bm{\tau}} + 
\epsilon_i$ where $\epsilon_i\overset{iid}{\sim} ALD(0,1)$. The random 
component, $\epsilon_i$, 
can 
be written as a mixture of a normal and an exponential, $\epsilon_i = \eta W_i + 
\gamma \sqrt{W_i} U_i$ where $\eta = \frac{1-2\tau}{\tau(1-\tau)}$, $\gamma = 
\sqrt{\frac{2}{\tau(1-\tau)}}$, $W_i \overset{iid}{\sim}  exp(1)$ and $U_i 
\overset{iid}{\sim}  N(0,1)$ are mutually 
independent \citep{kotz01}. This mixture representation allows for efficient simulation using data augmentation \citep{tannerwong87}.  It follows 
$\mathbf{Y}_{\mathbf{u}i}|\mathbf{Y}_{\mathbf{u}i}^\perp, 
\mathbf{X}_i,W_i,\beta_{\bm{\tau}},\alpha_{\bm{\tau}}$
is normally 
distributed. 
Further, if the prior is $\theta_{\bm{\tau}} = (\alpha_{\bm{\tau}},\beta_{\bm{\tau}}) 
\sim N(\mu_{\theta_{\bm{\tau}}},\Sigma_{\theta_{\bm{\tau}}})$ then $\theta_{\bm{\tau}}|\mathbf{Y}_{\mathbf{u}},\mathbf{Y}_{\mathbf{u}}^\perp,\mathbf{X}, W$ is normally distributed.  
Thus the $m+1$th 
MCMC draw is given by the following algorithm 
\begin{enumerate}
	\item Draw $W_i^{(m+1)}\sim 
	W|\mathbf{Y}_{\mathbf{u}i},\mathbf{Y}_{\mathbf{u}i}^\perp, 
	\mathbf{X}_i,\theta_{\bm{\tau}}^{(m)} \sim 
	GIG(\frac{1}{2},\hat{\delta}_i,\hat{\phi}_i) $  for $i \in \{1,...,n\}$
	
	\item Draw $\theta^{(m+1)}_{\bm{\tau}}\sim 
	\theta_{\bm{\tau}}|\vec{\mathbf{Y}}_{\mathbf{u}}, 
	\vec{\mathbf{Y}}_{\mathbf{u}}^\perp,
	\vec{\mathbf{X}}, \vec{W}^{(m+1)} \sim 
	N(\hat{\theta}_{\bm{\tau}},\hat{B}_{\bm{\tau}})$.
\end{enumerate}
where 
\begin{align*}
	\hat{\delta}_i &=\frac{1}{\gamma^2} (\mathbf{Y}_{\mathbf{u}i}  - 
	\mathbf{\beta'}^{(m)}_{{\bm{\tau}} \mathbf{y}} 
	\mathbf{Y}_{\mathbf{u}i}^\perp- 
	\mathbf{\beta'}^{(m)}_{{\bm{\tau}} \mathbf{x}} \mathbf{X}_i - 
	\alpha_{\bm{\tau}}^{(m)})^2\\
	\hat{\phi}_i &= 2 + \frac{\eta^2}{\gamma^2}\\
	\hat{B}^{-1}_{\bm{\tau}} &=  B_{{\bm{\tau}} 
		0}^{-1}+
	\sum_{i=1}^{n}\frac{[\mathbf{Y}_{\mathbf{u}i}^{\perp\prime},\mathbf{X}_i']
		[\mathbf{Y}_{\mathbf{u}i}^{\perp\prime},\mathbf{X}_i']'}{\gamma^2W_i^{(m+1)}}\\ 
	\hat{\beta}_{\bm{\tau}} &= \hat{B}_{\bm{\tau}} 
	\left(B_{{\bm{\tau}} 0}^{-1}\beta_{{\bm{\tau}} 0} + 
	\sum_{i=1}^{n}\frac{[\mathbf{Y}_{\mathbf{u}i}^{\perp\prime},\mathbf{X}_i']'
		(\mathbf{Y}_{\mathbf{u}i} - 
		\eta 
		W_i^{(m+1)})}{\gamma^2W_i^{(m+1)}} \right) 
\end{align*} 
and $GIG(\nu,a,b)$ is the 
Generalized 
Inverse Gamma distribution whose 
density is 
\[f(x|\nu,a,b) = 
\frac{(b/a)^\nu}{2K_\nu(ab)}x^{\nu-1}exp(-\frac{1}{2}(a^2x^{-1}+b^2x)), x>0, 
-\infty<\nu<\infty, a,b\geq 0\] and $K_\nu(\cdot)$ is the modified Bessel 
function of the third kind.\footnote{An efficient sampler of the Generalized Inverse 
	Gamma distribution was 
	developed in \cite{dagpunar89}. Implementations of the Gibbs 
sampler with a free $\sigma$ parameter for R are provided 
in the package `bayesQR' and `AdjBQR'  \citep{BayesQR, AdjBQR}. However, the results presented in this paper use a fixed $\sigma=1$ parameter.} To speed convergence the MCMC sequence can be initialized with the frequentist estimate.\footnote{The R package `quantreg' can provide such estimates \citep{quantreg}.} The Gibbs sampler is geometrically 
ergodic and thus the MCMC standard error is finite and the MCMC central limit 
theorem applies \citep{khare12}. This guarantees that after a long 
enough burn-in draws 
from this sampler are equivalent to random draws from the 
posterior.

Numerous other algorithms can be used if the prior is non-normal. \cite{kozumi11} provides a Gibbs sampler for when the prior is double exponential. \cite{li10} and \cite{alhamzawi12} provide algorithms for when regularization is desired. General purpose sampling schemes can also be used such as the Metropolis-Hastings, slice sampling or other algorithms \citep{hastings70,neal03,liu08}. 

The Metropolis-Hastings algorithm can be implemented as follows. Define the likelihood to be $L_{\bm{\tau}}(\theta_{\bm{\tau}}) = \prod_{i=1}^{n} f_{\bm{\tau}}(\mathbf{Y}_i|\mathbf{X}_i,\alpha_{\bm{\tau}},\beta_{\bm{\tau}}, 
1)$. Let the prior for $\theta_{{\bm{\tau}}}$ have the density $\pi_{\bm{\tau}}(\theta_{\bm{\tau}})$. Define $g(\theta^\dagger|\theta)$ to be a proposal density. The $m+1$th MCMC draw is given by the following algorithm

\begin{enumerate}
	\item Draw $\theta_{\bm{\tau}}^\dagger$ from $g(\theta^\dagger_{\bm{\tau}}|\theta^{(m)}_{\bm{\tau}})$
	\item Compute $A(\theta_{\bm{\tau}}^\dagger,\theta_{{\bm{\tau}}}^{(m)}) = min\left(1, \frac{L(\theta_{\bm{\tau}}^\dagger)\pi_{\bm{\tau}}(\theta_{\bm{\tau}}^\dagger)g(\theta^{(m)}_{\bm{\tau}}|\theta_{\bm{\tau}}^\dagger)}{L(\theta_{\bm{\tau}}^{(m)})\pi_{\bm{\tau}}(\theta_{\bm{\tau}}^{(m)})g(\theta_{\bm{\tau}}^\dagger|\theta_{\bm{\tau}}^{(m)})}\right)$
	\item Draw $u$ from $Uniform(0,1)$
	\item If $u\leq A(\theta_{\bm{\tau}}^\dagger,\theta_{{\bm{\tau}}}^{(m)})$ set $\theta_{\bm{\tau}}^{(m+1)} = \theta_{\bm{\tau}}^\dagger$, else set $\theta_{\bm{\tau}}^{(m+1)} = \theta_{\bm{\tau}}^{(m)}$
\end{enumerate}

Estimation of $\tau$-quantile contours (see Appendix \ref{app:quantreview}) requires the simultaneous estimation of several different $\lambda_{\bm{\tau}}$. Simultaneous estimation of multiple $\lambda_{\bm{\tau}_m}$ ($m\in\{1,2,...,M\}$) can be performed by creating an aggregate likelihood. The aggregate likelihood is the product of the likelihoods for each $m$, $L_{\bm{\tau}_1,\bm{\tau}_2,...,\bm{\tau}_M}(\alpha_{\bm{\tau}_1},\beta_{\bm{\tau}_1},\alpha_{\bm{\tau}_2},\beta_{\bm{\tau}_2},...,\alpha_{\bm{\tau}_M},\beta_{\bm{\tau}_M})=\prod_{m=1}^{M} L_{\bm{\tau}_m}(\alpha_{\bm{\tau}_m},\beta_{\bm{\tau}_m})$. The prior is then defined for the vector  $(\alpha_{\bm{\tau}_1},\beta_{\bm{\tau}_1},\alpha_{\bm{\tau}_2},\beta_{\bm{\tau}_2},...,\alpha_{\bm{\tau}_M},\beta_{\bm{\tau}_M})$. The Gibbs algorithm can easily be modified for fixed $\tau$ to accommodate simultaneous estimation. To estimate the parameters from various $\tau$, the values of $\eta$ and $\gamma$ need to be adjusted appropriately.

\subsection{Conditional Quantiles}\label{sec:compcond}
Sampling from the conditional quantile posterior is similar to that of unconditional quantiles except the likelihood is heteroskedastic with known heteroskedasticity. The approach assumes $\mathbf{Y}_{\mathbf{u}i} =  
\mathbf{\theta_{{\bm{\tau}}}' 
}\mathcal{X}_i^r + K_h(\mathcal{X}_i^r - \mathbf{x}_0)^{-1} \epsilon_i$ where $\epsilon_i\overset{iid}{\sim} ALD(0,1)$. The random 
component, $K_h(\mathcal{X}_i^r - \mathbf{x}_0)^{-1} \epsilon_i$, 
can 
be written as a mixture of a normal and an exponential, $K_h(\mathcal{X}_i^r - \mathbf{x}_0)^{-1} \epsilon_i = \eta V_i + 
\gamma \sqrt{K_h(\mathcal{X}_i^r - \mathbf{x}_0)^{-1}V_i} U_i$ where $V_i =K_h(\mathcal{X}_i^r - \mathbf{x}_0)^{-1}W_i$.
If the prior is $\theta_{\bm{\tau}} = (\alpha_{\bm{\tau}},\beta_{\bm{\tau}}) 
\sim N(\mu_{\theta_{\bm{\tau}}},\Sigma_{\theta_{\bm{\tau}}})$ then a Gibbs 
sampler 
can be used. The $m+1$th 
MCMC draw is given by the following algorithm
\begin{enumerate}
	\item Draw $V_i^{(m+1)}\sim 
	W|\mathbf{Y}_{\mathbf{u}i},\mathcal{X}_i^r,\theta_{\bm{\tau}}^{(m)} \sim 
	GIG(\frac{1}{2},\hat{\delta}_i,\hat{\phi}_i) $  for $i \in \{1,...,n\}$
	
	\item Draw $\theta^{(m+1)}_{\bm{\tau}}\sim 
	\theta_{\bm{\tau}}|\vec{\mathbf{Y}}_{\mathbf{u}}, 
	\vec{\mathbf{Y}}_{\mathbf{u}}^\perp,\vec{\mathcal{X}^r}, \vec{W}^{(m+1)} \sim 
	N(\hat{\theta}_{\bm{\tau}},\hat{B}_{\bm{\tau}})$.
\end{enumerate}
where 
\begin{align*}
\hat{\delta}_i &=\frac{K_h(\mathcal{X}_i^r - \mathbf{x}_0)}{\gamma^2} (\mathbf{Y}_{\mathbf{u}i}  - 
\mathbf{\theta'}^{(m)}_{\bm{\tau}} \mathcal{X}_i^r)^2\\
\hat{\phi}_i &= 2K_h(\mathcal{X}_i^r - \mathbf{x}_0) + \frac{\eta^2  K_h(\mathcal{X}_i^r - \mathbf{x}_0)}{\gamma^2}\\
\hat{B}^{-1}_{\bm{\tau}} &=  B_{{\bm{\tau}} 
	0}^{-1}+
\sum_{i=1}^{n}\frac{K_h(\mathcal{X}_i^r - \mathbf{x}_0)\mathcal{X}_i^r\mathcal{X}_i^{r\prime}}{\gamma^2W_i^{(m+1)}}\\ 
\hat{\beta}_{\bm{\tau}} &= \hat{B}_{\bm{\tau}} 
\left(B_{{\bm{\tau}} 0}^{-1}\beta_{{\bm{\tau}} 0} + 
\sum_{i=1}^{n}\frac{K_h(\mathcal{X}_i^r - \mathbf{x}_0)\mathcal{X}_i^r(\mathbf{Y}_{\mathbf{u}i} - 
	\eta 
	W_i^{(m+1)})}{\gamma^2W_i^{(m+1)}} \right).
\end{align*} 

The MCMC sequence can be initialized with the frequentist estimate.\footnote{The R package `quantreg' can provide such estimates using the weights option.} A Metropolis-Hastings algorithm similar to the unconditional model can be used where $L(\theta_{\bm{\tau}}) = \prod_{i=1}^{n} f_{\bm{\tau}}(\mathbf{Y}_i|\mathcal{X}_i^r,\theta_{\bm{\tau}}, 
K_h(\mathcal{X}_i^r - \mathbf{x}_0)^{-1})$. Simultaneous estimation of many $\lambda_{\bm{\tau}_m}$ is similar to the unconditional model.

\section{Simulation}\label{sec:sim}
This section verifies pointwise consistency of the unconditional and conditional models. Asymptotic coverage probability of the unconditional location model using the results from Section \ref{sec:uncondconf} is also verified.  Pointwise consistency is 
verified by checking convergence to solutions of the subgradient conditions (population parameters).   Four DGPs are considered.
\begin{enumerate}
	\item $\mathbf{Y} \sim Uniform \; Square$
	\item $\mathbf{Y} \sim Uniform \; Triangle$
	\item $\mathbf{Y} \sim N(\mu,\Sigma)$, where $\mu = \mathbf{0}_2$ and $
	\Sigma 
	=  \begin{bmatrix}
	1 & 1.5  \\
	1.5 & 9 
	\end{bmatrix}$
	\item $\mathbf{Y} = \mathbf{Z} + \begin{bmatrix}
	0   \\
	X 
	\end{bmatrix}$ where $\begin{bmatrix}
	X  \\
	\mathbf{Z}  
	\end{bmatrix} \sim 
	N\left(\begin{bmatrix}
	\mu_{X}   \\
	\mu_{\mathbf{Z}}  
	\end{bmatrix} ,
	\begin{bmatrix}
	\Sigma_{XX} & \Sigma_{X\mathbf{Z}}   \\
	\Sigma_{X\mathbf{Z}}' & \Sigma_{\mathbf{Z}\mathbf{Z}}  
	\end{bmatrix} \right)$, \\
	
	$\Sigma_{XX} =  4$,  $\Sigma_{X\mathbf{Z}} =  \begin{bmatrix}
	0 \\ 
	2  
	\end{bmatrix}$,  $\Sigma_{\mathbf{Z}\mathbf{Z}} =  \begin{bmatrix}
	1 & 1.5  \\
	1.5 & 9 
	\end{bmatrix}$, $\mu_{X} = 0$ 
	and $\mu_{\mathbf{Z}} = \mathbf{0}_2$
\end{enumerate}

The first DGP has corners at $(-\frac{1}{2},-\frac{1}{2}),(-\frac{1}{2},\frac{1}{2}),(\frac{1}{2},-\frac{1}{2}),(\frac{1}{2},\frac{1}{2})$.  The second DGP has 
corners at $(-\frac{1}{2},-\frac{1}{2\sqrt{3}}),(\frac{1}{2},-\frac{1}{2\sqrt{3}}),(0,\frac{1}{\sqrt{3}})$.  DGPs 1,2 and 3 are location 
models and 4 is a 
regression model. DGPs 1 and 2 conform to all the 
assumptions on the data generating process. DGPs 3 and 4 are cases when 
Assumption \ref{ass:finite} is violated.  In DGP 4, the unconditional 
distribution of $
\mathbf{Y}$ is  $\mathbf{Y}\sim N
\left(\begin{bmatrix}
0  \\
0 
\end{bmatrix} ,
\begin{bmatrix}
1 & 1.5  \\
1.5 & 17 
\end{bmatrix} \right)$. 

Two directions are considered, $\mathbf{u} =(\frac{1}{\sqrt{2}},\frac{1}
{\sqrt{2}})
$ and $\mathbf{u} =(0,1)$. The orthogonal directions are $\mathbf{\Gamma_{u}}=(1,0)$ and $\mathbf{\Gamma_{u}}=(1/\sqrt{2},-1/\sqrt{2})$. The first vector is a $45^o$ line between $Y_2$ 
and $Y_1$ in the positive quadrant and the second vector points vertically in the 
$Y_2$ direction. The depth is $\tau = 0.2$.  The sample sample sizes are 
$n \in \{10^2, 10^3, 10^4\}$.  The prior 
is 
$\theta_{\bm{\tau}}\sim 
N(\mu_{\theta_{\bm{\tau}}},\Sigma_{\theta_{\bm{\tau}}})$ 
where $\mu_{\theta_
	{\bm{\tau}}}=\mathbf{0}_{k+p-1}$ and $ \Sigma_{\theta_{\bm{\tau}}}=1000 
\mathbf{I}_{k+p-1}$.  The number of Monte Carlo simulations is 100 and for each Monte Carlo simulation 1,000 MCMC draws are used. The initial values are set to the frequentist estimate.

\subsection{Unconditional model pointwise consistency}
Consistency for the unconditional model is verified by checking
convergence to the solutions of the subgradient conditions (population parameters). Convergence of subgradient conditions (\ref{eq:subgrad1}) and (\ref{eq:subgrad2}) is verified in Appendix \ref{app:simsubgrad}.

The population parameters for the four DGPs are presented in Table \ref{tab:simres4}.\footnote{The population parameters are found by numerically minimizing the objective function. The expectation in the objective function is calculated with a Monte Carlo simulation sample of $10^6$.} The RMSE of the parameter estimates are presented in Tables \ref{tab:simres4a}, \ref{tab:simres4b} and \ref{tab:simres4c}. The results show the Bayesian estimator is converging to the population parameters.\footnote{Frequentist bias was also investigated and the bias showed convergence towards zero as sample size increased (no table presented).}

\begin{table}[htb]
	\centering
	\begin{tabular}{rr|cccc}
		& & \multicolumn{4}{c}{Data Generating Process}  \\ \hline
		$\mathbf{u}$ & $\theta$ & 1 & 2 & 3 & 4 \\ 
		\hline
	& $\alpha_{\bm{\tau}}$ & -0.26 & -0.20 & -1.17 & -1.16  \\ 
	$(1/\sqrt{2},1/\sqrt{2})$	&$\beta_{\bm{\tau}\mathbf{y}}$ & 0.00 & 0.44 & -1.14 & -1.17  \\ 
		&$\beta_{\bm{\tau}\mathbf{x}}$ &   &  &  &   -0.18  \\ 
		\hline
			& $\alpha_{\bm{\tau}}$ & -0.30 & -0.20 & -2.19 & -2.02  \\ 
		$(0,1)$	&$\beta_{\bm{\tau}\mathbf{y}}$ & 0.00 & 0.00 & 1.50 & 1.50 \\ 
		&$\beta_{\bm{\tau}\mathbf{x}}$ &   &  &  &   1.50  \\ 
		\hline
	\end{tabular}
	\caption{Unconditional model population parameters} 
	\label{tab:simres4}
\end{table}

\begin{table}[htb]
	\centering
	\begin{tabular}{rr|cccc}
				& & \multicolumn{4}{c}{Data Generating Process}\\ \hline
	$\theta$	& $n$ & 1 & 2 & 3 & 4 \\ 
		\hline
		& $10^2$ & 5.70e-02 & 4.41e-02 & 2.20e-01 & 1.83e-01 \\ 
		$\alpha_{\bm{\tau}}$ & $10^3$ & 1.49e-02 & 1.19e-02 & 6.80e-02 & 5.39e-02 \\ 
		& $10^4$ & 4.30e-03 & 3.66e-03 & 1.97e-02 & 1.85e-02 \\ 
		\hline
		& $10^2$ & 9.63e-02 & 2.79e-01 & 9.61e-02 & 1.08e-01 \\ 
		$\beta_{\bm{\tau}\mathbf{y}}$ & $10^3$ & 3.63e-02 & 6.58e-02 & 3.15e-02 & 3.15e-02 \\ 
		& $10^4$ & 1.19e-02 & 1.78e-02 & 1.07e-02 & 1.06e-02 \\ 
	\end{tabular}
	\caption{Unconditional model RMSE of parameter estimates $(\mathbf{u} = (1/\sqrt{2},1/\sqrt{2}))$} 
	\label{tab:simres4a}
\end{table}

\begin{table}[htb]
	\centering
	\begin{tabular}{rr|cccc}
		& & \multicolumn{4}{c}{Data Generating Process}\\ \hline
$\theta$  & $n$ & 1 & 2 & 3 & 4 \\ 
		\hline
		& $10^2$ & 3.57e-02 & 2.23e-02 & 3.47e-01 & 2.94e-01 \\ 
		$\alpha_{\bm{\tau}}$ & $10^3$ & 1.25e-02 & 5.59e-03 & 1.15e-01 & 1.13e-01 \\ 
		& $10^4$ & 4.23e-03 & 2.10e-03 & 3.27e-02 & 3.36e-02 \\ 
		\hline
		& $10^2$ & 1.16e-01 & 7.03e-02 & 3.94e-01 & 2.78e-01 \\ 
		$\beta_{\bm{\tau}\mathbf{y}}$ & $10^3$ & 3.96e-02 & 1.61e-02 & 1.18e-01 & 1.17e-01 \\ 
		& $10^4$ & 1.37e-02 & 6.73e-03 & 4.20e-02 & 3.13e-02 \\ 
	\end{tabular}
	\caption{Unconditional model RMSE of parameter estimates $(\mathbf{u} = (0,1))$} 
	\label{tab:simres4b}
\end{table}

\begin{table}[htb]
	\centering
	\begin{tabular}{rr|c}
		& & \multicolumn{1}{c}{Data Generating Process}\\ \hline
		$\mathbf{u}$  & $n$  & 4 \\ 
		\hline
		& $10^2$ & 1.58e-01 \\ 
		$(1/\sqrt{2},1/\sqrt{2})$ & $10^3$ & 4.86e-02 \\ 
		& $10^4$ & 1.48e-02  \\ 
		\hline
		& $10^2$ & 1.49e-01  \\ 
		$(0,1)$ & $10^3$ & 5.82e-02  \\ 
		& $10^4$ & 1.83e-02 \\ 
	\end{tabular}
	\caption{Unconditional model RMSE of $\beta_{\bm{\tau} \mathbf{x}}$ estimates} 
	\label{tab:simres4c}
\end{table}

\subsection{Unconditional location model coverage probability}

Coverage probabilities for the unconditional location model using the procedure in Section \ref{sec:uncondconf} are presented in Table \ref{tab:simcp}.  A correct coverage probability is 0.95. The number of Monte Carlo simulations is 300. The results show that the coverage probability tends to improve with sample size but has a slight under-coverage with sample size of $10^5$. A naive interval constructed from the $0.025$ and $0.975$ quantiles of the MCMC draws produces coverage probabilities ranging from $0.980$ to $1.000$, with a majority at $1$ (no table presented). This is clearly a strong over-coverage and thus the proposed procedure is preferred.

\begin{table}[htb]
	\centering
	\begin{tabular}{rrccc|ccc}
		& & \multicolumn{6}{c}{Data Generating Process}\\ 
		&\multicolumn{1}{c|}{}	& 1 & 2 & 3 & 1 & 2 & 3  \\ 
		\hline
		$\theta$	& \multicolumn{1}{c|}{$n$}  & \multicolumn{3}{c|}{$\mathbf{u} = (1/\sqrt{2},1/\sqrt{2})$} & \multicolumn{3}{c}{$\mathbf{u} = (0,1)$}\\ \hline
		& \multicolumn{1}{c|}{$10^2$} & .960 & .950 & .967 & 1.00  & 1.00 & .937 \\ 
		$\alpha_{\bm{\tau}}$ & \multicolumn{1}{c|}{$10^3$} & .963 & .950 & .940 & .960 & .963 & .953 \\ 
		& \multicolumn{1}{c|}{$10^4$} &  .910 & .947 & .967 & .930 & .963 & .957  \\ 
		\hline
		& \multicolumn{1}{c|}{$10^2$} & .810 & 1.00 & .953 & 1.00 & .987 & .937  \\ 
		$\beta_{\bm{\tau}\mathbf{y}}$ & \multicolumn{1}{c|}{$10^3$} & .907 & .967 & .960 & .978 & .970 & .933  \\ 
		& \multicolumn{1}{c|}{$10^4$} &  .933 & .953 & .937 & .927 & .960 & .950 \\ 
	\end{tabular}
	\caption{Unconditional location model coverage probabilities} 
	\label{tab:simcp}
\end{table}

\subsection{Conditional model pointwise consistency}

Convergence of the local constant conditional model is verified by checking convergence of the Bayesian estimator to the parameters minimizing the population objective function (\ref{eq:popobj2}). The local constant specification is presented because that is the specification used in the application. The conditional distribution of DGP 4 is $Y|X=x_0 \sim N\left(\begin{bmatrix}
0\\
x_0/2
\end{bmatrix},\begin{bmatrix}
1 & 1.5 \\
1.5 & 8
\end{bmatrix}\right)$. Thus the population objective function can be calculated using Monte Carlo integration or with quadrature methods. The population parameters with $x_0=1$ are $(\alpha_{\bm{\tau};1},\beta_{\bm{\tau};1}) = (-1.23, 1.167)$ for $\mathbf{u} = (1/\sqrt{2},1/\sqrt{2})$ and $(\alpha_{\bm{\tau};1},\beta_{\bm{\tau};1}) = (-1.53, 1.49)$ for $\mathbf{u} = (0,1)$, which are found by numerically minimizing the Monte Carlo estimated population objective function.\footnote{The relative error difference between the Monte Carlo and quadrature methods was at most $5\cdot 10^{-3}$. The Monte Carlo approach used a simulation sample of $10^6$.} The weight function is set to 
$K_h(\mathbf{X}_i - \mathbf{x}_0) = \frac{1}{\sqrt{2\pi h^2}}exp\left(-\frac{1}{2h^2}(\mathbf{X}_i - \mathbf{x}_0)^2\right)$ where $h = \sqrt{9\hat{\sigma}^2_{\mathbf{X}}n^{-1/5}}$. 

Table \ref{tab:simres4cond} shows the RMSE of the Bayesian estimator for the conditional local constant model. The estimator appears to be converging to the population parameter.

\begin{table}[htb]
	\centering
	\begin{tabular}{rr|cc}
		& & \multicolumn{2}{c}{$\mathbf{u}$}\\ \hline
		$\theta$	& $n$ & $(1/\sqrt{2},1/\sqrt{2})$ & $(0,1)$ \\ 
		\hline
		& $10^2$ & 2.90e-01 &   6.78e-01  \\ 
		$\alpha_{\bm{\tau},1}$ & $10^3$ & 7.10e-02 & 3.04e-01 \\ 
		& $10^4$ & 2.86e-02 & 1.33e-01 \\ 
		\hline
		& $10^2$ & 8.79e-02 & 3.22e-01  \\ 
		$\beta_{\bm{\tau},1}$ & $10^3$ & 3.35e-02 & 1.29e-01 \\ 
		& $10^4$ & 1.53e-02 & 5.04e-02 \\ 
	\end{tabular}
	\caption{Local constant conditional model RMSE of parameter estimates} 
	\label{tab:simres4cond}
\end{table}

\section{Application}
The unconditional and conditional models are applied to educational data collected from  the Project STAR public 
access 
database.  Project STAR was an experiment conducted on 11,600 students in 300 
classrooms from 1985-1989 with interest of determining if reduced classroom 
size 
improved academic performance.\footnote{The data is publicly available at http://fmwww.bc.edu/ec-p/data/stockwatson.}  Students and teachers were randomly selected in 
kindergarten 
to be in  small (13-17 students) or large (22-26 students)            
classrooms.\footnote{This analysis omits large classrooms that had a teaching assistant.}  The students 
then stayed in their assigned classroom size throughout the fourth grade.  The 
outcome of the treatment was scores of mathematics and reading tests given each year.  This dataset has been analyzed many times before, 
see 
\cite{finn90,folger89,krueger99,mosteller95,word90}.\footnote{ 
	\cite{folger89} and \cite{finn90}  were the first two published studies. 
	\cite{word90} was  the official report from the Tennessee State Department 
	of 
	Education. 
	\cite{mosteller95} 
	provided a review of the study and \cite{krueger99} performed a rigorous 
	econometric analysis focusing 
	on 
	validity.} The studies performed analyses on either single-output test score 
measures or on a functional of mathematics and reading scores. 
Single-output analysis ignores important 
information about the relationship the mathematics and reading test scores 
might 
have with each other. Analysis on the average of scores better 
accommodates 
joint effects but obscures the source of effected subpopulations.  Multiple-output 
quantile regression provides information on the joint relationship 
between 
scores for the entire multivariate distribution (or several specified 
quantile subpopulations). 

A student's outcome was measured using a standardized test called the Stanford 
Achievement Test (SAT) for mathematics and reading.\footnote{The 
	test scores 
	have a finite discrete support.  Computationally, this does 
	not effect the 
	Bayesian estimates, however prevents asymptotically unique estimators.  Thus each 
	score is perturbed with a uniform(0,1) random variable.} 
Section \ref{star:fixtau} inspects the $\tau$-quantile contours on the subset 
first grade 
students (sample size of $n=4,247$, after removal of missing 
data). The results for 
other grades were similar.\footnote{This application explains the concepts of Bayesian      
	multiple-output quantile regression and does not provide rigorous causal 
	econometric 
	inferences.  In the later case a thorough discussion of missing data would 
	be 
	necessary.  For the same reason first grade scores were chosen.  The first 
	grade 
	subset was best suited for pedagogy. } The treatment effect of classroom size is determined by inspecting the location $\tau$-quantile contours of the unconditional model for small and large classrooms. The treatment effect for teacher experience is determined by inspecting the $\tau$-quantile contours from the conditional model (conditional on teacher experience) pooling small and large classrooms. Section \ref{star:sensitive} shows a sensitivity analysis of the unconditional model by inspecting the posterior $\tau$-quantile contours with different prior specifications. Appendix \ref{app:star} presents fixed-$\mathbf{u}$ analysis and an additional sensitivity analysis.

Define the vector $\mathbf{u}=(u_1,u_2)$, where $u_1$ is the mathematics score 
dimension and $u_2$ is the reading score dimension. The $\mathbf{u}$ directions 
have an interpretation of relating how much relative importance the researcher 
wants to give 
to mathematics or reading. Define $\mathbf{u}^\perp = (u_1^\perp,u_2^\perp)$, where $
\mathbf{u}^\perp$ is orthogonal to $\mathbf{u}$.  The components $(u_1^
\perp,u_2^\perp)$ have no meaningful interpretation.  Define $mathematics_i$ to be the 
mathematics score of student $i$ and $reading_i$ to be the reading score of student 
$i$. 

\subsection{$\tau$-quantile (regression) contours}\label{star:fixtau}

The unconditional model is 
\begin{align}\label{eq:appunc}
\begin{split}
\mathbf{Y}_{\mathbf{u}i} &= mathematics_i u_1 + reading_i u_2  \\
\mathbf{Y}_{\mathbf{u}i}^\perp &=  mathematics_i u_1^\perp + reading_iu_2^\perp\\
\mathbf{Y}_{\mathbf{u}i} &= \alpha_{\bm{\tau}} + \beta_{\bm{\tau}} 
\mathbf{Y}
_{\mathbf{u}i}^
\perp + 
\epsilon_i\\
\epsilon_i&\overset{iid}{\sim} ALD(0,1,\tau)\\
\theta_{\bm{\tau}}=(\alpha_{\bm{\tau}},\beta_{\bm{\tau}}) & \sim 
N(\mu_{\theta_{\bm{\tau}}},
\Sigma_{\theta_{\bm{\tau}}}).
\end{split}
\end{align}
Unless otherwise noted, $\mu_{\theta_{\bm{\tau}}} = \mathbf{0}_2$ and $
\Sigma_{\theta_{\bm{\tau}}} = 1000\mathbf{I}_{2}$, meaning ex-ante knowledge is 
a weak belief that the joint distribution of mathematics and reading has spherical 
Tukey depth contours. The number of MCMC draws is 3,000 with a burn in of 1,000. The Gibbs algorithm is initialized at the frequentist estimate.

Figure \ref{fig:test3smlg1} shows the $\tau$-quantile contours for $
\tau=0.05$, $0.20$ and $0.40$. A $\tau$-quantile contour is defined as the boundary of \ref{eq:quantregion} in Appendix \ref{app:quantreview}. The data is stratified into smaller 
classrooms (blue) and larger classrooms (black) and separate models are estimated for each. The unconditional regression model was used but the effective results are conditional since separate models are estimated by classroom size. The innermost contour is the $\tau = 0.40$ 
region, the middle contour is the $\tau = 0.20$ region and the outermost 
contour 
is the $\tau = 0.05$ region. Contour regions for larger $\tau$ will always be 
contained in regions of smaller $\tau$ (if no numerical error and priors are not contradictory). All the points that lie 
on the contour have an estimated Bayesian Tukey depth of $\tau$. The contours for 
larger $\tau$ capture the effects for the more extreme students (e.g.\ students 
who perform exceptionally well on mathematics and reading or exceptionally poorly on 
mathematics but well on reading). The contours for smaller $\tau$ capture the effects 
for the more central or more `median' student (e.g.\ students who do not 
stand out 
from their peers). It can be seen 
that all the contours shift up and to the right for the smaller classroom.  
This 
shows the centrality of mathematics and reading scores improves for
smaller classrooms compared to larger classrooms.  This also 
all quantile subpopulations of scores improve for students in smaller 
classrooms.\footnote{To claim all quantile subpopulations of scores improve would require estimating the $\tau$-quantile regions for all $\tau$.} 
\begin{figure}[H]
	\centering
	\includegraphics[width=.8\linewidth]{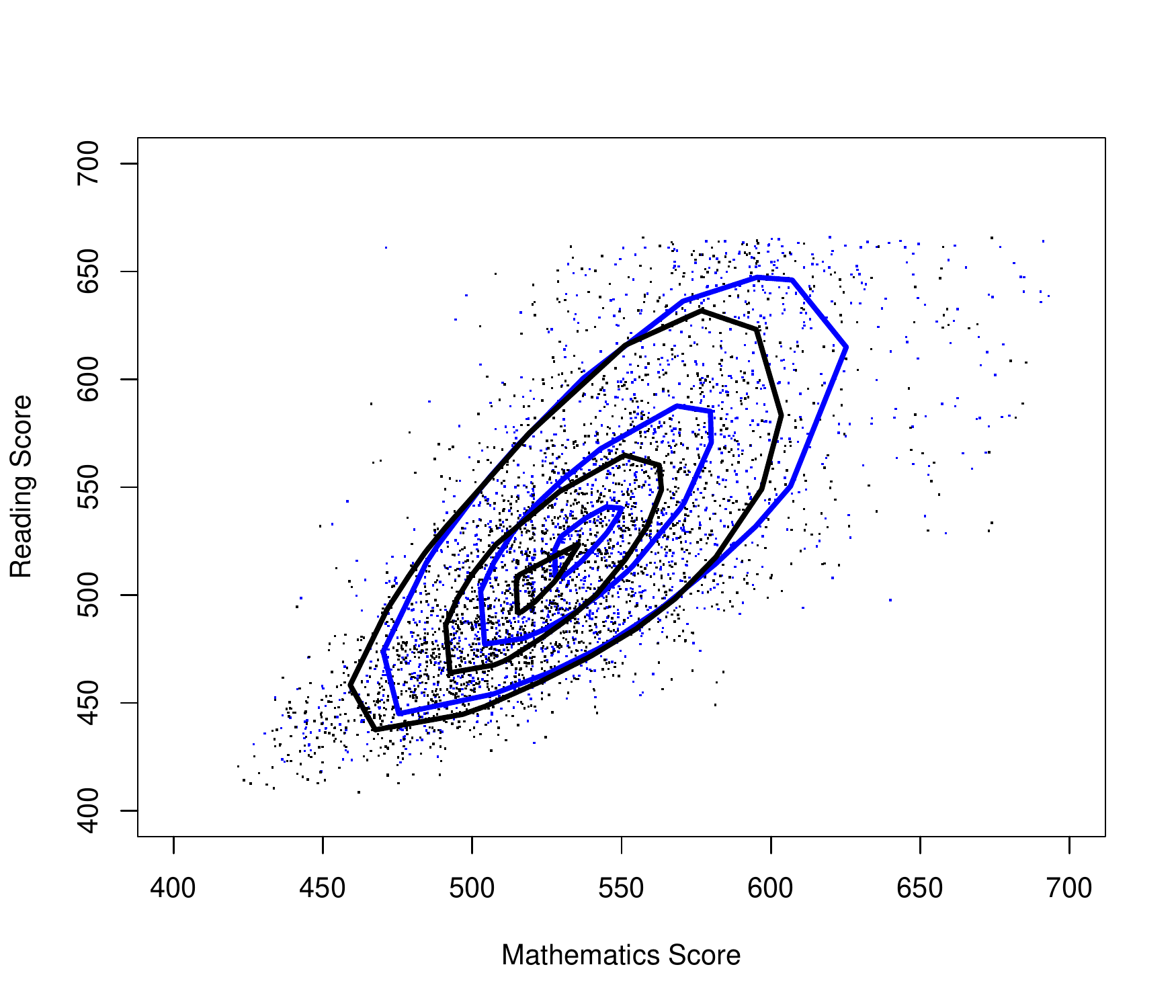}
	\caption[$\tau$-quantile contours]{$\tau$-quantile contours. Blue represents small and black represents large classrooms.}
	\label{fig:test3smlg1}
\end{figure}

Up to this point only quantile location models conditional on binary classroom size have been estimated.\footnote{The unconditional model from (\ref{eq:multdefreg}) is used but is called conditional because separate models were estimated on small and large classrooms.}  When including continuous 
covariates the $\tau$-quantile regions become `tubes' that travel through the 
covariate space. Due to random assignment of teachers, teacher experience can be treated as exogenous and the impact of 
experience on student outcomes can be estimated.  Treating teacher experience as continuous the appropriate model to use is the conditional regression model (\ref{eq:condest}). The local constant specification is preferred if the researcher wishes to inspect slices of the regression tube. The local bilinear specification is preferred if the researcher wishes to connect the slices of the regression tube to create the regression tube. This analysis only looks at slices of the tube, thus the local constant specification is used.

The conditional model is 
\begin{align}\label{eq:appcond}
\begin{split}
\mathbf{Y}_{\mathbf{u}i} &= mathematics_i u_1 + reading_iu_2\\
\mathbf{Y}_{\mathbf{u}i}^\perp &=  mathematics_i u_1^\perp + reading_iu_2^\perp\\
\mathbf{X}_{i} &= years \, of \, teacher \, experience_i\\
\mathcal{X}_{\mathbf{u}i}^l &= [1,\mathbf{X}_{i}-\mathbf{x}_0,\mathbf{Y}_{\mathbf{u}i}^\perp,(\mathbf{X}_{i}-\mathbf{x}_0)\mathbf{Y}_{\mathbf{u}i}^\perp]'\\
\hat{\sigma}^2_{\mathbf{X}} &= \frac{1}{n-1}\sum_{i=1}^{n}(\mathbf{X}_i - \bar{\mathbf{X}})^2\\
h &= \sqrt{9\hat{\sigma}^2_{\mathbf{X}}n^{-1/5}}\\
K_h(\mathbf{X}_i - \mathbf{x}_0) &= \frac{1}{\sqrt{2\pi h^2}}exp\left(-\frac{1}{2h^2}(\mathbf{X}_i - \mathbf{x}_0)^2\right)\\ 
\mathbf{Y}_{\mathbf{u}i} &= \theta_{\bm{\tau};\mathbf{x}_0}\mathcal{X}_{\mathbf{u}i} + 
\epsilon_i\\
\epsilon_i&\overset{iid}{\sim} ALD(0,K_h(\mathbf{X}_i - \mathbf{x}_0)^{-1},\tau)\\
\theta_{\bm{\tau};\mathbf{x}_0} & \sim 
N(\mu_{\theta_{\bm{\tau};\mathbf{x}_0}},
\Sigma_{\theta_{\bm{\tau};\mathbf{x}_0}}).
\end{split}
\end{align}
The prior hyperparameters are $\mu_{\theta_{\bm{\tau};\mathbf{x}_0}} = \mathbf{0}_4$ and $
\Sigma_{\theta_{\bm{\tau};\mathbf{x}_0}} = 1000\mathbf{I}_{4}$. Small and large classrooms are pooled together. Figure \ref{fig:regtube} shows the $\tau$-quantile regression regions with a 
covariate for experience.  The values $\tau$ takes on are 0.20 (left plot) and 
0.05 (right plot). The 
tubes are sliced at $\mathbf{x}_0\in\{1, 10, 20\}$ years of teaching experience.  The left plot 
shows reading scores increase with teacher experience for the more `central' 
students but there does not seem to be a change in mathematics scores.  The 
right plot shows a similar story for most of the `extreme' students.  

\begin{figure}[H]
	\centering
	\includegraphics[width=.85\linewidth]{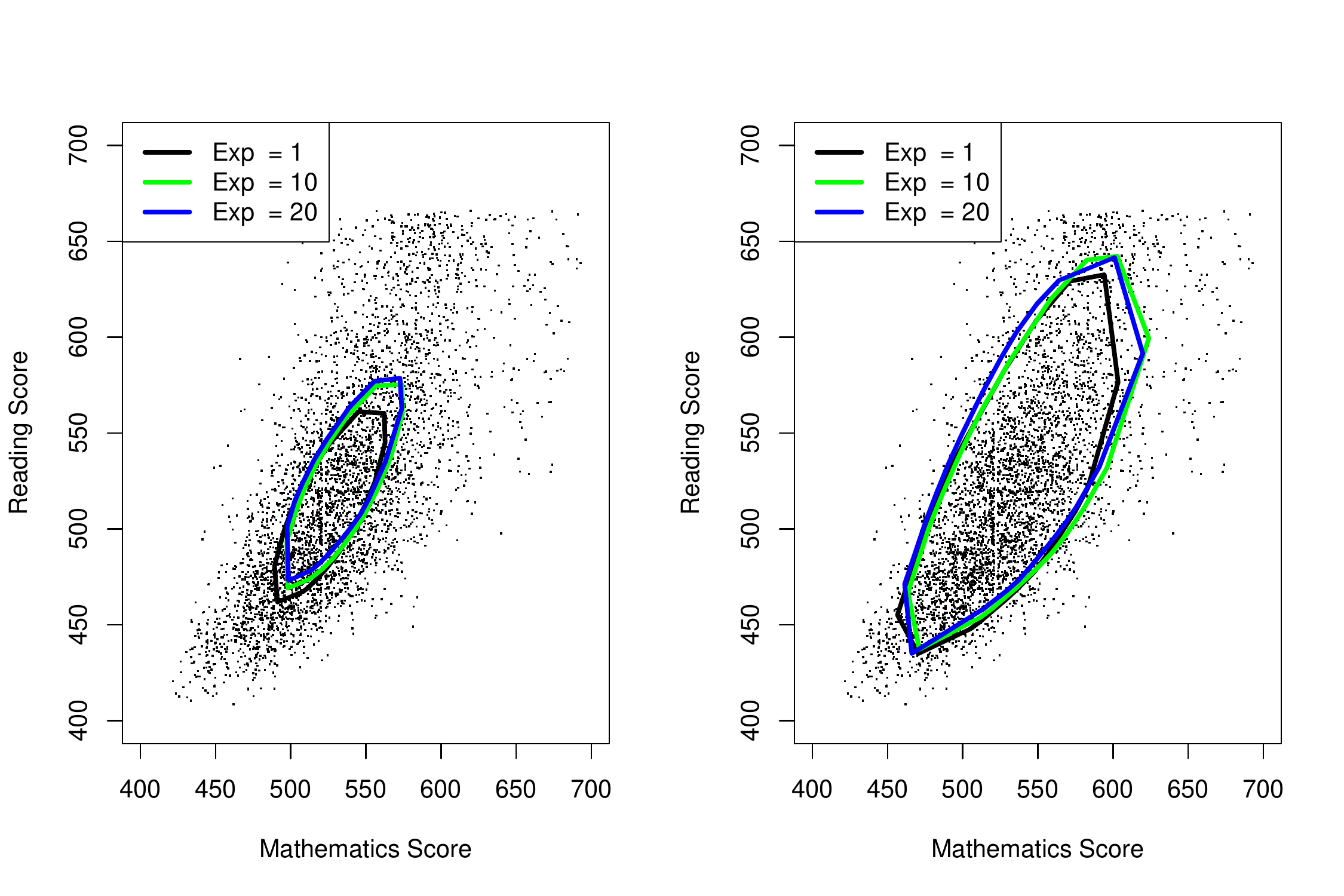}
	\caption[Regression tubes (linear)]{Regression tube slices. Left, $
		\tau=0.2$ quantile regression tube. Right, $\tau=0.05$ quantile regression tube.}
	\label{fig:regtube}
\end{figure}

A non-linear effect of experience is observed. It is clear there is a 
larger marginal impact on student outcomes going from 1 to 10 years of 
experience than from 10 to 20 years of experience. This marginal effect is more 
pronounced for the more central students ($\tau=0.2$).  Previous research has shown strong evidence that the effect of teacher 
experience on student achievement is non-linear. Specifically, the 
marginal effect of experience tends to be 
much larger for teachers that are at the beginning of their career than       
mid-career or late-career teachers \citep{rice10}. The more outlying students ($\tau=0.05$) have a heterogeneous treatment effect with respect to experience. The best performing students in mathematics and reading show increased performance when experience increases from 1 to 10 years but little change after that. All other outlying students are largely unaffected by experience.

\subsection{Sensitivity analysis}\label{star:sensitive}

Figure (\ref{fig:Test3SmBayesFreq}) shows posterior sensitivity of expected $\tau$-quantile contours for weak and strong priors of spherical Tukey depth contours for $\tau\in\{0.05,0.20,0.40\}$.\footnote{To better show the effect of the prior the dataset is reduced to the small classroom subset.}  The posterior from the weak prior is represented by the dashed red line and the posterior from the strong prior is represented by the dotted blue line. The posteriors are compared against the frequentist estimate represented by a solid black contour. The weak priors have covariance $\Sigma_{\theta_{\bm{\tau}}} = diag(1000,1000)$ for all $\bm{\tau}\in\mathcal{B}^k$. For all plots the posterior expected $\tau$-contour with a weak prior is indistinguishable from the frequentist $\tau$-contour. Appendix \ref{app:star} presents a sensitivity analysis for a single $\lambda_{\bm{\tau}}$. 

The top-left plot shows expected posterior $\tau$-contours from a prior mean $\mu_{\theta_{\bm{\tau}}} = \mathbf{0}_2$ for all  $\bm{\tau}\in\mathcal{B}^k$. The strong prior has covariance $\Sigma_{\theta_{\bm{\tau}}} = diag(1,1000)$ for all $\bm{\tau}\in\mathcal{B}^k$. The strong prior represents a strong a priori belief that all $\tau$-contours are near the Tukey median. The prior influence is strongest ex-post for $\tau = 0.05$. This is because the distance between the $\tau$-contour and the Tukey median increases with decreasing $\tau$. The strong prior in the top-right plot has covariance $\Sigma_{\theta_{\bm{\tau}}} = diag(1000,0.0001)$ for all $\bm{\tau}\in\mathcal{B}^k$, all else is the same as the priors from the top-left plot. The top-right plot shows that a strong prior information on $\beta_{{\bm{\tau}}\mathbf{y}}$ provides little information ex-post in this setup.

The bottom-left plot shows expected posterior $\tau$-contours from prior means $\mu_{\theta_{0.05\mathbf{u}}} = (-65,0)$, $\mu_{\theta_{0.20\mathbf{u}}} = (-33,0)$, and $\mu_{\theta_{0.40\mathbf{u}}} = (-10,0)$ for all $\mathbf{u}\in\mathcal{S}^{k-1}$. The strong prior has covariance  $\Sigma_{\theta_{\bm{\tau}}} = diag(1,0.0001)$. This is a strong a priori belief that the $\tau$-contours are spherical and the distance between the $\tau$-contour and the Tukey median decreases with increasing $\tau$. This plot shows the reduction of ellipticity of the posterior $\tau$-contours. The effect is strongest for $\tau=0.05$.

\begin{figure}[htb]
	\centering
	\includegraphics[width=0.7\linewidth]{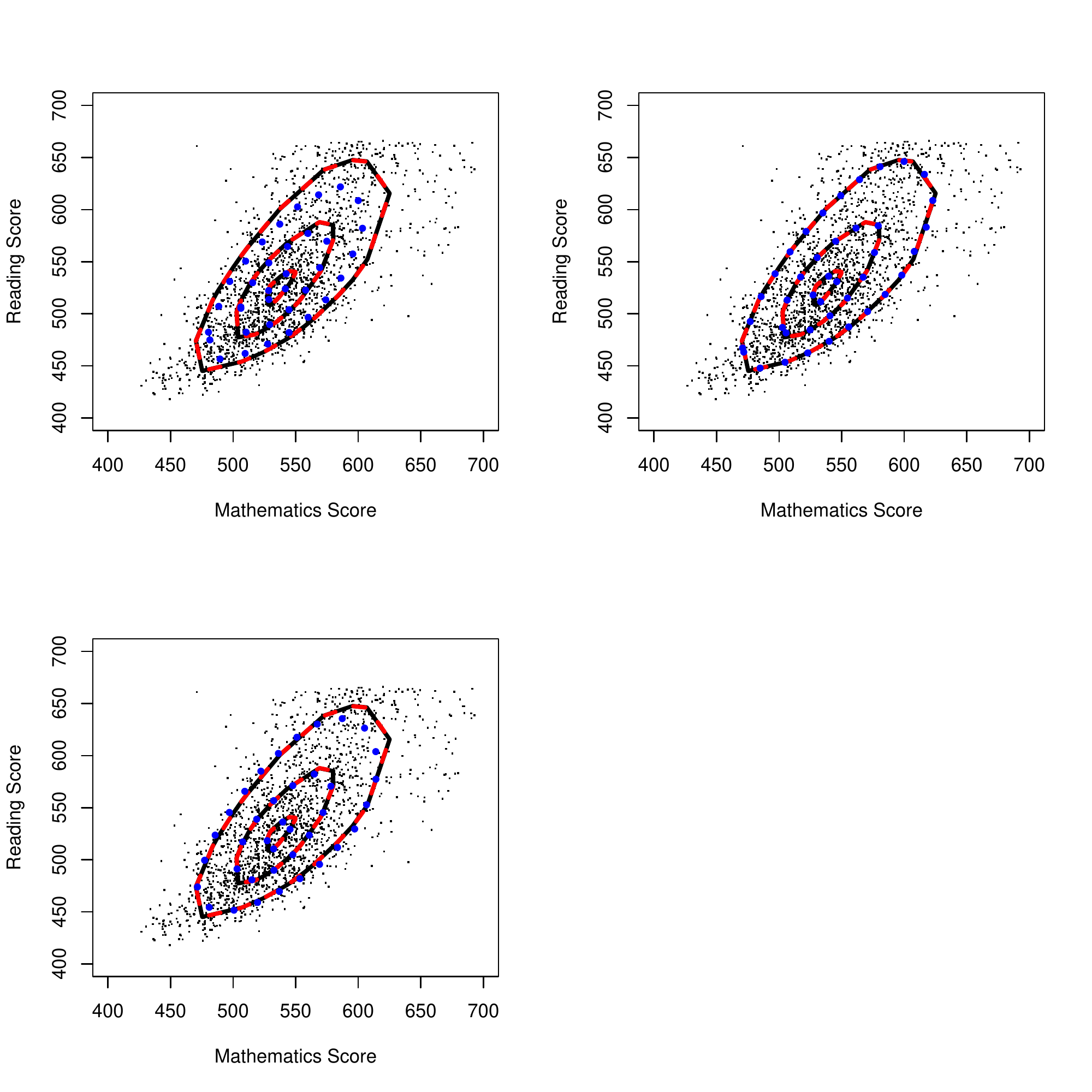}
	\caption[Prior influence ex-post]{Tukey depth contour prior influence ex-post}
	\label{fig:Test3SmBayesFreq}
\end{figure}

\section{Conclusion}
A Bayesian framework for estimation of 
multiple-output directional quantiles was presented. The 
resulting posterior is consistent for the parameters of 
interest, despite having a misspecified likelihood.  By 
performing inferences as a Bayesian one inherits many of the strengths of a 
Bayesian approach.    The model is applied to the Tennessee Project STAR 
experiment and it 
concludes that students in a smaller classroom perform better for every quantile 
subpopulation than students in a larger classroom.

A possible avenue for future work is to find a structural economic model whose 
parameters relate directly to the subgradient conditions.  This would give a 
contextual economic interpretation of the subgradient conditions.  Another 
possibility would 
be developing a formalized hypothesis test for the distribution comparison 
presented in Figure 
\ref{fig:test3smlg1}. This would be a test for the ranking of multivariate 
distributions based off the directional quantile.

\appendix
\section*{Appendix}
\setcounter{equation}{14}

\section{Review of single-output quantiles, Bayesian quantiles and  multiple-output quantiles}\label{app:quantreview}  
\subsection{Quantiles and quantile regression}

Quantiles sort and rank observations 
to describe how extreme an observation is.  In one dimension, for $\tau\in(0,1)
$, the $\tau$th 
quantile is the observation that splits the 
data into two bins: a left bin that contains $\tau\cdot100\%$ of the total 
observations that are
smaller and a right bin that contains the rest of the $(1-\tau)\cdot100\%$ 
total 
observations that are larger. The entire 
family of $\tau\in (0,1)$ 
quantiles allows one to uniquely define the distribution of 
interest. Let $Y\in\Re$ be a univariate random variable with Cumulative Density Function 
(CDF)
$F_Y(y) = Pr(Y\leq y)$ then the $\tau$th 
population single-output quantile is defined as 
\begin{equation}\label{eq:quant1dim}
Q_Y(\tau)=\inf\{y\in \Re: \tau \leq F_Y(y)\}.
\end{equation}
If $Y$ is a continuous random variable then the CDF is invertible and the 
quantile is 
$Q_Y(\tau)=F^{-1}_Y(\tau)$.  Whether or not $Y$ is continuous, $Q_Y(\tau)$ can 
be defined as the generalized inverse of $F_Y(y)$    (i.e.\ $F_Y(Q_Y(\tau))=\tau
$).\footnote{There are several ways to define the generalized inverse 
	of 
	a CDF \citep{embrechts13,feng12}.} 
The 
definition of sample quantile is the same as (\ref{eq:quant1dim}) with 
$F_Y(y)$  replaced with 
its empirical counterpart $\hat{F}_Y(y)=\frac{1}{n}\sum_{i=1}^{n}1_{(y_i\leq y)}$ where $1_{(A)}$ is an indicator function for event $A$ being true.  

\cite{fox64} showed quantiles can be computed via an optimization based 
approach. Define the 
check function to be
\begin{equation}\label{eq:chkfnc}
\rho_\tau(x) = x(\tau-1_{(x<0)}).
\end{equation}
The $\tau$th 
population quantile of $Y\in\Re$ is equivalent to
$Q_Y(\tau)=\underset{a}{argmin}\,E[\rho_\tau(Y-a)]$.  Note this definition 
requires 
$E[Y]$ and $E[Y1_{(Y-a<0)}]$ to be finite. If the moments of $Y$ are not finite, an alternative but equivalent definition 
can be used instead 
\citep{paindaveinesiman11}. The corresponding sample quantile 
estimator is 
\begin{equation}\label{eq:locsamp}
\hat{\alpha}_\tau=\underset{a}{argmin}\frac{1}{n}\sum_{i=1}^{n}\rho_
\tau(y_i-a).
\end{equation}

The commonly accepted definition of single-output linear conditional quantile regression (generally known as 
`quantile regression') was originally proposed by \cite{koenker78}.  The $\tau$th conditional population quantile regression function is 
\begin{equation}\label{eq:linqreg1dim}
Q_{Y|X}(\tau) = \inf\{y\in \Re: \tau \leq F_{Y|X}(y)\}=X'\beta_\tau
\end{equation}
which can be equivalently defined as $Q_{Y|X}(\tau)=\underset{b}
{argmin}\,E[\rho_\tau(Y-X'b)|X]$ (provided the moments 
$E[Y|X]$ 
and $E[Y1_{(Y-X'b<0)}|X]$ are finite).  The parameter $\beta_\tau$ is
estimated by solving  
\begin{equation}\label{eq:frquni}
\hat{\beta}_\tau = 
\underset{b}{argmin}\,\frac{1}{n}\sum_{i=1}^{n}\rho_\tau(y_i 
- 
x_i'b).
\end{equation}
This 
optimization 
problem can be written as a linear programming problem and solutions can 
be found using the simplex or interior 
point algorithms.  

There are two common motivations for quantile regression. First is quantile regression estimates and predictions can be robust to outliers and violations of model assumptions.\footnote{For example, the median of 
	a 
	distribution can be consistently estimated whether or not the distribution 
	has a 
	finite first moment.} Second, quantiles can be of greater 
scientific interest than means or conditional means (as one would find 
in linear regression).\footnote{For example, if one were interested in the 
	effect of police expenditure on crime,  one would expect there to be 
	larger effect for high crime areas (large $\tau$) and little to no effect 
	on low 
	crime areas (small $\tau$).} These two motivations extend to 
multiple-output quantile regression.  See \cite{koenker05} for a  survey of 
the 
field of single-output quantile 
regression.

Several 
approaches to generalizing quantiles from a single-output to a multiple-output random variable have been proposed \citep{small90,serfling02}.  
Generalization is difficult because the inverse of the 
multiple-output CDF is a one-to-many mapping, hence a 
definition based off inverse CDFs can 
lead to difficulties.   See \cite{serflingzuo10} for a discussion of desirable 
criteria one might expect a multiple-output quantiles to have and 
\cite{serfling02} 
for a survey of extending quantiles to the multiple-output case.   \cite{small90} 
surveys the special case of a median.

The proposed method uses a definition of multiple-output quantiles using `directional 
quantiles' introduced by 
\cite{laine01} and rigorously developed by \cite{hallin10}.  A 
directional quantile of $\mathbf{Y}\in\Re^k$ is a function of two objects:  a 
direction vector $\mathbf{u}$ (a point on the surface of $k$ dimension unit 
hypersphere) and 
a depth $\tau\in(0,1)$.  A directional quantile is then uniquely defined by $
\bm{\tau} = \mathbf{u}\tau$. The $\bm{\tau}$ directional 
quantile hyperplane is denoted $\lambda_{\bm{\tau}}$ which is a hyperplane 
through $
\Re^k$.  The hyperplane $\lambda_{\bm{\tau}}$ 
generates two quantile regions: a lower region of all points below $
\lambda_{\bm{\tau}}$ and an upper region of all points above $
\lambda_{\bm{\tau}}$.  The lower region contains $\tau
\cdot 100\%$ of observations and the upper region contains the remaining $(1-
\tau)
\cdot100\%$. Additionally, the vector connecting the probability mass centers 
of the two regions 
is parallel to $\mathbf{u}$.  Thus $\mathbf{u}$ orients the regression and can 
be thought of as a vertical axis.  

\subsection{Bayesian single-output conditional quantile regression}
Bayesian 
methods require a likelihood 
and hence an observational distributional assumption. Yet quantile 
regression avoids making strong distributional assumptions (a seeming contradiction). \cite{yu01} introduced a 
Bayesian 
approach by using a possibly misspecified Asymmetric Laplace 
Distribution (ALD) likelihood.\footnote{Note, the ALD 
	maximum likelihood estimator is equal to the estimator from (\ref{eq:frquni}).}   
The Probability Density Function (PDF) of the $ALD(\mu,\sigma,\tau)$ is 
\begin{equation}\label{eq:ald}
f_\tau(y|\mu,\sigma) = 
\frac{\tau(1-\tau)}{\sigma}exp(-\frac{1}{\sigma}\rho_\tau(y-\mu)).
\end{equation}
The Bayesian assumes $Y|X\sim ALD(X'\beta_\tau,\sigma,\tau)$, selects a prior 
and 
performs estimation using standard procedures. The nuisance scale parameter can be fixed (typically at $\sigma=1$) or freely estimated.\footnote{\cite{rahman16} and \cite{RahmanKarnawat19} are two examples where the scale parameter is used in an ordinal model.} \cite{sriram13} showed posterior 
consistency for this model, meaning that as sample size increases the probability mass of the 
posterior concentrates around the values of $\beta$ that satisfy 
(\ref{eq:linqreg1dim}). The result holds wether $\sigma$ is fixed at 1 or freely estimated. \cite{yangwanghe15} and \cite{sriram15} provide a procedure for constructing confidence intervals with correct frequentist coverage probability.  If 
one is willing to accept prior joint normality of $\beta_\tau$ then a Gibbs sampler 
can be used to obtain random draws from the posterior \citep{kozumi11}.  \cite{alhamzawi12} proposed using an adaptive Lasso sampler to provide regularization. Nonparametric Bayesian approaches have been proposed by \cite{kottas09} and \cite{taddy12}.

\subsection{Unconditional multiple-output quantile regression}

Any given $\lambda_{\bm{\tau}}$ quantile hyperplane separates $\mathbf{Y}$ 
into two 
halfspaces. 
An open lower quantile halfspace,
\begin{equation}\label{eq:lwhalfsp}
H^-_{{\bm{\tau}}} = H_{{\bm{\tau}}}^-(\alpha_{\bm{\tau}},
\beta_{\bm{\tau}}) = \{y\in 
\Re^k: 
\mathbf{u'y}<\mathbf{\beta_{{\bm{\tau}} y}'\Gamma_u'y + \beta_{{\bm{\tau}} 
		x}'X} 
+\alpha_{\bm{\tau}} \},
\end{equation}
and a closed upper quantile halfspace,
\begin{equation}\label{eq:uphalfsp}
H^+_{{\bm{\tau}}} = H_{\bm{\tau}}^+(\alpha_{\bm{\tau}},
\beta_{\bm{\tau}}) = \{y\in 
\Re^k: 
\mathbf{u'y}\geq\mathbf{\beta_{\bm{\tau} y}'\Gamma_u'y + \beta_{{\bm{\tau}} 
		x}'X}
+\alpha_{\bm{\tau}} \}.
\end{equation}

Under certain conditions the distribution $\mathbf{Y}$ can be fully 
characterized 
by a family of 
hyperplanes $\Lambda = \{\lambda_{\bm{\tau}} : 
\text{\boldmath$\tau$}=\tau\mathbf{u}\in \mathcal{B}^k\}$          
\citep[Theorem 5]{kong12}.\footnote{The conditions required are the directional 
	quantile envelopes of the probability distribution of $\mathbf{Y}$ with 
	contiguous support have smooth boundaries for every $\tau \in (0,0.5)$} 
There 
are two 
subfamilies of hyperplanes: a fixed-$\mathbf{u}$ subfamily, $\Lambda_\mathbf{u} = 
\{\lambda_{\bm{\tau}} : 
\text{\boldmath$\tau$}=\tau\mathbf{u}, \tau\in (0,1)\}$, and a fixed-$\tau$ 
subfamily, $\Lambda_\tau = 
\{\lambda_{\bm{\tau}} : 
\text{\boldmath$\tau$}=\tau\mathbf{u}, \mathbf{u}\in \mathcal{S}^{k-1}\}$.  The 
$\mathbf{\tau}$ subfamily is called a $\mathbf{\tau}$ quantile regression region (if no $\mathbf{X}$ is included then it is called a $\mathbf{\tau}$ quantile region).  The 
$\tau$-quantile (regression) region is defined as 
\begin{equation}\label{eq:quantregion}
R(\tau) = \bigcap\limits_{\mathbf{u}\in\mathcal{S}^{k-1}}\cap 
\{H^+_{{\bm{\tau}}}\},
\end{equation}
where $\cap 
\{H^+_{{\bm{\tau}}}\}$ is the intersection over $H^+_{{\bm{\tau}}}$ if 
(\ref{eq:multdefreg}) is not unique.  

The boundary of $R(\tau)$ is called the 
$\tau$-quantile (regression) contour.  The boundary has a strong connection to Tukey (i.e. halfspace) depth contours. Tukey depth is a multivariate notion of centrality for some 
point $\mathbf{y}\in\Re^k$.  
Consider the set of all hyperplanes in $\Re^k$ that pass through $\mathbf{y}$.  The Tukey depth of $\mathbf{y}$ is the minimum  of the
percentage of observations separated by each hyperplane passing through $
\mathbf{y}$. \cite{hallin10} showed the $\mathbf{\tau}$ quantile region is 
equivalent 
to the Tukey depth 
region.\footnote{Mathematically, the 
	Tukey (or halfspace) depth of $\mathbf{y}$ with respect to probability 
	distribution $P$ is defined as $HD(\mathbf{y},P) = \inf \{P[H]: H \text{ is 
		a 
		closed halfspace containing }\mathbf{y}\}$.  Then the Tukey halfspace 
	depth 
	region is defined as $D(\tau) = \{\mathbf{y}\in\Re^k: HD(\mathbf{y},P) \geq 
	\tau \}$.  \cite{hallin10} show $R(\tau) = D(\tau)$ for all 
	$\tau\in[0,1)$.}    
This provides a numerically efficient approach to find Tukey depth contours. 

If $\mathbf{Y}$ (and $\mathbf{X}$ for the 
regression 
case) is 
absolutely continuous with respect to Lebesgue measure, has connected support 
and finite first moments then 
$(\alpha_\tau,\mathbf{\beta_\tau})$ and $\lambda_\tau$ are unique
\citep{paindaveinesiman11}.\footnote{This is Assumption \ref{ass:cont}, stated formally in Section \ref{sec:unc}.}  Under this assumption 
the 
`subgradient conditions' required for consistency are well defined.  It follows that $
\Psi(a,\mathbf{b})$ continuously 
differentiable with respect to $a$ and $\mathbf{b}$ and convex.  The population 
parameters 
$(\alpha_{{\bm{\tau}}},\beta_{{\bm{\tau}}})$ are defined as the parameters 
that satisfy 
two subgradient conditions:  
\begin{equation}
\left.\frac{\partial \Psi(a,\mathbf{b})}{\partial 
	a}\right|_{\alpha_{{\bm{\tau}} 
	},
	\beta_{{\bm{\tau}} }} = 
Pr(\mathbf{Y_u}-\beta_{{\bm{\tau}}
	\mathbf{y}}'\mathbf{Y_u^\perp}-\beta_{{\bm{\tau}} 
	\mathbf{x}}'\mathbf{X}-\alpha_{{\bm{\tau}}}\leq 0)-\tau=0
\end{equation}
and 
\begin{equation}
\left.\frac{\partial \Psi(a,\mathbf{b})}{\partial \mathbf{b}}\right|
_{\alpha_{{\bm{\tau}}
	},\beta_{{\bm{\tau}} }} = 
E[[\mathbf{Y_u^\perp}',\mathbf{X}']'1_{(\mathbf{Y_u}-\beta_{{\bm{\tau}}
		\mathbf{y}}' 
	\mathbf{Y_u^\perp}
	-\beta_{{\bm{\tau}} \mathbf{x}}'\mathbf{X}-\alpha_{{\bm{\tau}}}\leq 
	)}] - 
\tau 
E[[\mathbf{Y_u^\perp}',\mathbf{X}']'] =\mathbf{0}_{k+p-1}.
\end{equation}
The first condition can be written as $Pr(\mathbf{Y}\in 
H^-_{{\bm{\tau}}})= \tau$ which retains the idea of a quantile 
partitioning the support into two sets, one with probability $\tau$ and one 
with probability $(1-\tau)$.  The 
second condition is equivalent to
\begin{align*}
	\tau &= \frac{E[\mathbf{Y}_{\mathbf{u}i}^\perp1_{(\mathbf{Y}\in 
			H_{{\bm{\tau}}}^-)}]}{E[\mathbf{Y}_{\mathbf{u}i}^\perp]}\text{ for 
		all } i\in\{1,...,k-1\}\\
	\tau &= \frac{E[\mathbf{X}_{i}1_{(\mathbf{Y}\in 
			H_{{\bm{\tau}}}^-)}]}{E[\mathbf{X}_{i}]}\text{ for 
		all } i\in\{1,...,p\}
\end{align*}
Note that using the law of total expectations  $E[\mathbf{Y}_{\mathbf{u}i}^\perp1_{(\mathbf{Y}\in 
	H_{{\bm{\tau}}}^-)}]=E[\mathbf{Y}_{\mathbf{u}i}^\perp|\mathbf{Y}\in 
H_{{\bm{\tau}}}^-]Pr(\mathbf{Y}\in 
H_{{\bm{\tau}}}^-) + 0 Pr(\mathbf{Y}\notin 
H_{{\bm{\tau}}}^-) = E[\mathbf{Y}_{\mathbf{u}i}^\perp|\mathbf{Y}\in 
H_{{\bm{\tau}}}^-]\tau$. Then the second condition can be rewritten as
\begin{align*}
	E[\mathbf{Y}_{\mathbf{u}i}^\perp|\mathbf{Y}\in 
	H_{{\bm{\tau}}}^-] &= E[\mathbf{Y}_{\mathbf{u}i}^\perp]\text{ for 
		all } i\in\{1,...,k-1\}\\
	E[\mathbf{X}_{i}|\mathbf{Y}\in 
	H_{{\bm{\tau}}}^-] &= E[\mathbf{X}_{i}]\text{ for 
		all } i\in\{1,...,p\}.
\end{align*}	

Thus the probability mass center in the lower 
halfspace for the orthogonal response is equal to the
probability 
mass center in the entire orthogonal response space.  Likewise for the covariates, the probability mass center in the 
lower 
halfspace is equal to the probability 
mass center in the entire covariate space.  

The first $k-1$ dimensions of the second subgradient conditions can also be rewritten as $\Gamma_{\mathbf{u}}' E[\mathbf{Y}|\mathbf{Y}\in 
H_{{\bm{\tau}}}^-] = \Gamma_{\mathbf{u}}'E[\mathbf{Y}]$ or equivalently $\Gamma_{\mathbf{u}}' (E[\mathbf{Y}|\mathbf{Y}\in 
H_{{\bm{\tau}}}^-] - E[\mathbf{Y}])=\mathbf{0}_{k-1}$, which is satisfied if $E[\mathbf{Y}|\mathbf{Y}\in 
H_{{\bm{\tau}}}^-] = E[\mathbf{Y}]$. This sufficient condition interpretation states that the probability mass center of the response in the lower halfspace is equal to the probability mass center of the response in the entire space. However, this interpretation cannot be guaranteed. 

Further note, $E[[\mathbf{Y_u^\perp}',\mathbf{X}']'] = 
E[[\mathbf{Y_u^\perp}',\mathbf{X}']'1_{(\mathbf{Y}\in 
	H_{{\bm{\tau}}}^+)}] + 
E[[\mathbf{Y_u^\perp}',\mathbf{X}']'1_{(\mathbf{Y}\in 
	H_{{\bm{\tau}}}^-)}]$. Then the second condition can be written as 
\[diag(\mathbf{\Gamma_u'},\mathbf{I}_p)\left[\frac{1}{1-\tau}E[\mathbf{[Y',X']'}
1_{(\mathbf{Y}\in H_{{\bm{\tau}}}^+)}]-\frac{1}{\tau}E[\mathbf{[Y',X']'} 
1_{(\mathbf{Y}\in H_{{\bm{\tau}}}^-)}]\right] = \mathbf{0}_{k+p-1}.\]
The first $k-1$ components,
\[\mathbf{\Gamma_u'}\left[\frac{1}{1-\tau}E[\mathbf{Y}
1_{(\mathbf{Y}\in H_{{\bm{\tau}}}^+)}]-\frac{1}{\tau}E[\mathbf{Y} 
1_{(\mathbf{Y}\in H_{{\bm{\tau}}}^-)}]\right] = \mathbf{0}_{k-1},\]
show $\frac{1}{1-\tau}E[\mathbf{Y}
1_{(\mathbf{Y}\in H_{{\bm{\tau}}}^+)}]-\frac{1}{\tau}E[\mathbf{Y} 
1_{(\mathbf{Y}\in H_{{\bm{\tau}}}^-)}]$ is orthogonal to 
$\mathbf{\Gamma_u'}$ and thus, is parallel to $\mathbf{u}$.  It follows 
the difference of the weighted probability mass centers of the two spaces ($H_{{\bm{\tau}}}^-$ and $H_{{\bm{\tau}}}^+$) is 
parallel to $\mathbf{u}$.\footnote{\cite{hallin10} provides an additional interpretation in terms of Lagrange multipliers.}

\begin{figure}[htb]
	\centering
	\includegraphics[width=0.7\linewidth]{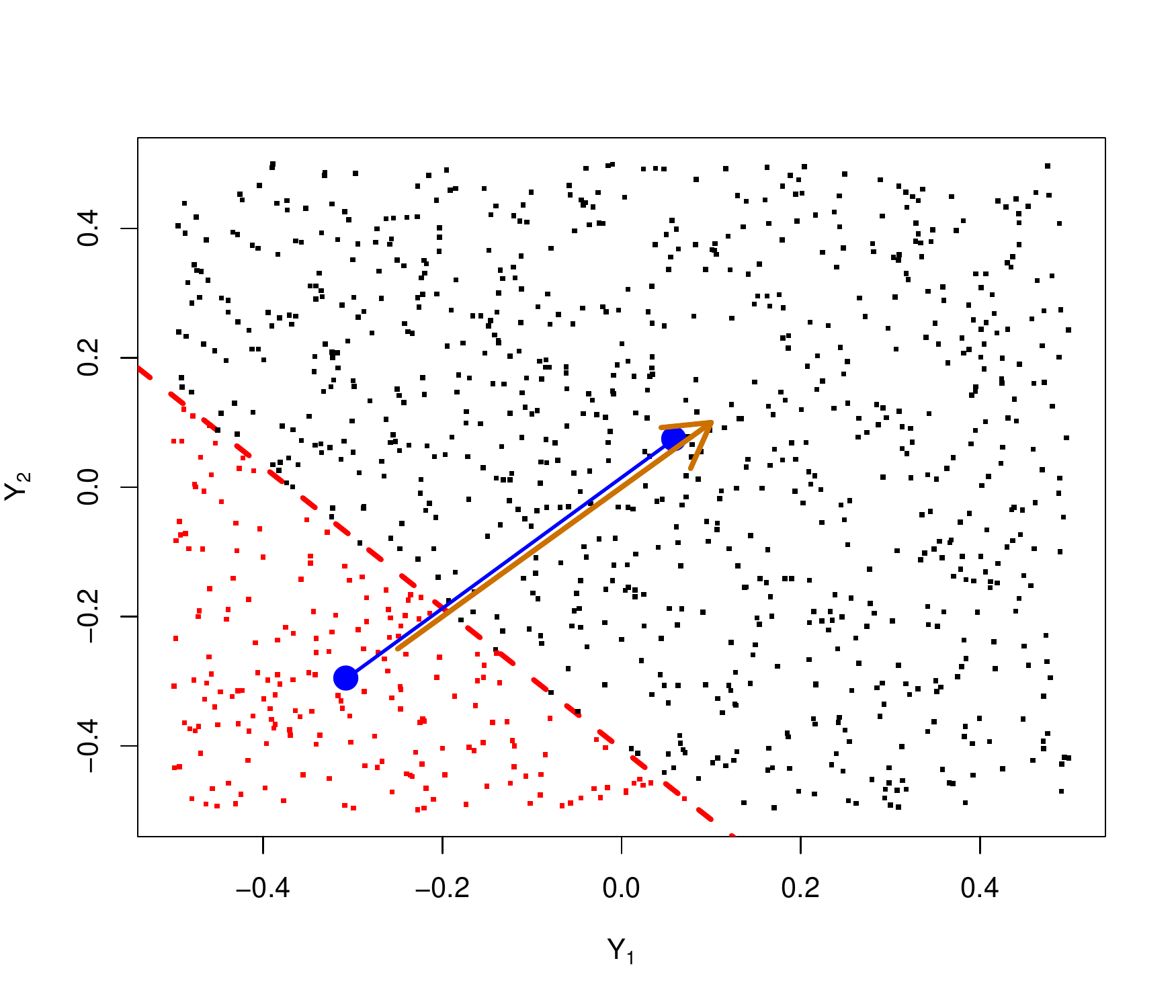}
	\caption[Example of multiple-output quantile (location)]{Lower quantile 
		halfspace for $u=(1/\sqrt{2},1/\sqrt{2})$ and $\tau = 0.2$}
	\label{fig:subgradex}
\end{figure}

Subgradient conditions \ref{eq:subgrad1} and \ref{eq:subgrad2} can be visualized in Figure \ref{fig:subgradex}. The data, $\mathbf{Y}$, are simulated 
with $1,000$ independent draws from the uniform unit square centered 
on $
(0,0)$. The directional vector is $\mathbf{u}=(1/\sqrt{2},1/\sqrt{2})$, represented by
the orange $45^\circ$ degree arrow.  The depth is $
\tau = 0.2$.  The hyperplane $\lambda_{\bm{\tau}}$  is the red dotted line.  The lower quantile region, $H^-
_{{\bm{\tau}}}$, includes the red dots lying below $\lambda_{\bm{\tau}}$.  The upper 
quantile region, $H^+_{{\bm{\tau}}}$, includes the black dots lying above $
\lambda_{\bm{\tau}}$.  The probability mass centers of the lower and upper 
quantile regions are the solid blue dots in their respective 
regions.  The first subgradient condition states that $20\%$ of all points are 
red.  The second subgradient condition states that the line joining the two 
probability mass centers is parallel to $\mathbf{u}$.

\begin{figure}[htb]
	\centering
	\includegraphics[width=1\linewidth]{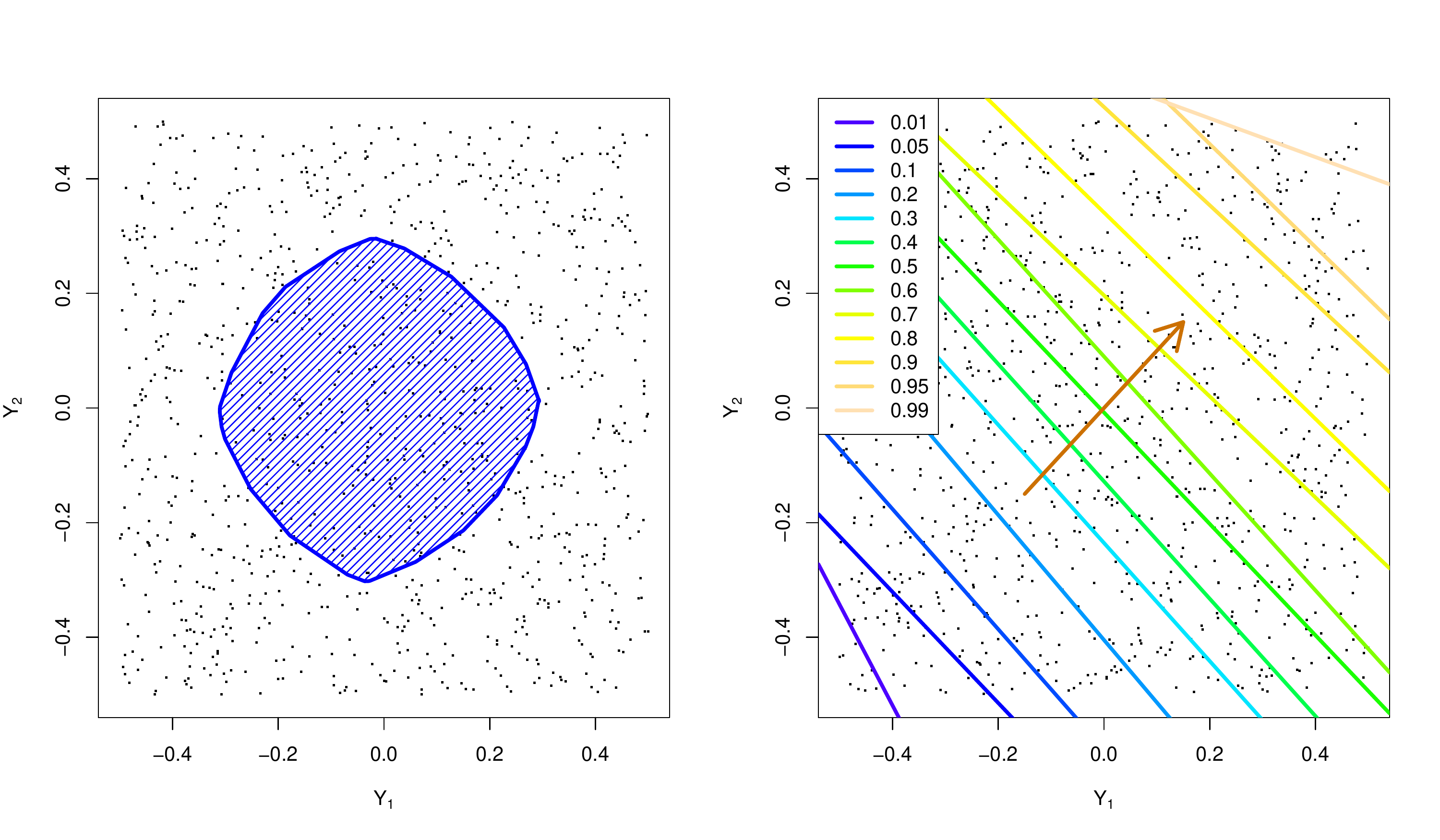}
	\caption[Example of a $\tau$-quantile region and fixed-$\mathbf{u}$ halfspaces]
	{Example of a $\tau$-quantile region and fixed-$\mathbf{u}$ halfspaces. Left, 
		fixed 
		$\tau=0.2$ quantile region. Right, fixed $u=(1/\sqrt{2},1/\sqrt{2})$ 
		quantile 
		halfspaces.}
	\label{fig:fixedutex}
\end{figure}

Figure \ref{fig:fixedutex} shows an example of a $\tau$-quantile region (left) and 
fixed-$
\mathbf{u}$ (right) halfspaces.  The 
left 
plot shows
fixed-$\tau$-quantile upper halfspace intersections of 32 equally spaced 
directions on the unit circle for $\tau=0.2$.  Any points on the boundary have Tukey depth $0.2$. All points within the 
shaded blue region have a Tukey depth greater than or equal to $0.2$ and 
all points outside the shaded blue region have Tukey depth less than $0.2$.

The right plot of Figure \ref{fig:fixedutex} plot shows 13 
quantile hyperplanes $\lambda_{\bm{\tau}}$ for a fixed $\mathbf{u} = (1/
\sqrt{2},1/
\sqrt{2})$ with various $\tau$ (provided in the legend).  The orange arrow shows 
the direction vector $
\mathbf{u}$.  
The hyperplanes split the square such that $\tau\cdot100\%$ of all points lie 
below 
the hyperplanes. The weighted probably mass centers (not shown) are 
parallel to $\mathbf{u}$.  Note the hyperplanes do not need to be orthogonal to $
\mathbf{u}$. 

\begin{figure}[htb]
	\centering
	\includegraphics[width=0.9\linewidth]{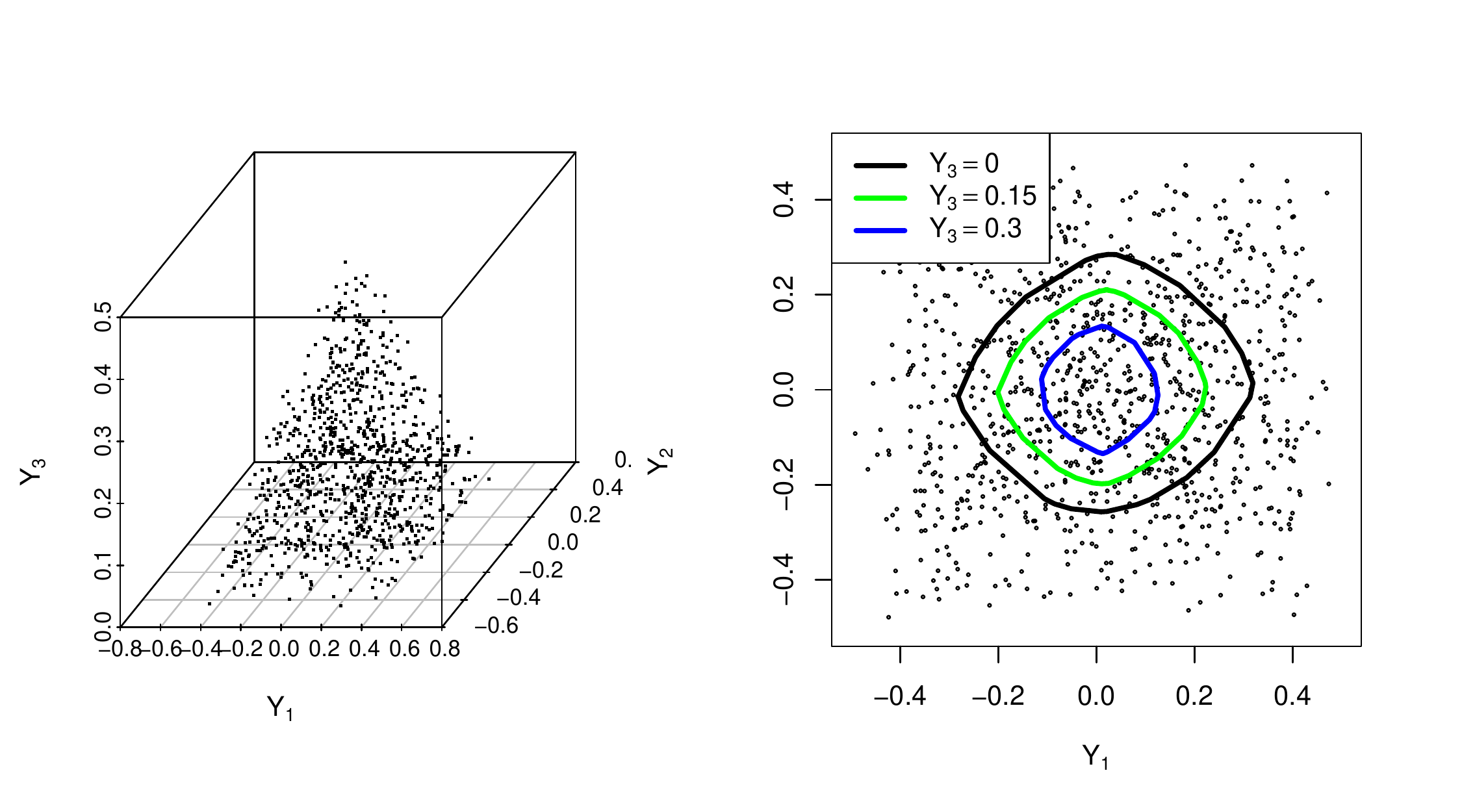}
	\caption[Example of an unconditional $\tau$-quantile regression tube through a uniform 
	pyramid]
	{Example of an unconditional $\tau$-quantile regression tube through a uniform regular square pyramid. Left, 
		draws from  
		random uniform regular square pyramid. Right, three slices of an unconditional
		$\tau=0.2$  quantile
		regression tube.}
	\label{fig:regtubeexample}
\end{figure}

Figure \ref{fig:regtubeexample} shows an example of an unconditional $\tau$-quantile regression 
tube through a random uniform regular square pyramid. The left plot is a 3 dimension scatter 
plot of the uniform regular square pyramid.\footnote{A uniform regular square pyramid is a regular right 
	pyramid with a square base where, for a fixed $\epsilon$, every $\epsilon$-ball contained 
	within 
	the pyramid has the same probability mass. The measure is normalized to 
	one.} 
The 
right plot shows the $\tau$-quantile regression 
tube of $Y_1$ and $Y_2$ regressed on $Y_3$ with cross-section cuts at $Y_3\in
\{0,0.15,0.3\}$. As $Y_3$ increases the tube travels from the base to the tip 
of 
the pyramid and the regression tube pinches. 

As in the 
single-output conditional regression model, the regression tubes are susceptible to quantile 
crossing. 
Meaning if one were to trace out the entire regression tube along $Y_3$ for a 
given $\tau$ and $\tau^\dagger>\tau$, the regression tube for $\tau^\dagger$ 
might not be contained in the one for $\tau$ for all $Y_3$.

\section{Proof of Theorem 1}\label{app:Th1Pf}
In this section consistency of the posterior for the population parameters is proven.  This proof is for the location case.  The regression 
case should come with easy modification by concatenating $\beta_{\bm{\tau y}} $ 
with $\beta_{\bm{\tau x}}$ and $\mathbf{Y}_{\mathbf{u}i}^\perp$ with $\mathbf{X}
_i$. Since $ \mathbf{Y}$ and $\mathbf{X}$ rely on the same sets of assumptions, 
and expectations and probabilities are taken over $\mathbf{Y}$ and $\mathbf{X}$, there 
should 
not 
be any issue with these results generalizing to the regression case.  For ease of readability
${\bm{\tau}}$ is omitted from $\alpha_{\bm{\tau}},\beta_{\bm{\tau}}$ and 
$\Pi_{\bm{\tau}}$. 

Define the population 
parameters 
$(\alpha_{0},\beta_{0})$ to be the parameters that satisfy 
(\ref{eq:subgrad1}) and 
(\ref{eq:subgrad2}). Note that the posterior can be written equivalently as

\begin{equation}\label{eq:poster}
\Pi(U|(\mathbf{Y}_1,\mathbf{X}_1),(\mathbf{Y}_2,\mathbf{X}
_2),...,	 	(\mathbf{Y}_n,\mathbf{X}_n)) = 
\frac{\int_U \prod_{i=1}^{n} 
	\frac{f_{\bm{\tau}}(\mathbf{Y}_i|\mathbf{X}_i,\alpha,
		\beta, 
		\sigma)}{f_{\bm{\tau}}(\mathbf{Y}_i|\mathbf{X}_i,
		\alpha_{0},
		\beta_{0}, 
		\sigma_{0})} d\Pi(\alpha,
	\beta)}{\int_{\Theta}\prod_{i=1}^{n} 
	\frac{f_{\bm{\tau}}(\mathbf{Y}_i|\mathbf{X}_i\alpha,
		\beta, 
		\sigma)}{f_{\bm{\tau}}(\mathbf{Y}_i| \mathbf{X}_i, 
		\alpha_{0},
		\beta_{0}, 
		\sigma_{0})} d\Pi(\alpha,
	\beta)}
\end{equation}

Writing the posterior in this form is for mathematical convenience. Define the posterior numerator to be
\begin{equation}\label{eq: posternumer}
I_{n}(U) = \int_U \prod_{i=1}^{n} 
\frac{f_{\bm{\tau}}(\mathbf{Y}_i|\alpha,\beta, 
	\sigma)}{f_{\bm{\tau}}(\mathbf{Y}_i|\alpha_0,\beta_0, 
	\sigma)} d\Pi(\alpha,\beta).
\end{equation}
The posterior denominator is then $I_n(\Theta)$. The next lemma (presented without proof) provides several 
inequalities that 
are useful later.

\begin{lemma}\label{lem: ineq}
	Let $b_i = (\alpha - \alpha_0) + 
	(\beta-\beta_0)'\mathbf{Y}_{\mathbf{u}i}^\perp$, $W_i = (\mathbf{u'} - 
	\beta_0'\mathbf{\Gamma_u'}) \mathbf{Y}_i - \alpha_0$, $W_i^+ = 
	max(W_i,0)$ and $W_i^- = 
	min(-W_i,0)$.  Then 
	a) $\log\left(\frac{f_{\bm{\tau}}(\mathbf{Y}_i|\alpha,\beta, 
		\sigma)}{f_{\bm{\tau}}(\mathbf{Y}_i|\alpha_0,\beta_0, 
		\sigma)}\right) =$
	
	\[ \frac{1}{\sigma}
	\begin{cases}
	-b_i(1-\tau)     & \quad \text{if } 
	(\mathbf{u'}-\beta'\mathbf{\Gamma_u'})\mathbf{Y}_i 
	- \alpha 
	\leq 0  
	\text{ and } (\mathbf{u'}-\beta_0'\mathbf{\Gamma_u'})\mathbf{Y}_i - 
	\alpha_0 
	\leq 0  \\
	-((\mathbf{u'}-\beta_0'\mathbf{\Gamma_u'}')\mathbf{Y}_i - \alpha_0) +  
	b_i\tau     & \quad 
	\text{if } 
	(\mathbf{u'}-\beta'\mathbf{\Gamma_u'})\mathbf{Y}_i - \alpha > 0  
	\text{ and } (\mathbf{u'}-\beta_0'\mathbf{\Gamma_u'})\mathbf{Y}_i - 
	\alpha_0 
	\leq 0 \\
	(\mathbf{u'}-\beta'\mathbf{\Gamma_u'})\mathbf{Y}_i - \alpha +  b_i\tau     
	& \quad 
	\text{if } 
	(\mathbf{u'}-\beta'\mathbf{\Gamma_u'})\mathbf{Y}_i - \alpha \leq 0  
	\text{ and } (\mathbf{u'}-\beta_0'\mathbf{\Gamma_u'})\mathbf{Y}_i - 
	\alpha_0 > 0 \\
	b_i\tau     & \quad \text{if } 
	(\mathbf{u'}-\beta'\mathbf{\Gamma_u'})\mathbf{Y}_i 
	- \alpha > 0  
	\text{ and } (\mathbf{u'}-\beta_0'\mathbf{\Gamma_u'})\mathbf{Y}_i - 
	\alpha_0 > 0  
	\\
	\end{cases}
	\]
	
	b) $\log\left(\frac{f_{\bm{\tau}}(\mathbf{Y}_i|\alpha,\beta, 
		\sigma)}{f_{\bm{\tau}}(\mathbf{Y}_i|\alpha_0,\beta_0, 
		\sigma)}\right) \leq \frac{1}{\sigma}|b_i|\leq 
	|\alpha-\alpha_0|+|(\beta-\beta_0)'||\mathbf{\Gamma_u'}||\mathbf{Y}_i|$
	
	c) $\log\left(\frac{f_{\bm{\tau}}(\mathbf{Y}_i|\alpha,\beta, 
		\sigma)}{f_{\bm{\tau}}(\mathbf{Y}_i|\alpha_0,\beta_0, 
		\sigma)}\right) \leq 
	\frac{1}{\sigma}|(\mathbf{u'}-\beta_0'\mathbf{\Gamma_u'})\mathbf{Y}_i - 
	\alpha_0|\leq 
	\frac{1}{\sigma}(|(\mathbf{u'}-\beta_0'\mathbf{\Gamma_u'})||\mathbf{Y}_i| + 
	|\alpha_0|)$
	
	% Lower bound
	%3) $\log\left(\frac{f_{\bm{\tau}}(\mathbf{Y}_i|\alpha,\beta, 
	%	\sigma)}{f_{\bm{\tau}}(\mathbf{Y}_i|\alpha_0,\beta_0, 
	%	\sigma)}\right) \geq -\frac{1}{\sigma}|(u'-\beta'\Gamma_u')Z - \alpha|$
	
	d) $\log\left(\frac{f_{\bm{\tau}}(\mathbf{Y}_i|\alpha,\beta, 
		\sigma)}{f_{\bm{\tau}}(\mathbf{Y}_i|\alpha_0,\beta_0, 
		\sigma)}\right)  = \frac{1}{\sigma}
	\begin{cases}
	-b_i(1-\tau) + \min(W_i^+,b_i)     & \quad \text{if } b_i > 0  \\
	b_i\tau  + \min(W_i^-,-b_i)   & \quad \text{if } b_i \leq 0 \\
	\end{cases} $
	
	e) $\log\left(\frac{f_{\bm{\tau}}(\mathbf{Y}_i|\alpha,\beta, 
		\sigma)}{f_{\bm{\tau}}(\mathbf{Y}_i|\alpha_0,\beta_0, 
		\sigma)}\right) \geq -\frac{1}{\sigma}|b_i|\geq 
	-|\alpha-\alpha_0|-|(\beta-\beta_0)'||\mathbf{\Gamma_u'}||\mathbf{Y}_i|$
	
\end{lemma}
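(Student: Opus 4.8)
The plan is to reduce all five claims to a single scalar identity for a difference of check functions and then read each part off that identity. Write $V_i = (\mathbf{u'}-\beta'\mathbf{\Gamma_u'})\mathbf{Y}_i - \alpha$ for the residual at the candidate parameter, so that $W_i = (\mathbf{u'}-\beta_0'\mathbf{\Gamma_u'})\mathbf{Y}_i - \alpha_0$ is the residual at the truth. Because both $ALD$ densities carry the same normalizing factor $\tau(1-\tau)/\sigma$, that factor cancels in the ratio, leaving
\[ \log\left(\frac{f_{\bm{\tau}}(\mathbf{Y}_i|\alpha,\beta,\sigma)}{f_{\bm{\tau}}(\mathbf{Y}_i|\alpha_0,\beta_0,\sigma)}\right) = \frac{1}{\sigma}\left[\rho_\tau(W_i) - \rho_\tau(V_i)\right]. \]
The key algebraic fact, checked by direct substitution, is $W_i - V_i = (\alpha-\alpha_0) + (\beta-\beta_0)'\mathbf{\Gamma_u'}\mathbf{Y}_i = b_i$, i.e.\ $V_i = W_i - b_i$. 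Every part of the lemma then follows from this relation together with the piecewise-linear form $\rho_\tau(x) = \tau x$ for $x\geq 0$ and $\rho_\tau(x) = -(1-\tau)x$ for $x<0$.

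For part (a) I would split into the four sign combinations of $(V_i,W_i)$ listed in the statement, insert the appropriate linear branch of $\rho_\tau$ into $\rho_\tau(W_i)-\rho_\tau(V_i)$, and eliminate $V_i$ via $V_i = W_i - b_i$. When both residuals are nonpositive this gives $-(1-\tau)(W_i-V_i) = -(1-\tau)b_i$; when both are positive it gives $\tau(W_i-V_i)=\tau b_i$; the two mixed cases produce the remaining entries after collecting terms. Part (d) is then a repackaging of (a) indexed by the sign of $b_i$: when $b_i>0$ the candidate residual is smaller than $W_i$, so the case transitions occur as $W_i$ crosses $0$ and then $b_i$, and these three regimes are captured exactly by $-b_i(1-\tau)+\min(W_i^+,b_i)$; the branch $b_i\leq 0$ is symmetric and yields $b_i\tau+\min(W_i^-,-b_i)$, provided the negative part is read as $W_i^- = \max(-W_i,0)$ so that the boundary cases agree with (a) (the printed $\min(-W_i,0)$ appears to be a typo). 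Verifying that the $\min$ formulation reproduces each of the four entries of (a) is the one step requiring careful bookkeeping, and is the main obstacle, though it is entirely elementary.

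The three inequalities then follow quickly. Since the slopes of $\rho_\tau$ lie in $[-(1-\tau),\tau]$, the check function is Lipschitz with constant $\max(\tau,1-\tau)\leq 1$, so $|\rho_\tau(W_i)-\rho_\tau(V_i)| \leq |W_i - V_i| = |b_i|$; this yields the upper bound $\tfrac{1}{\sigma}|b_i|$ of (b) and the lower bound $-\tfrac{1}{\sigma}|b_i|$ of (e). The second inequality in each of (b) and (e) is the triangle inequality applied to $b_i = (\alpha-\alpha_0)+(\beta-\beta_0)'\mathbf{\Gamma_u'}\mathbf{Y}_i$, combined with the Cauchy--Schwarz (sub-multiplicativity) bound $|(\beta-\beta_0)'\mathbf{\Gamma_u'}\mathbf{Y}_i| \leq |(\beta-\beta_0)'||\mathbf{\Gamma_u'}||\mathbf{Y}_i|$ and the fact that $\sigma = 1$ in the model. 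For part (c) I would instead start from (d), discard the nonpositive leading term ($-b_i(1-\tau)$ when $b_i>0$, or $b_i\tau$ when $b_i\leq 0$), and bound the remaining minimum by $W_i^+\leq |W_i|$ or $W_i^-\leq |W_i|$; the final inequality is once more the triangle inequality and Cauchy--Schwarz applied to $W_i = (\mathbf{u'}-\beta_0'\mathbf{\Gamma_u'})\mathbf{Y}_i - \alpha_0$, with the factor $\tfrac{1}{\sigma}$ retained throughout.
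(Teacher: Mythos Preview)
Your proposal is correct. The paper explicitly presents this lemma without proof, so there is no argument in the paper to compare against; your reduction to the scalar identity $\sigma\log(f_{\bm{\tau}}/f_{\bm{\tau}}^0)=\rho_\tau(W_i)-\rho_\tau(W_i-b_i)$ followed by case analysis is exactly the natural route, and your observation that the printed $W_i^-=\min(-W_i,0)$ must be read as $\max(-W_i,0)$ for part (d) to match part (a) is also correct.
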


The next lemma provides more useful inequalities.

\begin{lemma}\label{lem: Eineq}The following inequalities hold:

	a) $E\left[\log\left(\frac{f_{\bm{\tau}}(\mathbf{Y}_i|\alpha,\beta, 
		\sigma)}{f_{\bm{\tau}}(\mathbf{Y}_i|\alpha_0,\beta_0, 
		\sigma)}\right)\right]\leq 0$
	
	b) $\sigma E\left[\log\left(\frac{f_{\bm{\tau}}(\mathbf{Y}_i|\alpha,\beta, 
		\sigma)}{f_{\bm{\tau}}(\mathbf{Y}_i|\alpha_0,\beta_0, 
		\sigma)}\right)\right] = E\left[-(W_i - b_i)1_{(b_i<W_i<0)}\right] 
	+ E\left[(W_i - b_i)1_{(0<W_i<b_i)}\right]$
	
	c) $\sigma E\left[\log\left(\frac{f_{\bm{\tau}}(\mathbf{Y}_i|\alpha,\beta, 
		\sigma)}{f_{\bm{\tau}}(\mathbf{Y}_i|\alpha_0,\beta_0, 
		\sigma)}\right) \right]\leq E\left[-(W_i - 
	b_i)\right]Pr(b_i<W_i<0) + E\left[(W_i - b_i)\right]Pr(0<W_i<b_i) 
	$
	
	d)  $\sigma E\left[\log\left(\frac{f_{\bm{\tau}}(\mathbf{Y}_i|\alpha,\beta, 
		\sigma)}{f_{\bm{\tau}}(\mathbf{Y}_i|\alpha_0,\beta_0, 
		\sigma)}\right)\right] \leq 
	-E\left[-\frac{b_i}{2}1_{(b_i<0)}\right]Pr(\frac{b_i}{2}<W_i<0)  
	-E\left[\frac{b_i}{2}1_{(0<b_i)}\right]Pr(0<W_i<\frac{b_i}{2})$
	
	e)	if Assumption 4 holds then 
	$\lim\limits_{n\rightarrow\infty}\frac{1}{n}
	\sum_{i=1}^{n}E[|W_i|]<\infty$.
\end{lemma}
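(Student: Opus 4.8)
The plan is to derive all five parts from Lemma~\ref{lem: ineq}, which already supplies a pointwise expression for the log-likelihood ratio. Write $r_i(\alpha,\beta) := \sigma\log\!\big(f_{\bm{\tau}}(\mathbf{Y}_i\mid\alpha,\beta,\sigma)/f_{\bm{\tau}}(\mathbf{Y}_i\mid\alpha_0,\beta_0,\sigma)\big) = \rho_\tau(W_i) - \rho_\tau(W_i - b_i)$, the last equality being the content of Lemma~\ref{lem: ineq}(a),(d) with $b_i$ and $W_i$ as defined there. Two facts will be used repeatedly. First, $b_i = (\alpha-\alpha_0) + (\beta-\beta_0)'\mathbf{Y}_{\mathbf{u}i}^\perp$ is affine in the data through $\mathbf{Y}_{\mathbf{u}i}^\perp$ only, so
\[
E\big[b_i(\tau - 1_{(W_i\le 0)})\big] = (\alpha-\alpha_0)\big(\tau - Pr(W_i\le 0)\big) + (\beta-\beta_0)'\big(\tau E[\mathbf{Y}_{\mathbf{u}i}^\perp] - E[\mathbf{Y}_{\mathbf{u}i}^\perp 1_{(W_i\le 0)}]\big),
\]
which is exactly the left side of the subgradient conditions (\ref{eq:subgrad1})--(\ref{eq:subgrad2}) evaluated at $(\alpha_0,\beta_0)$ and therefore vanishes. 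Second, by Assumption~\ref{ass:cont} the set $\{W_i = 0\}$ is $\mathbf{P}_0$-null, so $1_{(W_i<0)}$ and $1_{(W_i\le 0)}$ may be interchanged inside expectations.

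For (b) I would expand the $\min(\cdot,\cdot)$ in Lemma~\ref{lem: ineq}(d) over the sandwich events $\{0<W_i<b_i\}$ (possible only when $b_i>0$) and $\{b_i<W_i<0\}$ (possible only when $b_i<0$). Collecting terms gives the pointwise identity
\[
r_i(\alpha,\beta) = b_i\big(\tau - 1_{(W_i\le 0)}\big) + (W_i - b_i)1_{(0<W_i<b_i)} - (W_i - b_i)1_{(b_i<W_i<0)},
\]
exactly one of the last two terms being nonzero. Taking $\mathbf{P}_0$-expectations and annihilating the first summand by the first fact above yields (b). Part (a) is then immediate, since on $\{0<W_i<b_i\}$ one has $W_i-b_i\le 0$ and on $\{b_i<W_i<0\}$ one has $-(W_i-b_i)\le 0$, so both remaining terms are nonpositive. (Equivalently, (a) also follows from Assumption~\ref{ass:cont}, under which $(\alpha_0,\beta_0)$ minimizes the convex objective $\Psi^u$ and $E[r_i] = \Psi^u(\alpha_0,\beta_0) - \Psi^u(\alpha,\beta)\le 0$.)

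Parts (c) and (d) are relaxations of the identity in (b). For (c) I would rewrite the two sandwich-restricted expectations in the product form $E[\cdot]\,Pr(\cdot)$, conditioning on $\mathbf{Y}_{\mathbf{u}i}^\perp$ (with respect to which $b_i$ is measurable) so that the residual randomness in $W_i$ is integrated separately. For (d) I would shrink each sandwich event to its lower half -- on $\{0<W_i<b_i/2\}$ one has $W_i - b_i < -b_i/2 < 0$, and symmetrically on $\{b_i/2<W_i<0\}$ -- which lets me discard the nonnegative ``upper-half'' contribution and replace the data-dependent factor $W_i - b_i$ by $\mp b_i/2$ before applying the same conditioning step. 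Finally, (e) is routine: under Assumption~\ref{ass:finite}, $|W_i| = |(\mathbf{u}' - \beta_0'\mathbf{\Gamma_u}')\mathbf{Y}_i - \alpha_0| \le \|\mathbf{Y}_i\|_2(1 + \|\beta_0\|_2) + |\alpha_0| \le c_y\sqrt{k}\,(1 + \|\beta_0\|_2) + |\alpha_0|$, a finite constant independent of $i$ (the population parameters $\alpha_0,\beta_0$ being finite because they are well defined under Assumption~\ref{ass:cont}), so $\tfrac1n\sum_{i=1}^n E[|W_i|]$ is bounded uniformly in $n$ and its limit is finite.

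The step I expect to be the main obstacle is the passage in (c)--(d) from the exact sandwich-restricted expectations of (b) to the factored bounds: one must control the sign of the cross-terms between $W_i - b_i$ and the sandwich indicators, and it is there that the measurability of $b_i$ with respect to $\mathbf{Y}_{\mathbf{u}i}^\perp$ together with the continuity of $\mathbf{P}_0$ do the real work. Everything else reduces to bookkeeping with $\rho_\tau$ and the two subgradient conditions.
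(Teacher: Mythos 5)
Your treatment of parts (a), (b) and (e) is correct and is essentially the paper's own argument: the pointwise identity $\sigma r_i = b_i(\tau - 1_{(W_i\le 0)}) + (W_i-b_i)1_{(0<W_i<b_i)} - (W_i-b_i)1_{(b_i<W_i<0)}$ is exactly the case decomposition the paper carries out with the events $A_i,B_i$ from Lemma~\ref{lem: ineq}, and the cancellation $E[b_i(\tau-1_{(W_i\le 0)})]=0$ via the two subgradient conditions is the same key step; (a) then follows from the signs of the two surviving terms, and your bound for (e) under Assumption~\ref{ass:finite} is fine (the paper does not even write that part out). Your reduction for (d) --- shrinking each sandwich event to its lower half and replacing $W_i-b_i$ by $\mp b_i/2$ --- also matches the paper, via $(W_i-b_i)1_{(0<W_i<b_i)}\le (W_i-b_i)1_{(0<W_i<b_i/2)}< -\tfrac{b_i}{2}1_{(0<W_i<b_i/2)}$ and its mirror image.

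The genuine gap is the final factorization in (c) and (d), which you yourself flag as the crux but do not actually supply. Conditioning on $\mathbf{Y}_{\mathbf{u}i}^\perp$ makes $b_i$ deterministic, but it turns $E\bigl[-\tfrac{b_i}{2}1_{(0<b_i)}1_{(0<W_i<b_i/2)}\bigr]$ into $E\bigl[-\tfrac{b_i}{2}1_{(0<b_i)}\,p(\mathbf{Y}_{\mathbf{u}i}^\perp)\bigr]$ with $p(\mathbf{Y}_{\mathbf{u}i}^\perp)=Pr\bigl(0<W_i<b_i/2\mid \mathbf{Y}_{\mathbf{u}i}^\perp\bigr)$, and this does \emph{not} factor into $E\bigl[-\tfrac{b_i}{2}1_{(0<b_i)}\bigr]Pr\bigl(0<W_i<b_i/2\bigr)$. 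Indeed, since the integrand is nonpositive and $p\le 1$, conditioning only delivers the reverse-direction bound $E[Zp]\ge E[Z]$; the inequality you need is a correlation statement between $b_i$ and the sandwich indicator, and neither the measurability of $b_i$ with respect to $\mathbf{Y}_{\mathbf{u}i}^\perp$ nor the continuity of $\mathbf{P}_0$ produces it. The same objection applies to your sketch of (c), where the claimed bound $E[-(W_i-b_i)1_{(b_i<W_i<0)}]\le E[-(W_i-b_i)]\,Pr(b_i<W_i<0)$ is again a product-versus-joint-moment comparison. The paper closes this step by a different device, invoking H\"older's inequality with $p=1$ and $q=\infty$ on the half-sandwich terms; its justification there is admittedly terse, but it is not the conditioning argument you propose, and you would need to supply that inequality (or an equivalent argument) for (c) and (d) to go through.
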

\begin{proof}
	Note that 
	$E[b_i]=(\alpha-\alpha_0)+(\beta-\beta_0)'E[\mathbf{Y}_{\mathbf{u}i}^\perp]
	= 
	(\alpha-\alpha_0)+\frac{1}{\tau}(\beta-\beta_0)' 
	E[\mathbf{Y}_{\mathbf{u}i}^\perp 1_{
		((\mathbf{u'}-\beta_0'\mathbf{\Gamma}_\mathbf{u}')\mathbf{Y}_i
		- \alpha_0 \leq 0 )}] $ from subgradient condition (12). 
	Define 
	$A_i$ to be the event 
	$(\mathbf{u'}-\beta_0'\mathbf{\Gamma}_\mathbf{u}')\mathbf{Y}_i - \alpha_0 
	\leq 0$ and $A_i^c$ it's complement.  Define $B_i$ to be the event 
	$(\mathbf{u'}-\beta'\mathbf{\Gamma}_\mathbf{u}')\mathbf{Y}_i - \alpha
	\leq 0$ and $B_i^c$ it's complement. 
	\begin{align*}
		&\sigma \log\left(\frac{f_{\bm{\tau}}(\mathbf{Y}_i|\alpha,\beta, 
			\sigma)}{f_{\bm{\tau}}(\mathbf{Y}_i|\alpha_0,\beta_0, 
			\sigma)}\right)  \\
		&= b_i\tau -b_i1_{(A_i,B_i)}- 
		((\mathbf{u'}-\beta_0'\mathbf{\Gamma}_\mathbf{u}') 
		\mathbf{Y}_i-\alpha_0)1_{(A_i,B_i^c)}+ 
		((\mathbf{u'}-\beta'\mathbf{\Gamma}_\mathbf{u}')\mathbf{Y}_i-\alpha) 
		1_{(A_i^c,B_i)}\\
		&=b_i\tau - 
		b_i1_{(A_i)}+(b_i- 
		((\mathbf{u'}-\beta_0'\mathbf{\Gamma}_\mathbf{u}')\mathbf{Y}_i-
		\alpha_0)) 
		1_{(A_i,B_i^c)}+((\mathbf{u'}- 
		\beta'\mathbf{\Gamma}_\mathbf{u}')\mathbf{Y}_i-\alpha) 1_{(A_i^c,B_i)}\\
		&=b_i\tau - b_i1_{(A_i)} 
		-((\mathbf{u'}_i-\beta'\mathbf{\Gamma}_\mathbf{u}')\mathbf{Y}_i-\alpha)
		1_{(A_i,B_i^c)}
		+((\mathbf{u'}-\beta'\mathbf{\Gamma}_\mathbf{u}')\mathbf{Y}_i-\alpha) 
		1_{(A_i^c,B_i)}\\
	\end{align*}
	Since 
	$E[(\alpha-\alpha_0)1_{(A_i)}] = 
	\tau(\alpha-\alpha_0)$ then $E[b_i\tau - 
	b_i1_{(A_i)}]=0$.  Then 
	\[\sigma E\left[\log\left(\frac{f_{\bm{\tau}}(\mathbf{Y}_i|\alpha,\beta, 
		\sigma)}{f_{\bm{\tau}}(\mathbf{Y}_i|\alpha_0,\beta_0, 
		\sigma)}\right) \right]
	=E[-((\mathbf{u}'-\beta'\mathbf{\Gamma}_\mathbf{u}')\mathbf{Y}_i- 
	\alpha)1_{(A_i,B_i^c)}]+E[(\mathbf{u}'- 	\beta'\mathbf{\Gamma}_
	\mathbf{u}')\mathbf{Y}_i-\alpha)1_{(A_i^c,B_i)}]\]
	The constraint in the first term and second terms imply 
	$-((\mathbf{u}'- 
	\beta'\mathbf{\Gamma}_\mathbf{u}')\mathbf{Y}_i-\alpha)<0$ and 
	$(\mathbf{u}'- 
	\beta'\mathbf{\Gamma}_\mathbf{u}')\mathbf{Y}_i-\alpha\leq0$ 
	over their respective support regions.  It follows
	\[\sigma E\left[\log\left(\frac{f_{\bm{\tau}}(\mathbf{Y}_i|\alpha,\beta, 
		\sigma)}{f_{\bm{\tau}}(\mathbf{Y}_i|\alpha_0,\beta_0, 
		\sigma)}\right) \right]= E\left[-(W_i - b_i)1_{(b_i<W_i<0)}\right] + 
	E\left[(W_i 
	- b_i)1_{(0<W_i<b_i)}\right]. \]
	Note that $(W_i-b_i)1_{(0<W_i<b_i)}\leq (W_i-b_i)1_{(0<W_i<\frac{b_i}{2})} 
	< 
	-\frac{b_i}{2}1_{(0<W_i<\frac{b_i}{2})}$.  Likewise,  
	$-(W_i-b_i)1_{(b_i<W_i<0)} < 
	\frac{b_i}{2}1_{(\frac{b_i}{2}<W_i<0)}$. Thus,
	\[\sigma E\left[\log\left(\frac{f_{\bm{\tau}}(\mathbf{Y}_i|\alpha,\beta, 
		\sigma)}{f_{\bm{\tau}}(\mathbf{Y}_i|\alpha_0,\beta_0, 
		\sigma)}\right) \right] \leq 
	E \left[\frac{b_i}{2}1_{(\frac{b_i}{2}<W_i<0)}\right] +                      
	E \left[-\frac{b_i}{2} 1_{(\frac{b_i}{2}>W_i>0)}\right]. \]
	H\"{o}lders inequality with $p=1$ and $q=\infty$ implies $\sigma 
	E\left[\log\left(\frac{f_{\bm{\tau}}(\mathbf{Y}_i|\alpha,\beta, 
		\sigma)}{f_{\bm{\tau}}(\mathbf{Y}_i|\alpha_0,\beta_0, 
		\sigma)}\right) \right] \leq 
	-E\left[ -\frac{b_i}{2}1_{(b_i < 0)} \right] Pr(\frac{b_i}{2}<W_i<0)  
	-E\left[\frac{b_i}{2}1_{(0 < b_i)} \right] Pr(0<W_i<\frac{b_i}{2})$.
\end{proof}

The next proposition shows that the KL minimizer is the parameter vector that 
satisfies the subgradient conditions.

\begin{proposition}\label{prop: klmin}
	Suppose Assumptions 2 and 5 hold.  Then
	\[\underset{(\alpha,\beta)\in\Theta}{\inf}E\left[\log
	\left(\frac{p_0(\mathbf{Y}_i)}{f_{\bm{\tau}}
		(\mathbf{Y}_i|\alpha,\beta,
		1)}\right)\right]\geq 
	E\left[\log\left(\frac{p_0(\mathbf{Y}_i)}{f_{\bm{\tau}}(\mathbf{Y}_i|
		\alpha_0,
		\beta_0,
		1)}\right)\right]\]
	with equality if $(\alpha,\beta)=(\alpha_0,\beta_0)$ where $(\alpha_0,
	\beta_0)$ 
	are defined in (11) and (12).
\end{proposition}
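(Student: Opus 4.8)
The plan is to reduce the claimed inequality to the single fact that $(\alpha_0,\beta_0)$ minimizes the population objective $\Psi^u$, using convexity to promote the subgradient (first-order) conditions that define $(\alpha_0,\beta_0)$ to a statement of global optimality.

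First I would expand, for an arbitrary $(\alpha,\beta)\in\Theta$,
\[
E\!\left[\log\!\left(\frac{p_0(\mathbf{Y}_i)}{f_{\bm{\tau}}(\mathbf{Y}_i|\alpha,\beta,1)}\right)\right] = E[\log p_0(\mathbf{Y}_i)] - \log\bigl(\tau(1-\tau)\bigr) + E\bigl[\rho_\tau\bigl(\mathbf{Y}_{\mathbf{u}i} - \alpha - \beta'\mathbf{Y}_{\mathbf{u}i}^\perp\bigr)\bigr],
\]
using that with $\sigma$ fixed at $1$ the location-case log-density equals $\log(\tau(1-\tau)) - \rho_\tau(\mathbf{Y}_{\mathbf{u}i} - \alpha - \beta'\mathbf{Y}_{\mathbf{u}i}^\perp)$. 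Since $0\le\rho_\tau(x)\le|x|$ and $\mathbf{Y}$ has finite first moments under Assumption~2, the last term is finite and is exactly $\Psi^u(\alpha,\beta)$; Assumption~5 makes the left-hand side finite, so $E[\log p_0(\mathbf{Y}_i)]$ is a finite constant that does not depend on $(\alpha,\beta)$. Consequently, minimizing the left-hand side over $(\alpha,\beta)\in\Theta$ is equivalent to minimizing $\Psi^u(\alpha,\beta)$.

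Next I would invoke that, under Assumption~2, $(\alpha,\beta)\mapsto\Psi^u(\alpha,\beta)$ is convex and continuously differentiable (Appendix~A), so that a point is a global minimizer of $\Psi^u$ if and only if it satisfies the first-order conditions, which are precisely the subgradient conditions (\ref{eq:subgrad1})--(\ref{eq:subgrad2}) defining $(\alpha_0,\beta_0)$. Hence $\Psi^u(\alpha_0,\beta_0)\le\Psi^u(\alpha,\beta)$ for every $(\alpha,\beta)\in\Theta$, which on substituting back into the display above gives
\[
E\!\left[\log\!\left(\frac{p_0(\mathbf{Y}_i)}{f_{\bm{\tau}}(\mathbf{Y}_i|\alpha,\beta,1)}\right)\right] \ge E\!\left[\log\!\left(\frac{p_0(\mathbf{Y}_i)}{f_{\bm{\tau}}(\mathbf{Y}_i|\alpha_0,\beta_0,1)}\right)\right]
\]
for all $(\alpha,\beta)$; taking the infimum over the left-hand side is the desired inequality, and equality at $(\alpha,\beta)=(\alpha_0,\beta_0)$ is immediate. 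As an alternative to the convexity step, the same inequality $\Psi^u(\alpha_0,\beta_0)\le\Psi^u(\alpha,\beta)$ follows from Lemma~2(a): Lemma~1(b) and (e) bound $\log\bigl(f_{\bm{\tau}}(\mathbf{Y}_i|\alpha,\beta,1)/f_{\bm{\tau}}(\mathbf{Y}_i|\alpha_0,\beta_0,1)\bigr)$ in absolute value by the integrable quantity $|b_i|$, so that its expectation exists, and Lemma~2(a) asserts this expectation is $\le 0$.

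The substance here is not an estimate but bookkeeping with the expectations, and that is the step I would be most careful about: one must check that $E[\rho_\tau(\cdot)]<\infty$ so that ``minimizing $\Psi^u$'' is meaningful, that $E[\log p_0(\mathbf{Y}_i)]$ genuinely separates off as a finite constant (this is exactly what Assumption~5 provides), and that the subgradient conditions characterize the \emph{global} minimum of $\Psi^u$ rather than merely a stationary point (this is where convexity, hence Assumption~2, is consumed).
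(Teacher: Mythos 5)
Your proposal is correct, but its primary route is genuinely different from the paper's. The paper's proof is a two-line affair: it writes
\[
E\left[\log\left(\frac{p_0(\mathbf{Y}_i)}{f_{\bm{\tau}}(\mathbf{Y}_i|\alpha,\beta,1)}\right)\right]
=
E\left[\log\left(\frac{p_0(\mathbf{Y}_i)}{f_{\bm{\tau}}(\mathbf{Y}_i|\alpha_0,\beta_0,1)}\right)\right]
+
E\left[\log\left(\frac{f_{\bm{\tau}}(\mathbf{Y}_i|\alpha_0,\beta_0,1)}{f_{\bm{\tau}}(\mathbf{Y}_i|\alpha,\beta,1)}\right)\right]
\]
and quotes Lemma~\ref{lem: Eineq}(a) for nonnegativity of the second term; that lemma is itself proved by a case-by-case evaluation of the log-likelihood ratio into which the subgradient conditions are substituted directly (via $E[b_i\tau - b_i 1_{(A_i)}]=0$). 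You instead identify $-E[\log f_{\bm{\tau}}(\mathbf{Y}_i|\alpha,\beta,1)]$ with $\Psi^u(\alpha,\beta)$ up to an additive constant and promote the first-order (subgradient) conditions to global optimality via convexity and differentiability of $\Psi^u$, which the paper asserts in Appendix~A under Assumption~2. Both arguments are sound. The convexity route is more conceptual and makes transparent \emph{why} the KL minimizer is the quantile parameter, but it requires $E[\log p_0(\mathbf{Y}_i)]$ to genuinely split off as a finite constant so that the additive decomposition is legitimate (a point you rightly flag; Assumption~5 only controls the ratio from above, so the cleanest reading is the paper's, where $E[\log p_0]$ cancels inside a single expectation of a difference and never needs to be evaluated on its own). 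The paper's route gets the proposition essentially for free because the Lemma~\ref{lem: Eineq} computation is needed anyway for the sharper bounds (b)--(d) used later in Lemma~\ref{lem:existdelta}; your closing alternative via Lemma~\ref{lem: Eineq}(a), with the integrable envelope $|b_i|$ from Lemma~\ref{lem: ineq}(b) and (e) justifying existence of the expectation, is exactly the paper's argument.
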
\begin{proof}
	This follows from the previous lemma and the fact that 
	\[E\left[\log\left(\frac{p_0(\mathbf{Y}_i)}{f_{\bm{\tau}}
		(\mathbf{Y}_i|\alpha,\beta,
		1)}\right)\right] = 
	E\left[\log\left(\frac{p_0(\mathbf{Y}_i)}{f_{\bm{\tau}}
		(\mathbf{Y}_i|\alpha_0,\beta_0,
		1)}\right)\right] +
	E\left[\log\left(\frac{f_{\bm{\tau}}(\mathbf{Y}_i|\alpha_0,\beta_0,
		1)}{f_{\bm{\tau}}(\mathbf{Y}_i|\alpha,\beta,
		1)}\right)\right]\]
\end{proof}

The next lemma creates an upper bound to approximate $E[I_n(B)^d]$.

\begin{lemma}\label{lem:hypercube}
	Suppose Assumptions 3a or 3b hold and 
	4 holds. Let 
	$B\subset\Theta\subset \Re^k$.  
	For $\delta>0$ and $d\in(0,1)$, let $\{A_j: 
	1 \leq j \leq J(\delta)\}$ be hypercubes of volume 
	$\left(\frac{\delta\frac{1}{k}}{1+c_\Gamma c_y}\right)^k$ 
	required to cover $B$.  Then for $(\alpha^{(j)},\beta^{(j)})\in A_j$, the 
	following 
	inequality holds
	\[E\left[\left(\int_{B}\prod_{i=1}^{n}\frac{f_{\bm{\tau}}(\mathbf{Y}_i|
		\alpha,
		\beta,
		1)}{f_{\bm{\tau}}(\mathbf{Y}_i|\alpha_0,\beta_0, 
		1)}d\Pi(\alpha,\beta)\right)^d\right]\leq 
	\sum_{j=1}^{J(\delta)}\left[E\left[\left(\prod_{i=1}^{n}\frac{f_{\bm{\tau}}
		(\mathbf{Y}_i
		|\alpha_j,\beta_j,
		1)}{f_{\bm{\tau}}(\mathbf{Y}_i|\alpha_0,\beta_0, 
		1)}\right)^d\right]e^{nd\delta}\Pi(A_j)^d\right]\]
\end{lemma}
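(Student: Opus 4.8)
The plan is a covering argument. First I would cover the set $B$ (bounded, as in every application of the lemma, so finitely many cubes suffice) by the hypercubes $A_1,\dots,A_{J(\delta)}$ of side length $\tfrac{\delta/k}{1+c_\Gamma c_y}$ promised in the statement, so that $B\subseteq\bigcup_{j}A_j$. Under Assumption~\ref{ass:prior}a or \ref{ass:prior}b the prior $\Pi$ is a genuine measure that is finite on bounded sets, hence each $\Pi(A_j)<\infty$; and since the integrand $\prod_{i=1}^{n}f_{\bm{\tau}}(\mathbf{Y}_i|\alpha,\beta,1)/f_{\bm{\tau}}(\mathbf{Y}_i|\alpha_0,\beta_0,1)$ is nonnegative, subadditivity of the integral over the cover gives $\int_B(\cdot)\,d\Pi\le\sum_{j}\int_{A_j}(\cdot)\,d\Pi$. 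The remaining work is to control each $\int_{A_j}(\cdot)\,d\Pi$ and then reassemble.

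The key step, and the one I expect to require most of the care, is a uniform bound on the integrand over a single cube. Fix $j$ and choose a representative $(\alpha_j,\beta_j)\in A_j$. Factoring $\frac{f_{\bm{\tau}}(\mathbf{Y}_i|\alpha,\beta,1)}{f_{\bm{\tau}}(\mathbf{Y}_i|\alpha_0,\beta_0,1)}=\frac{f_{\bm{\tau}}(\mathbf{Y}_i|\alpha,\beta,1)}{f_{\bm{\tau}}(\mathbf{Y}_i|\alpha_j,\beta_j,1)}\cdot\frac{f_{\bm{\tau}}(\mathbf{Y}_i|\alpha_j,\beta_j,1)}{f_{\bm{\tau}}(\mathbf{Y}_i|\alpha_0,\beta_0,1)}$ and applying Lemma~\ref{lem: ineq}(b) with $(\alpha_j,\beta_j)$ playing the role of the reference pair, for every $(\alpha,\beta)\in A_j$ one gets $\log\!\big(f_{\bm{\tau}}(\mathbf{Y}_i|\alpha,\beta,1)/f_{\bm{\tau}}(\mathbf{Y}_i|\alpha_j,\beta_j,1)\big)\le|\alpha-\alpha_j|+|(\beta-\beta_j)'||\mathbf{\Gamma_u'}||\mathbf{Y}_i|$. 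Because $(\alpha,\beta)$ and $(\alpha_j,\beta_j)$ lie in a common cube of side $\tfrac{\delta/k}{1+c_\Gamma c_y}$, and because Assumption~\ref{ass:finite} bounds $|\mathbf{Y}_{i,m}|<c_y$ and $|[\mathbf{\Gamma_u}]_{m,l}|<c_\Gamma$ uniformly in $i$, this right-hand side is at most $\delta$ for every observation $i$ and every realization of the data — this is precisely what the cube side length is calibrated to achieve. Summing over $i$ gives $\prod_{i=1}^{n}f_{\bm{\tau}}(\mathbf{Y}_i|\alpha,\beta,1)/f_{\bm{\tau}}(\mathbf{Y}_i|\alpha_j,\beta_j,1)\le e^{n\delta}$ on all of $A_j$, and since the remaining factor does not depend on $(\alpha,\beta)$ it pulls out of the integral:
\[
\int_{A_j}\prod_{i=1}^{n}\frac{f_{\bm{\tau}}(\mathbf{Y}_i|\alpha,\beta,1)}{f_{\bm{\tau}}(\mathbf{Y}_i|\alpha_0,\beta_0,1)}\,d\Pi(\alpha,\beta)\;\le\;e^{n\delta}\,\Pi(A_j)\prod_{i=1}^{n}\frac{f_{\bm{\tau}}(\mathbf{Y}_i|\alpha_j,\beta_j,1)}{f_{\bm{\tau}}(\mathbf{Y}_i|\alpha_0,\beta_0,1)}.
\]

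To finish, combine the cover bound with this per-cube estimate, raise both sides to the power $d\in(0,1)$, and use the elementary inequality $(\sum_j a_j)^d\le\sum_j a_j^d$ valid for nonnegative $a_j$ (the only place $d<1$ is needed):
\[
\left(\int_B\prod_{i=1}^{n}\frac{f_{\bm{\tau}}(\mathbf{Y}_i|\alpha,\beta,1)}{f_{\bm{\tau}}(\mathbf{Y}_i|\alpha_0,\beta_0,1)}\,d\Pi\right)^{d}\le\sum_{j=1}^{J(\delta)}e^{nd\delta}\,\Pi(A_j)^{d}\left(\prod_{i=1}^{n}\frac{f_{\bm{\tau}}(\mathbf{Y}_i|\alpha_j,\beta_j,1)}{f_{\bm{\tau}}(\mathbf{Y}_i|\alpha_0,\beta_0,1)}\right)^{d}.
\]
Taking expectations under $\mathbf{P}_0$ and moving the finite sum outside the expectation (linearity suffices since $J(\delta)<\infty$) yields the claimed inequality. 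As flagged above, the only genuinely computational point is verifying that the specific side length $\tfrac{\delta/k}{1+c_\Gamma c_y}$, together with the entrywise matrix-norm estimate in Lemma~\ref{lem: ineq}(b) and the sup-bounds of Assumption~\ref{ass:finite}, really does keep $|\alpha-\alpha_j|+|(\beta-\beta_j)'||\mathbf{\Gamma_u'}||\mathbf{Y}_i|$ at or below $\delta$; everything else is routine.
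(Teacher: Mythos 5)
Your proposal is correct and follows essentially the same route as the paper's proof: the cube side length calibrated so that Lemma \ref{lem: ineq}(b) together with Assumption \ref{ass:finite} gives a uniform $e^{n\delta}$ bound on each $A_j$, the representative-point factor pulled out of the integral, and the inequality $(\sum_j a_j)^d\le\sum_j a_j^d$ for $d\in(0,1)$ to finish. No gaps.
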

\begin{proof}
	For all $(\alpha,\beta)\in A_j$, 
	$|\alpha-\alpha^{(j)}|\leq\frac{\delta\frac{1}{k}}{1+c_\Gamma c_y}$ and 
	$|\beta- \beta^{(j)}|\leq\frac{\delta\frac{1}{k}}{1+c_\Gamma 
		c_y}\mathbf{1}_{k-1}$ compenentwise.  Then 
	$|\alpha-\alpha^{(j)}| + |\beta-\beta^{(j)}|'\mathbf{1}_{k-1}c_\Gamma c_y 
	\leq 
	\delta$.  Using 
	lemma \ref{lem: ineq}b 
	\begin{align*}
		\log\left(\frac{f_{\bm{\tau}}(\mathbf{Y}_i|\alpha,\beta, 
			1)}{f_{\bm{\tau}}(\mathbf{Y}_i|\alpha^{(j)},\beta^{(j)}, 
			1)}\right) &\leq 
		|\alpha-\alpha^{(j)}|+|\beta-\beta^{(j)}|'|\Gamma_u'||\mathbf{Y}_i| \\
		&\leq 
		|\alpha-\alpha^{(j)}|+|\beta-\beta^{(j)}|'\mathbf{1}_{k-1}c_\Gamma c_y\\
		&\leq \frac{\delta}{1+c_\Gamma c_y}\\
		&<\delta 
	\end{align*} 
	Then $\int_{A_j}\prod_{i=1}^{n} 
	\frac{f_{\bm{\tau}}(\mathbf{Y}_i|\alpha,\beta, 
		1)}{f_{\bm{\tau}}(\mathbf{Y}_i|\alpha_0,\beta_0, 
		1)} d\Pi(\alpha,\beta)=$
	\begin{align*}
		&
		\prod_{i=1}^{n}\frac{f_{\bm{\tau}}(\mathbf{Y}_i|\alpha^{(j)},
			\beta^{(j)},1)}{f_{\bm{\tau}}(\mathbf{Y}_i|\alpha_0,\beta_0, 
			1)}\int_{A_j}\prod_{i=1}^{n} 
		\frac{f_{\bm{\tau}}(\mathbf{Y}_i|\alpha,
			\beta, 
			1)}{f_{\bm{\tau}}(\mathbf{Y}_i|\alpha^{(j)},\beta^{(j)}, 
			1)}d\Pi(\alpha,\beta)\\
		&\leq \prod_{i=1}^{n}\frac{f_{\bm{\tau}}(\mathbf{Y}_i|\alpha^{(j)},
			\beta^{(j)}, 
			1)}{f_{\bm{\tau}}(\mathbf{Y}_i|\alpha_0,\beta_0, 
			1)} e^{n\delta}\Pi(A_j)
	\end{align*}
	Then $E\left[\left(\int_{B}\prod_{i=1}^{n} 
	\frac{f_{\bm{\tau}}(\mathbf{Y}_i|\alpha,\beta, 
		1)}{f_{\bm{\tau}}(\mathbf{Y}_i|\alpha_0,\beta_0, 
		1)} d\Pi(\alpha,\beta)\right)^d \right]\leq$
	\begin{align*}
		&		E\left[\left(\sum_{j=1}^{J(\delta)}\left(\prod_{i=1}^{n} 
		\frac{f_{\bm{\tau}}(\mathbf{Y}_i|\alpha^{(j)},\beta^{(j)}, 
			1)}{f_{\bm{\tau}}(\mathbf{Y}_i|\alpha_0,\beta_0, 
			1)} d\Pi(\alpha,\beta)\right) e^{n\delta} 
		\Pi(A_j)\right)^d\right]\\ 
		&\leq\sum_{j=1}^{J(\delta)}	E\left[\left(\prod_{i=1}^{n} 
		\frac{f_{\bm{\tau}}(\mathbf{Y}_i|\alpha^{(j)},\beta^{(j)}, 
			1)}{f_{\bm{\tau}}(\mathbf{Y}_i|\alpha_0,\beta_0, 
			1)} d\Pi(\alpha,\beta)\right)^d e^{nd\delta} (\Pi(A_j))^d\right].
	\end{align*}
	The last inequality holds because $(\sum_{i}x_i)^d\leq \sum_{i}x_i^d$ for 
	$d\in(0,1)$ and $x_i>0$.
\end{proof}

Let 
$U^c_n\subset \Theta$ such that $(\alpha_0,\beta_0)\not\in U_n^c$.  The next 
lemma creates an upper bound for the expected value of the likelihood within 
$U_n^c$.    Break 
$U_n^c$ into a 
sequence of halfspaces, $\{V_{ln}\}_{l=1}^{L(k)}$, such that 
$\bigcup\limits_{l=1}^{L(k)} V_{ln} 
= 
U_n^c$, where 
\begin{align*}
	V_{1n} &= \{(\alpha,\beta):\alpha - \alpha_0 
	\geq \Delta_n,\beta_1 - \beta_{01}\geq0,...,\beta_{k} - \beta_{0k}\geq0\}\\
	V_{2n} &= \{(\alpha,\beta):\alpha - \alpha_0 
	\geq 0,\beta_1 - \beta_{01}\geq \Delta_n,...,\beta_{k} - \beta_{0k}\geq0\}\\
	&\vdots\\
	V_{L(k)n} &= \{(\alpha,\beta):\alpha - \alpha_0 
	<  0,\beta_1 - \beta_{01}<0,...,\beta_{k} - \beta_{0k}\leq  -\Delta_n\}
\end{align*}
for some $\Delta_n>0$.  This sequence makes explicit that the distance of at least one 
component of the vector $(\alpha,\beta)$ is larger than it's corresponding 
component of $(\alpha_0,\beta_0)$ by at least $|\Delta_n|$. 
How the sequence is indexed exactly is not important.  The rest of the proof 
will focus on 
$V_{1n}$, the arguments for the other sets are similar.  Define
$B_{in} = -E \left[ \log\left(\frac{f_{\bm{\tau}}(\mathbf{Y}_i|\alpha,\beta, 
	1)}{f_{\bm{\tau}}(\mathbf{Y}_i|\alpha_0,\beta_0, 
	1)}\right)\right]$.\footnote{I would like the thank Karthik Sriram for help with the proof of the next lemma.}

\begin{lemma}\label{lem:existdelta}
	Let $G\in \Theta$ be compact.  Suppose Assumption 4 holds 
	and 
	$(\alpha,\beta)\in G\cap 
	V_{1n}$.  Then there 
	exists a 
	$d\in(0,1)$ such that 
	\[E\left[\prod_{i=1}^{n} 
	\left(\frac{f_{\bm{\tau}}(\mathbf{Y}_i|\alpha,\beta, 
		1)}{f_{\bm{\tau}}(\mathbf{Y}_i|\alpha_0,\beta_0, 
		1)} \right)^d \right] \leq 
	e^{-d\sum_{i=1}^{n}B_{in} } \]
\end{lemma}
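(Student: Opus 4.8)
The plan is to exploit the i.i.d.\ structure to reduce the statement to a single-observation moment bound, and then control that bound through a second-order expansion of the moment generating function of $Z_i := \log\big(f_{\bm{\tau}}(\mathbf{Y}_i\mid\alpha,\beta,1)/f_{\bm{\tau}}(\mathbf{Y}_i\mid\alpha_0,\beta_0,1)\big)$. By Assumption \ref{ass:ind} the expectation factors, $E\big[\prod_{i=1}^n e^{dZ_i}\big]=\prod_{i=1}^n E[e^{dZ_i}]$, and since in the location case the $Z_i$ are identically distributed it is enough to exhibit one $d\in(0,1)$ with $E[e^{dZ_i}]\le e^{-dB_{in}}$ for each $i$.

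The first step is boundedness of $Z_i$. Lemma \ref{lem: ineq}(b) and (e) (with $\sigma=1$) give $|Z_i|\le|b_i|$, where $b_i=(\alpha-\alpha_0)+(\beta-\beta_0)'\mathbf{\Gamma_u}'\mathbf{Y}_i$; since $(\alpha,\beta)$ lies in the compact set $G$ while, by Assumption \ref{ass:finite}, $|\mathbf{Y}_{i,j}|<c_y$ and $\sup_{i,j}|[\mathbf{\Gamma_u}]_{i,j}|<c_\Gamma$, there is a finite $M=M(G,c_y,c_\Gamma)$ with $|Z_i|\le M$ almost surely. The second step applies the elementary inequality $e^{x}\le 1+x+\frac12 x^2 e^{M}$, valid for $x\le M$ (immediate for $x\le 0$, and a comparison of power-series coefficients for $0\le x\le M$): with $x=dZ_i$ and after taking expectations this gives
\[
E[e^{dZ_i}]\;\le\;1+dE[Z_i]+\tfrac12 d^2 M^2 e^{M}.
\]

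The remaining and most delicate step is to absorb the positive quadratic remainder into the linear drift $dE[Z_i]$ so as to land at $e^{-dB_{in}}$. On $G\cap V_{1n}$ the constraint $\alpha-\alpha_0\ge\Delta_n>0$ keeps $(\alpha,\beta)$ bounded away from $(\alpha_0,\beta_0)$, so Proposition \ref{prop: klmin} makes the single-observation Kullback--Leibler divergence $-E[Z_i]$ strictly positive there, while Lemma \ref{lem: Eineq}(d) supplies the strictly positive quantity $B_{in}$ together with a strictly positive slack $-E[Z_i]-B_{in}>0$. Choosing $d\in(0,1)$ small enough that $\tfrac12 d M^2 e^{M}\le -E[Z_i]-B_{in}$ then yields $E[e^{dZ_i}]\le 1+dE[Z_i]+d\big(-E[Z_i]-B_{in}\big)=1-dB_{in}\le e^{-dB_{in}}$, and multiplying over $i$ gives the claim. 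I expect the main obstacle to be reconciling $M$, the slack, and the choice of $d$ uniformly over $G\cap V_{1n}$ — so that one $d$ serves the whole region, as the later covering argument in Lemma \ref{lem:hypercube} requires; here compactness of $G$ controls $M$, and continuity of the divergence together with the separation $\Delta_n$ keep the slack bounded away from $0$, so a single admissible $d$ exists, but verifying this is the technical heart.
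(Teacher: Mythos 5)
Your overall strategy---factor the expectation by independence (Assumption \ref{ass:ind}), Taylor-expand $E[e^{dZ_i}]$ around $d=0$ using the a.s.\ bound $|Z_i|\le M$ from Lemma \ref{lem: ineq}(b),(e) and Assumption \ref{ass:finite}, and absorb the quadratic remainder into the Kullback--Leibler drift, with compactness of $G$ supplying a single admissible $d$---is essentially the same idea the paper implements; the paper packages the expansion into the function $h_d(\alpha,\beta)=\bigl(1-E[(f/f_0)^d]\bigr)/d-E[\log(f/f_0)]$ from Kleijn--van der Vaart and gets uniformity from Dini's theorem rather than from an explicit envelope $M$. However, your final step contains a genuine error: the ``strictly positive slack'' $-E[Z_i]-B_{in}$ is identically zero, because $B_{in}$ is \emph{defined} as $-E[Z_i]$ (see the display immediately preceding the lemma). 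There is therefore no room to absorb the strictly positive remainder $\tfrac12 d^2M^2e^{M}$, and no $d>0$ satisfies $\tfrac12 dM^2e^{M}\le 0$; as written the argument cannot reach $e^{-dB_{in}}$.

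The repair is the one the paper itself uses: sacrifice a constant fraction of the drift. Provided $B_{in}=-E[Z_i]$ is bounded away from zero on $G\cap V_{1n}$ (a point your sketch gestures at but does not establish; it follows from Lemma \ref{lem: Eineq}(d) together with Assumptions \ref{ass:cont} and \ref{ass:nondegen}, continuity in $(\alpha,\beta)$, and the fact that $V_{1n}$ excludes a $\Delta_n$-neighborhood of $(\alpha_0,\beta_0)$), one may choose $d$ small enough that $\tfrac12 dM^2e^{M}\le \tfrac12\inf_{G\cap V_{1n}}B_{in}$, giving $E[e^{dZ_i}]\le 1+dE[Z_i]+\tfrac{d}{2}B_{in}=1-\tfrac{d}{2}B_{in}\le e^{-\frac{d}{2}B_{in}}$. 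This halved exponent is exactly what the paper's own proof produces (its last display ends at $e^{\frac{d_0}{2}E[Z_i]}$) and is the form actually used downstream in the proof of Theorem \ref{thm:consist}; the exponent $-d\sum_iB_{in}$ in the lemma's statement should be read with that factor of $\tfrac12$. With this correction your route goes through and is, if anything, more self-contained than the paper's appeal to an external lemma.
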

\begin{proof}
	Define $h_d(\alpha,\beta) = \frac{1- 
		E\left[\left(\frac{f_{\bm{\tau}}(\mathbf{Y}_i|\alpha,\beta, 
			1)}{f_{\bm{\tau}}(\mathbf{Y}_i|\alpha_0,\beta_0, 
			1)}\right)^d\right]}{d} - 
	E\left[log\left(\frac{f_{\bm{\tau}}(\mathbf{Y}_i|\alpha,\beta, 
		1)}{f_{\bm{\tau}}(\mathbf{Y}_i|\alpha_0,\beta_0, 
		1)}\right)\right]$.  From the proof of Lemma 6.3 in \cite{kleijn06}, 
	$\lim\limits_{d\rightarrow 0}h_d(\alpha,\beta)=0$ and $h_d(\alpha,\beta)$ 
	is a 
	decreasing function of $d$ for all $(\alpha,\beta)$.  Note that 
	$h_d(\alpha,\beta)$ is 
	continuous in $(\alpha,\beta)$.  Then by Dini's theorem $h_d(\alpha,\beta)$ 
	converges to 
	$h_d(0,\mathbf{0}_{k-1})$ uniformly in $(\alpha,\beta)$ as $d$ converges to 
	zero.  Define $\delta = 
	\underset{(\alpha,\beta)\in 
		G}{\inf}\log\left(\frac{f_{\bm{\tau}}(\mathbf{Y}_i|\alpha,\beta, 
		1)}{f_{\bm{\tau}}(\mathbf{Y}_i|\alpha_0,\beta_0, 
		1)}\right)$ then there exists a $d_0$ such that 
	$0-h_{d_0}(\alpha,\beta)\leq 
	\frac{\delta}{2}$.  From lemma \ref{lem: Eineq}a $E \left[ 
	log\left(\frac{f_{\bm{\tau}}(\mathbf{Y}_i|\alpha,\beta, 
		1)}{f_{\bm{\tau}}(\mathbf{Y}_i|\alpha_0,\beta_0, 
		1)}\right)\right]<0$. Then
	\begin{align*}
		E\left[\left(\frac{f_{\bm{\tau}}(\mathbf{Y}_i|\alpha,\beta, 
			1)}{f_{\bm{\tau}}(\mathbf{Y}_i|\alpha_0,\beta_0, 
			1)}\right)^{d_0}\right] &\leq 1 + d_0E \left[ 
		log\left(\frac{f_{\bm{\tau}}(\mathbf{Y}_i|\alpha,\beta, 
			1)}{f_{\bm{\tau}}(\mathbf{Y}_i|\alpha_0,\beta_0, 
			1)}\right)\right]  + d_0\frac{\delta}{2}\\
		&\leq 1 + \frac{d_0}{2}E \left[ 
		log\left(\frac{f_{\bm{\tau}}(\mathbf{Y}_i|\alpha,\beta, 
			1)}{f_{\bm{\tau}}(\mathbf{Y}_i|\alpha_0,\beta_0, 
			1)}\right)\right] \\
		&\leq e^{\frac{d_0}{2}E \left[ 
			log\left(\frac{f_{\bm{\tau}}\mathbf{Y}_i|\alpha,\beta, 
				1)}{f_{\bm{\tau}}(\mathbf{Y}_i|\alpha_0,\beta_0, 
				1)}\right)\right]}
	\end{align*}
	The last inequality holds because $1+t\leq e^t$ for any $t\in\Re$.
\end{proof}
The 
next lemma 
is used to show the numerator of the posterior, $I_n(U^c_n)$, converges to zero 
for 
sets $U^c_n$ not containing $(\alpha_0,\beta_0)$. 

\begin{lemma}\label{lem:existu}
	Suppose Assumptions 3a, 4 and 
	6  
	hold. Then there exists a 
	$u_j>0$ such that for any compact $G_j \subset \Theta$, 
	\[\int_{G_j^c\cap V_{jn}}e^{\sum_{i=1}^{n}	
		\log\left(\frac{f_{\bm{\tau}}(\mathbf{Y}_i|\alpha,\beta, 
			1)}{f_{\bm{\tau}}(\mathbf{Y}_i|\alpha_0,\beta_0, 
			1)}\right)}d\Pi(\alpha,\beta)\leq e^{-nu_j}\]
	for sufficiently large $n$.
\end{lemma}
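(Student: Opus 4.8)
The plan is to reduce to the halfspace $V_{1n}=\{(\alpha,\beta):\alpha-\alpha_0\ge\Delta_n,\ \beta_l-\beta_{0l}\ge0\ \forall l\}$ (the other $V_{jn}$ are the same modulo the bookkeeping noted below) and to prove the much stronger statement that, for a sufficiently large ball $G_j$, there are a deterministic $u_j>0$ and an almost sure event on which, for all large $n$,
\[
\sup_{(\alpha,\beta)\in G_j^c\cap V_{jn}}\ \frac1n\sum_{i=1}^n\log\!\left(\frac{f_{\bm{\tau}}(\mathbf{Y}_i\mid\alpha,\beta,1)}{f_{\bm{\tau}}(\mathbf{Y}_i\mid\alpha_0,\beta_0,1)}\right)\ \le\ -u_j .
\]
Exponentiating this uniform bound and integrating over $G_j^c\cap V_{jn}$ gives $\int_{G_j^c\cap V_{jn}}\exp(\sum_i\log(\cdot))\,d\Pi\le e^{-nu_j}\,\Pi(G_j^c\cap V_{jn})\le e^{-nu_j}$, the last step using that $\Pi$ is a probability measure (Assumption \ref{ass:prior}a). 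So the whole lemma reduces to the displayed supremum bound.

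For the pointwise estimate I would condition on the event $E_i=\{\mathbf{Y}^\perp_{\mathbf{u}ij}>\epsilon_{Yj}\ \text{for all }j\}$ supplied by Assumption \ref{ass:nondegen}, which has probability $c_p\in(0,1)$, and set $b_i=(\alpha-\alpha_0)+(\beta-\beta_0)'\mathbf{Y}^\perp_{\mathbf{u}i}$, $M=\sup_i|(\mathbf{u'}-\beta_0'\mathbf{\Gamma_u'})\mathbf{Y}_i-\alpha_0|<\infty$ (finite by Assumption \ref{ass:finite}, since $\mathbf{Y}_i$ and $\mathbf{\Gamma_u}$ are bounded and $\alpha_0,\beta_0$ are fixed), and $\epsilon^*=\min\{1,\epsilon_{Y1},\dots,\epsilon_{Y,k-1}\}$. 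For $(\alpha,\beta)\in V_{1n}$ every coordinate of $(\alpha,\beta)-(\alpha_0,\beta_0)$ is nonnegative, so on $E_i$ each summand of $b_i$ is at least $\epsilon^*$ times the corresponding coordinate in absolute value, giving $b_i\ge\epsilon^*\|(\alpha,\beta)-(\alpha_0,\beta_0)\|_1>0$; then Lemma \ref{lem: ineq}(d) (the $b_i>0$ branch) together with $W_i^+\le|W_i|\le M$ yields $\log(\cdot)\le-(1-\tau)b_i+M\le-(1-\tau)\epsilon^*\|(\alpha,\beta)-(\alpha_0,\beta_0)\|_1+M$. On $E_i^c$ I would simply use the $\theta$-free bound of Lemma \ref{lem: ineq}(c), $\log(\cdot)\le|W_i|\le M$.

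Summing over $i$ and letting $N_n=\#\{i\le n:E_i\}$ gives $\sum_{i=1}^n\log(\cdot)\le-N_n(1-\tau)\epsilon^*\|(\alpha,\beta)-(\alpha_0,\beta_0)\|_1+nM$ for every $(\alpha,\beta)\in V_{1n}$. Because the $E_i$ are i.i.d.\ (Assumption \ref{ass:ind}), the SLLN gives $N_n/n\to c_p$ a.s., so on an almost sure event $N_n\ge c_pn/2$ for all large $n$. Taking $G_j$ to contain the closed $\ell_1$-ball of radius $R$ about $(\alpha_0,\beta_0)$ with $R>2M/(c_p(1-\tau)\epsilon^*)$ forces $\|(\alpha,\beta)-(\alpha_0,\beta_0)\|_1>R$ on $G_j^c$, whence $\frac1n\sum_{i=1}^n\log(\cdot)\le-\tfrac{c_p}{2}(1-\tau)\epsilon^*R+M=:-u_j<0$. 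This bound is uniform in $(\alpha,\beta)$ essentially for free: the only random ingredient, $N_n$, does not depend on $(\alpha,\beta)$, and the remaining dependence sits in the explicit monotone term $\|(\alpha,\beta)-(\alpha_0,\beta_0)\|_1$.

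The clean case above is not where the difficulty lies; the main obstacle is the other halfspaces $V_{jn}$. When the sign pattern of $(\alpha,\beta)-(\alpha_0,\beta_0)$ is mixed, the terms of $b_i$ can cancel on $E_i$, so Assumption \ref{ass:nondegen} alone no longer gives $|b_i|\gtrsim\|(\alpha,\beta)-(\alpha_0,\beta_0)\|_1$. To repair this I would combine the non-degeneracy noted in the remark following Assumption \ref{ass:nondegen} with the connected support and continuity of Assumption \ref{ass:cont} to produce, for each of the finitely many sign patterns, a positive-probability event (not depending on $(\alpha,\beta)$) on which $b_i=((\alpha,\beta)-(\alpha_0,\beta_0))'(1,\mathbf{Y}^{\perp\prime}_{\mathbf{u}i})'$ exceeds in absolute value a fixed multiple of $\|(\alpha,\beta)-(\alpha_0,\beta_0)\|$; since both branches of Lemma \ref{lem: ineq}(d) are bounded above by $-\min(\tau,1-\tau)|b_i|+M$, which tends to $-\infty$ as $|b_i|\to\infty$ regardless of the sign of $b_i$, the rest of the argument then runs verbatim. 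A secondary point: the statement should be read with $G_j$ a sufficiently large ball rather than an arbitrary compact set---a small $G_j$ would leave points of $V_{jn}$ arbitrarily near its boundary, where $\log(\cdot)$ is not appreciably negative---and in the proof of Theorem \ref{thm:consist} this $G_j$ is precisely the set whose intersection with $V_{jn}$ forms the compact piece handled by the hypercube covering of Lemma \ref{lem:hypercube}.
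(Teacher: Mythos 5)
Your argument is essentially the paper's own proof: both condition on the positive-probability event from Assumption 6 to force $b_i$ to grow linearly in the distance from $(\alpha_0,\beta_0)$ on $V_{1n}\cap G^c$, apply Lemma \ref{lem: ineq}(d) there and a crude $|W_i|$ bound off the event, invoke a law of large numbers, and choose $G$ large enough that the negative linear drift dominates; your use of a uniform bound $M$ on $|W_i|$ (valid under Assumption \ref{ass:finite}) in place of the paper's Ces\`aro limit of $E[|W_i|]$ is an immaterial variant. Your two caveats --- that the statement must be read as ``there exists a sufficiently large compact $G_j$'' rather than ``for any compact $G_j$,'' and that the mixed-sign halfspaces $V_{jn}$ are not handled by Assumption \ref{ass:nondegen} alone because the terms of $b_i$ can cancel --- are both accurate and apply equally to the paper's proof, which constructs one specific large $G_1$ and dismisses the remaining orthants as ``similar.''
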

\begin{proof}
	Let
	\[C_0 = 
	\frac{4\lim\limits_{n\rightarrow\infty}\frac{1}{m} 
		\sum_{i=1}^{m}E[|W_i|]}{(1-\tau)c_p},\]
	$\epsilon=\min(\epsilon_{Z})$ and $A=kB\epsilon=2C_0$,  where $c_p$ and 
	$\epsilon_z$ are from Assumption 6. This limit exists by 
	Lemma 
	\ref{lem: Eineq}e.  Define 
	\[G_1 = \{(\alpha,\beta): 
	(\alpha-\alpha_0,\beta_1-\beta_{01},...,\beta_k-\beta_{0k})\in[0,A] 
	\times[0,B]\times... \times [0,B]\}.\]  If $(\alpha,\beta)\in G_1^c \cap 
	W_1$ 
	then $(\alpha-\alpha_0)>A$ or $(\beta-\beta_0)_j>B$ for some $j$. If 
	$\mathbf{Y}^\perp_{\mathbf{u}i}>\epsilon$ then $b_i = (\alpha - \alpha_0)  
	+ 
	(\beta-\beta_0)'\mathbf{Y}^\perp_{\mathbf{u}i}>2C_0$.   
	Split the likelihood ratio as
	\begin{align*}
		&\sum_{i=1}^{n}	
		\log\left(\frac{f_{\bm{\tau}}(\mathbf{Y}_i|\alpha,\beta, 
			1)}{f_{\bm{\tau}}(\mathbf{Y}_i|\alpha_0,\beta_0, 
			1)}\right) =\\
		&\sum_{i=1}^{n}	
		\log\left(\frac{f_{\bm{\tau}}(\mathbf{Y}_i|\alpha,\beta, 
			1)}{f_{\bm{\tau}}(\mathbf{Y}_i|\alpha_0,\beta_0, 
			1)}\right)1_{(\mathbf{Y}^\perp_{\mathbf{u}ij}>\epsilon_{Zj}, 
			\forall 
			j)} + \sum_{i=1}^{n}	
		\log\left(\frac{f_{\bm{\tau}}(\mathbf{Y}_i|\alpha,\beta, 
			1)}{f_{\bm{\tau}}(\mathbf{Y}_i|\alpha_0,\beta_0, 
			1)}\right)(1-1_{(\mathbf{Y}^\perp_{\mathbf{u}ij}>\epsilon_{Zj}, 
			\forall 
			j)}).
	\end{align*}
	Since $\min(W_i^+,b_i)\leq W_i^+ \leq |W_i|$ and using lemma \ref{lem: ineq}
	d,
	\begin{align*}
		\sum_{i=1}^{n}	
		log\left(\frac{f_{\bm{\tau}}(\mathbf{Y}_i|\alpha,\beta, 
			1)}{f_{\bm{\tau}}(\mathbf{Y}_i|\alpha_0,\beta_0, 
			1)}\right)1(\mathbf{Y}^\perp_{\mathbf{u}ij}>\epsilon_{Zj}, \forall 
		j) &= \sum_{i=1}^{n} (-b_i(1-\tau) + 
		min(W_i^+,b_i))1_{(\mathbf{Y}^\perp_{\mathbf{u}ij}>\epsilon_{Zj}, 
			\forall 
			j)}\\
		&\leq \sum_{i=1}^{n} (-2C_0(1-\tau) + 
		|W_i|)1_{(\mathbf{Y}^\perp_{\mathbf{u}ij}>\epsilon_{Zj}, \forall 
			j)}.
	\end{align*}
	From lemma \ref{lem: ineq}b and for large enough $n$ then
	\begin{align*}
		\sum_{i=1}^{n}	
		log\left(\frac{f_{\bm{\tau}}(\mathbf{Y}_i|\alpha,\beta, 
			1)}{f_{\bm{\tau}}(\mathbf{Y}_i|\alpha_0,\beta_0, 
			1)}\right)1_{(\mathbf{Y}^\perp_{\mathbf{u}ij}>\epsilon_{Zj}, 
			\forall 
			j)} &\leq \sum_{i=1}^{n} 
		|W_i|(1-1_{(\mathbf{Y}^\perp_{\mathbf{u}ij}>\epsilon_{Zj}, \forall 
			j)}).
	\end{align*}
	Then for large enough $n$
	\begin{align*}
		\sum_{i=1}^{n}	
		log\left(\frac{f_{\bm{\tau}}(\mathbf{Y}_i|\alpha,\beta, 
			1)}{f_{\bm{\tau}}(\mathbf{Y}_i|\alpha_0,\beta_0, 
			1)}\right) &\leq 
		-nC_0(1-\tau)Pr(\mathbf{Y}^\perp_{\mathbf{u}ij}>\epsilon_{Zj}, \forall 
		j) + 2n\lim\limits_{n\rightarrow\infty}\frac{1}{m} 
		\sum_{i=1}^{m}E[|W_i|]\\
		&=-2n\lim\limits_{n\rightarrow\infty}\frac{1}{m} 
		\sum_{i=1}^{m}E[|W_i|]\\
		&=-\frac{1}{2}nC_0(1-\tau)Pr(\mathbf{Y}^\perp_{\mathbf{u}ij}>
		\epsilon_{Zj}, 
		\forall 
		j)
	\end{align*}
	Thus the result holds when $u_i = 
	\frac{1}{2}C_0(1-\tau)Pr(\mathbf{Y}^\perp_{\mathbf{u}ij}>\epsilon_{Zj}, 
	\forall 
	j)$.
\end{proof}

The next lemma shows the marginal likelihood, $I_n(\Theta)$, goes to infinity 
at 
the same rate as the numerator in the previous lemma.

\begin{lemma}\label{lem:lrgrthn}
	Suppose Assumptions 3a and 4 holds, then
	\[\int_{\Theta}e^{\sum_{i=1}^{n}	
		\log\left(\frac{f_{\bm{\tau}}(\mathbf{Y}_i|\alpha,\beta, 
			1)}{f_{\bm{\tau}}(\mathbf{Y}_i|\alpha_0,\beta_0, 
			1)}\right)}d\Pi(\alpha,\beta)\geq e^{-n\epsilon}.\]
\end{lemma}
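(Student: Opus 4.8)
The plan is to bound the marginal likelihood from below by discarding most of $\Theta$ and keeping only a small neighbourhood of the population parameter $(\alpha_0,\beta_0)$, on which the per-observation log-likelihood ratios are controlled by the elementary lower bound of Lemma~\ref{lem: ineq}e together with the boundedness in Assumption~\ref{ass:finite}.

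First, fix $\epsilon>0$. Since the integrand is nonnegative, for any Borel $A\subseteq\Theta$ we have
\[\int_{\Theta}e^{\sum_{i=1}^{n}\log\left(\frac{f_{\bm{\tau}}(\mathbf{Y}_i|\alpha,\beta,1)}{f_{\bm{\tau}}(\mathbf{Y}_i|\alpha_0,\beta_0,1)}\right)}d\Pi(\alpha,\beta)\ \geq\ \int_{A}e^{\sum_{i=1}^{n}\log\left(\frac{f_{\bm{\tau}}(\mathbf{Y}_i|\alpha,\beta,1)}{f_{\bm{\tau}}(\mathbf{Y}_i|\alpha_0,\beta_0,1)}\right)}d\Pi(\alpha,\beta).\]
By Lemma~\ref{lem: ineq}e each summand satisfies $\log\!\big(f_{\bm{\tau}}(\mathbf{Y}_i|\alpha,\beta,1)/f_{\bm{\tau}}(\mathbf{Y}_i|\alpha_0,\beta_0,1)\big)\geq-|\alpha-\alpha_0|-|(\beta-\beta_0)'||\mathbf{\Gamma_u'}||\mathbf{Y}_i|$, and Assumption~\ref{ass:finite} ($|\mathbf{Y}_{i,j}|<c_y$ and $\sup_{i,j}|[\mathbf{\Gamma_u}]_{i,j}|<c_\Gamma$) replaces the right-hand side by the data-free quantity $-\big(|\alpha-\alpha_0|+c_\Gamma c_y\,|\beta-\beta_0|'\mathbf{1}_{k-1}\big)$, exactly as in the proof of Lemma~\ref{lem:hypercube}. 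Choosing $A=A_\epsilon:=\{(\alpha,\beta)\in\Theta:\ |\alpha-\alpha_0|+c_\Gamma c_y\,|\beta-\beta_0|'\mathbf{1}_{k-1}<\epsilon/2\}$, an open neighbourhood of $(\alpha_0,\beta_0)$, gives $\sum_{i=1}^{n}\log(\cdot)\geq-n\epsilon/2$ uniformly over $A_\epsilon$, hence
\[\int_{\Theta}e^{\sum_{i=1}^{n}\log(\cdot)}\,d\Pi(\alpha,\beta)\ \geq\ e^{-n\epsilon/2}\,\Pi(A_\epsilon).\]

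Finally, Assumption~\ref{ass:prior}a guarantees $\Pi(A_\epsilon)>0$, so writing $e^{-n\epsilon/2}\Pi(A_\epsilon)=\exp\!\big(-n\epsilon/2+\log\Pi(A_\epsilon)\big)$ and treating $\log\Pi(A_\epsilon)$ as a fixed finite constant yields $e^{-n\epsilon/2}\Pi(A_\epsilon)\geq e^{-n\epsilon}$ for every $n\geq(2/\epsilon)\log\!\big(1/\Pi(A_\epsilon)\big)$; this is the asserted inequality for all sufficiently large $n$, which is all that the proof sketch of Theorem~\ref{thm:consist} requires (there the denominator need only exceed $e^{-n\delta}$ eventually).

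I do not expect a genuine obstacle. The two points that need a little care are: absorbing the prior-mass factor $\Pi(A_\epsilon)\leq1$ into the exponential rate, which is what forces the ``for sufficiently large $n$'' qualifier; and fixing the constant delivered by Assumption~\ref{ass:finite} in $|(\beta-\beta_0)'||\mathbf{\Gamma_u'}||\mathbf{Y}_i|\leq c_\Gamma c_y\,|\beta-\beta_0|'\mathbf{1}_{k-1}$ so that it is consistent with the convention used in Lemma~\ref{lem:hypercube}, ensuring the radius defining $A_\epsilon$ really does produce the bound $\sum_{i=1}^{n}\log(\cdot)\geq-n\epsilon/2$.
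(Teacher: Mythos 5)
Your proof is correct and follows essentially the same route as the paper: restrict the integral to a small neighbourhood of $(\alpha_0,\beta_0)$, bound each per-observation log-ratio below via Lemma~\ref{lem: ineq}e together with the bounds in Assumption~\ref{ass:finite}, and invoke positive prior mass of that neighbourhood. If anything you treat the prior-mass factor more carefully than the paper, whose proof actually yields $e^{-n\epsilon}\Pi(D_\epsilon)$ and merely remarks that $\Pi(D_\epsilon)\leq 1$ (which points the wrong way for the stated bound), whereas your halving of $\epsilon$ and absorption of $\log\Pi(A_\epsilon)$ for sufficiently large $n$ delivers the inequality exactly as needed downstream.
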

\begin{proof}
	From Lemma \ref{lem: ineq}e 
	$\log\left(\frac{f_{\bm{\tau}}(\mathbf{Y}_i|\alpha,\beta, 
		1)}{f_{\bm{\tau}}(\mathbf{Y}_i|\alpha_0,\beta_0, 
		1)}\right) \geq -|b_i|\geq -|\alpha-\alpha_0|- 
	|\beta-\beta_0|'|\Gamma_u||\mathbf{Y}_i|$.  Define 
	\[D_\epsilon = \left\{(\alpha,\beta):|\alpha-\alpha_0| < 
	\frac{\frac{1}{k}\epsilon}{1+c_\Gamma c_y}, |\beta-\beta_{0}| < 
	\frac{\frac{1}{k}\epsilon}{1+c_\Gamma c_y}\mathbf{1}_{k-1} \text{ 
		componentwise}\right\} .\]
	
	Then for $(\alpha,\beta)\in V_\epsilon$ 
	\begin{align*}
		\log\left(\frac{f_{\bm{\tau}}(\mathbf{Y}_i|\alpha,\beta, 
			1)}{f_{\bm{\tau}}(\mathbf{Y}_i|\alpha_0,\beta_0, 
			1)}\right) &\geq -|\alpha-\alpha_0|- 
		|\beta-\beta_0|'|\Gamma_u||\mathbf{Y}_i|\\
		& \geq  -|\alpha-\alpha_0|- 
		|\beta-\beta_0|'\mathbf{1}_{k-1} c_\Gamma c_y\\
		&\geq -\frac{\epsilon}{1+c_\Gamma c_y} \\
		&> -\epsilon
	\end{align*}
	Then 
	$\sum_{i=1}^{n}\log\left(\frac{f_{\bm{\tau}}(\mathbf{Y}_i|\alpha,\beta, 
		1)}{f_{\bm{\tau}}(\mathbf{Y}_i|\alpha_0,\beta_0, 
		1)}\right) \geq -n\epsilon$.  If $\Pi(\cdot)$ is proper, then 
	$\Pi(D_\epsilon)\leq 1$.
	
\end{proof}

The previous two lemmas imply the posterior is converging to zero in a 
restricted parameter space.

\begin{lemma}\label{lem:goto0}
	Suppose Assumptions 4,  and 
	6 hold. Then for each $l\in\{1,2,...,L(k)\}$, there exists 
	a compact $G_l$ such that 
	\[\lim\limits_{n\rightarrow\infty}\Pi(V_{ln}\cap 
	G_l^c|\mathbf{Y}_1,...,\mathbf{Y}_n)=0.\]
\end{lemma}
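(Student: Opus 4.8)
The plan is to combine the numerator and denominator bounds from the two preceding lemmas, since essentially all of the analytic work has already been done there. Fix $l\in\{1,\dots,L(k)\}$ and let $G_l$ be the compact set (a closed box, hence compact) and $u_l>0$ the constant produced by Lemma~\ref{lem:existu} for the orthant defining $V_{ln}$; for $l=1$ this is exactly the $G_1$ constructed in that proof, and for the other $l$ it is the sign-symmetric reflection, so compactness is immediate in every case. Write the posterior probability in the ratio form
\[
\Pi\!\left(V_{ln}\cap G_l^c\mid \mathbf{Y}_1,\dots,\mathbf{Y}_n\right)=\frac{I_n(V_{ln}\cap G_l^c)}{I_n(\Theta)},\qquad I_n(U)=\int_U\prod_{i=1}^{n}\frac{f_{\bm{\tau}}(\mathbf{Y}_i\mid\alpha,\beta,1)}{f_{\bm{\tau}}(\mathbf{Y}_i\mid\alpha_0,\beta_0,1)}\,d\Pi(\alpha,\beta).
\]

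For the numerator I would invoke Lemma~\ref{lem:existu}: almost surely $[\mathbf{P}_0]$ and for all sufficiently large $n$,
\[
I_n(V_{ln}\cap G_l^c)=\int_{G_l^c\cap V_{ln}}e^{\sum_{i=1}^{n}\log\!\left(\frac{f_{\bm{\tau}}(\mathbf{Y}_i\mid\alpha,\beta,1)}{f_{\bm{\tau}}(\mathbf{Y}_i\mid\alpha_0,\beta_0,1)}\right)}d\Pi(\alpha,\beta)\le e^{-nu_l}.
\]
The ``sufficiently large $n$'' here is the only place randomness enters: it originates from the strong law of large numbers applied to $\tfrac1n\sum_{i=1}^{n}|W_i|$, which converges because Lemma~\ref{lem: Eineq}e (under Assumption~\ref{ass:finite}) gives $\lim_n \tfrac1n\sum_{i=1}^{n}E[|W_i|]<\infty$; this is precisely what upgrades the conclusion from convergence in probability to $a.s.\ [\mathbf{P}_0]$. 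For the denominator I would apply Lemma~\ref{lem:lrgrthn} with $\epsilon=u_l/2$, obtaining $I_n(\Theta)\ge e^{-nu_l/2}$ for all large $n$; positivity of the relevant prior mass (Assumption~\ref{ass:prior}a, which ensures $\Pi$ charges every open neighbourhood of $(\alpha_0,\beta_0)$) is what makes this lower bound non-vacuous.

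Dividing, for all large $n$,
\[
\Pi\!\left(V_{ln}\cap G_l^c\mid \mathbf{Y}_1,\dots,\mathbf{Y}_n\right)\le \frac{e^{-nu_l}}{e^{-nu_l/2}}=e^{-nu_l/2}\;\longrightarrow\;0\qquad a.s.\ [\mathbf{P}_0],
\]
which is the assertion. There is no genuine obstacle remaining in this step; the only points needing care are (i) choosing $\epsilon<u_l$ in Lemma~\ref{lem:lrgrthn} so that the two exponential rates do not cancel (taking $\epsilon=u_l/2$ leaves the rate $e^{-nu_l/2}$), (ii) checking that the two ``for sufficiently large $n$'' clauses hold on a common almost-sure event, which they do since each is a tail event of the single i.i.d.\ sequence $(\mathbf{Y}_i,\mathbf{X}_i)$, and (iii) noting that for $l\ne 1$ the compact set $G_l$ is just the orthant-reflected copy of $G_1$, so no new construction is required.
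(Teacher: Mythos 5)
Your proposal is correct and follows essentially the same route as the paper: bound the numerator over $V_{ln}\cap G_l^c$ by $e^{-nu_l}$ via Lemma~\ref{lem:existu}, bound the denominator below by $e^{-n\epsilon}\Pi(D_\epsilon)$ via Lemma~\ref{lem:lrgrthn} with $\epsilon$ a fixed fraction of $u_l$ (the paper uses $u_l/4$, you use $u_l/2$; either works since $\Pi(D_\epsilon)>0$ is a fixed constant), and divide. Your explicit tracking of the almost-sure qualifier and of the prior-mass factor is slightly more careful than the paper's write-up but does not change the argument.
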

\begin{proof}
	Let $\epsilon$ from Lemma \ref{lem:lrgrthn} equal $\frac{u_i}{4}$ from 
	Lemma \ref{lem:existu}. Then
	\begin{align*}
		\int_{\Theta} 
		e^{\sum_{i=1}^{n}log\left(\frac{f_{\bm{\tau}}(\mathbf{Y}_i|\alpha,\beta,	
				1)}{f_{\bm{\tau}}(\mathbf{Y}_i|\alpha_0,\beta_0, 				
				1)}
			\right)}d\Pi(\alpha,\beta) &\geq \int_{D_\epsilon} 
		e^{\sum_{i=1}^{n}log\left(\frac{f_{\bm{\tau}}(\mathbf{Y}_i|\alpha, 
				\beta,	1)}{f_{\bm{\tau}}(\mathbf{Y}_i|\alpha_0,\beta_0, 
				1)}\right)}d\Pi(\alpha,\beta) \\
		&\geq e^{-n\epsilon}d\Pi(D_\epsilon)
	\end{align*}
	
	Then 
	$\lim\limits_{n\rightarrow\infty}\int_{\Theta} 
	e^{\sum_{i=1}^{n}log\left(\frac{f_{\bm{\tau}}(\mathbf{Y}_i|\alpha,\beta, 
			1)}{f_{\bm{\tau}}(\mathbf{Y}_i|\alpha_0,\beta_0, 
			1)}\right)}d\Pi(\alpha,\beta)e^{nu_j/2}=\infty$ and 
	
	$\lim\limits_{n\rightarrow\infty}\int_{V_{jn}\cap G_j^c} 
	e^{\sum_{i=1}^{n}log\left(\frac{f_{\bm{\tau}}(\mathbf{Y}_i|\alpha,\beta, 
			1)}{f_{\bm{\tau}}(\mathbf{Y}_i|\alpha_0,\beta_0, 
			1)}\right)}d\Pi(\alpha,\beta)e^{nu_j/2}=0$.
\end{proof}
The next proposition bounds the expected value of the numerator, 
$E[I_n(V_{1n}\cap G)^d]$, and the denominator, $I_n(\Theta)$, of the 
posterior. Define $B_{in} = -E \left[ 
\log\left(\frac{f_{\bm{\tau}}(\mathbf{Y}_i|\alpha,\beta, 
	1)}{f_{\bm{\tau}}(\mathbf{Y}_i|\alpha_0,\beta_0, 
	1)}\right)\right]$.

\begin{lemma}\label{lem:2ineq}
	Suppose Assumptions 3a and 4 hold.  Define \\
	$D_{\delta_n} =  \left\{(\alpha,\beta):|\alpha-\alpha_0| < 
	\frac{\frac{1}{k}\delta_n}{1+c_\Gamma c_y}, |\beta-\beta_{0}| < 
	\frac{\frac{1}{k}\delta_n}{1+c_\Gamma c_y}\mathbf{1}_{k-1} \text{ 
		componentwise}\right\}$.  Then for 
	$(\alpha,\beta)\in 
	D_{\delta_n}$
	
	1. There exists a $\delta_n\in(0,1)$ and fixed $R>0$ such that
	\[ E\left[\left(\int_{V_{1n}\cap G} \prod_{i=1}^{n} 
	\frac{f_{\bm{\tau}}(\mathbf{Y}_i|\alpha,\beta, 
		1)}{f_{\bm{\tau}}(\mathbf{Y}_i|\alpha_0,\beta_0, 
		1)} d\Pi(\alpha,\beta) \right)^d\right] \leq 
	e^{d\sum_{i=1}^{n}B_{in}}e^{nd\delta_n}R^2/\delta_n^2 \]

	2.
	\[\int_{\Theta} \prod_{i=1}^{n} 
	\frac{f_{\bm{\tau}}(\mathbf{Y}_i|\alpha,\beta, 
		1)}{f_{\bm{\tau}}(\mathbf{Y}_i|\alpha_0,\beta_0, 
		1)} d\Pi(\alpha,\beta) \geq e^{-n\delta_n}\Pi(D_{\delta_n}) \]
\end{lemma}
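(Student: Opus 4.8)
The plan is to establish the two displayed inequalities separately, each by specializing a lemma already proved above. The second one (the lower bound on the posterior denominator) is just Lemma~\ref{lem:lrgrthn} with the fixed tolerance replaced by $\delta_n$ and with the prior mass of the neighbourhood retained rather than bounded by $1$. The first one (the upper bound on the $d$-th moment of the integral over $V_{1n}\cap G$) is obtained by feeding the hypercube covering estimate of Lemma~\ref{lem:hypercube} into the per-cube moment bound of Lemma~\ref{lem:existdelta}, and then counting cubes.

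For the second inequality I would repeat the computation in Lemma~\ref{lem:lrgrthn} almost verbatim. By Lemma~\ref{lem: ineq}e,
\[
\log\!\left(\frac{f_{\bm{\tau}}(\mathbf{Y}_i|\alpha,\beta,1)}{f_{\bm{\tau}}(\mathbf{Y}_i|\alpha_0,\beta_0,1)}\right)\ \geq\ -|\alpha-\alpha_0|-|(\beta-\beta_0)'||\mathbf{\Gamma_u'}||\mathbf{Y}_i|,
\]
and for $(\alpha,\beta)\in D_{\delta_n}$ Assumption~\ref{ass:finite} (the bounds $c_\Gamma$ and $c_y$) makes the right-hand side at least $-\delta_n$, exactly as in that lemma. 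Hence $\prod_{i=1}^{n}f_{\bm{\tau}}(\mathbf{Y}_i|\alpha,\beta,1)/f_{\bm{\tau}}(\mathbf{Y}_i|\alpha_0,\beta_0,1)\ge e^{-n\delta_n}$ throughout $D_{\delta_n}$, and integrating this constant lower bound over $D_{\delta_n}\subseteq\Theta$ against the proper prior $\Pi$ gives $\int_{\Theta}\prod_i(\cdot)\,d\Pi\ge e^{-n\delta_n}\Pi(D_{\delta_n})$, with $\Pi(D_{\delta_n})>0$ by Assumption~\ref{ass:prior}.

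For the first inequality I would apply Lemma~\ref{lem:hypercube} with $B=V_{1n}\cap G$ --- which is bounded because $G$ is compact --- and covering scale $\delta_n$, obtaining
\[
E\!\left[\left(\int_{V_{1n}\cap G}\prod_{i=1}^{n}\frac{f_{\bm{\tau}}(\mathbf{Y}_i|\alpha,\beta,1)}{f_{\bm{\tau}}(\mathbf{Y}_i|\alpha_0,\beta_0,1)}\,d\Pi\right)^{d}\right]\ \le\ \sum_{j=1}^{J(\delta_n)}E\!\left[\left(\prod_{i=1}^{n}\frac{f_{\bm{\tau}}(\mathbf{Y}_i|\alpha_j,\beta_j,1)}{f_{\bm{\tau}}(\mathbf{Y}_i|\alpha_0,\beta_0,1)}\right)^{d}\right]e^{nd\delta_n}\,\Pi(A_j)^{d},
\]
with $(\alpha_j,\beta_j)\in A_j\subseteq V_{1n}\cap G$. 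Since $G$ is compact and Assumption~\ref{ass:finite} holds, Lemma~\ref{lem:existdelta} supplies a single $d\in(0,1)$ for which each cube expectation is at most $e^{-d\sum_i B_{in}}$ with $B_{in}$ evaluated at the corresponding representative (and $B_{in}\ge0$ by Lemma~\ref{lem: Eineq}a); keeping the representative that minimizes $\sum_i B_{in}$ pulls this exponential factor outside the sum. It then remains to control $\sum_{j}\Pi(A_j)^{d}$, for which I would use $\Pi(A_j)^{d}\le1$ together with the elementary fact that the number $J(\delta_n)$ of cubes of side proportional to $\delta_n$ needed to cover the bounded set $V_{1n}\cap G$ is of order $\delta_n^{-\dim\Theta}$ times a constant depending on a diameter bound $R$ for $G$; absorbing dimensional constants into $R$ yields the polynomial factor $R^2/\delta_n^2$, and collecting terms gives $e^{-d\sum_i B_{in}}e^{nd\delta_n}R^2/\delta_n^2$.

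The work is bookkeeping rather than a new idea, and the spots most easily mishandled are: (i) the constant $d\in(0,1)$ delivered by Lemma~\ref{lem:existdelta} depends on $G$ and must be fixed once and used consistently in Lemma~\ref{lem:hypercube} and the final count; (ii) the same $\delta_n$ must simultaneously be an admissible covering scale in Lemma~\ref{lem:hypercube} and define the neighbourhood $D_{\delta_n}$ (harmless, since both only need $\delta_n$ small); and (iii) --- a remark about how the lemma is used downstream, not about its proof --- one must later verify that the factor $R^2/\delta_n^2$ is swamped by $e^{-d\sum_i B_{in}}$ once $\sum_i B_{in}$ is shown to grow linearly in $n$ on $V_{1n}$ (which is where Assumptions~\ref{ass:cont} and \ref{ass:nondegen} enter) and that $\Pi(D_{\delta_n})$ decays no faster than exponentially, which follows from Assumption~\ref{ass:prior} and the normal prior of Section~\ref{sec:uncprior}.
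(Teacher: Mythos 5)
Your proposal is correct and follows essentially the same route as the paper: part 2 is the Lemma~\ref{lem:lrgrthn} computation restricted to $D_{\delta_n}$ with the prior mass retained, and part 1 chains Lemma~\ref{lem:hypercube} into Lemma~\ref{lem:existdelta}, bounds $\Pi(A_j)^d\le 1$, and uses compactness of $G$ to bound the cube count $J(\delta_n)$ by $R^2/\delta_n^2$. Note that the exponential factor you derive, $e^{-d\sum_i B_{in}}$, matches the paper's proof and its downstream use in Theorem~\ref{thm:consist}; the $e^{+d\sum_i B_{in}}$ appearing in the lemma's statement is a sign typo.
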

\begin{proof}
	
	From Lemma \ref{lem:hypercube} and \ref{lem:existdelta} $	
	E\left[\left(\int_{W_{1n}\cap 
		G}\prod_{i=1}^{n}\frac{f_{\bm{\tau}}(\mathbf{Y}_i|\alpha,\beta, 
		1)}{f_{\bm{\tau}}(\mathbf{Y}_i|\alpha_0,\beta_0, 
		1)}d\Pi(\alpha,\beta)\right)^d\right]$
	
	\begin{align*}
		&\leq 
		\sum_{j=1}^{J(\delta_n)}\left[E\left[\left(\prod_{i=1}^{n}\frac{f_{u,
				\tau}(\mathbf{Y}_i
			|\alpha_j,\beta_j,
			1)}{f_{\bm{\tau}}(\mathbf{Y}_i|\alpha_0,\beta_0, 
			1)}\right)^d\right]e^{nd\delta_n}\Pi(A_j)^d\right]\\
		&\leq \sum_{j=1}^{J(\delta_n)}\left[e^{-d\sum_{i=1}^{n}B_{in} 
		}e^{nd\delta_n}\Pi(A_j)^d\right]\\
		&\leq e^{-d\sum_{i=1}^{n}B_{in}}e^{nd\delta_n}J(\delta_n)
	\end{align*}
	
	Since $G$ is compact, $R$ can be chosen large enough so that 
	$J(\delta_n)\leq R^2/\delta_n^2$.  Line 2.\ is from Lemma \ref{lem:goto0}. 
\end{proof}

The proof of Theorem 1 is below.\footnote{I would like to thank Karthik Sriram for help with the proof
	improper prior case.}

\begin{proof}
	Suppose $\Pi$ is proper. Lemma \ref{lem:existu} shows we can focus on the 
	case $W_{1n}\cap 
	G$. Set $\Delta_n = \Delta$ and $\delta_n=\delta$. Then from Lemma 
	\ref{lem:2ineq},  
	there exists a $d\in(0,1)$ such that for sufficiently large $n$
	\begin{align*}
		E\left[(\Pi(V_{1n}\cap G|\mathbf{Y}_1,...,\mathbf{Y}_n))^d\right] &\leq 
		\frac{R^2}{\delta^2(\Pi(V_\delta))^d} 
		e^{-d\sum_{i=1}^{n}B_{in}}e^{2nd\delta} \\
		&\leq 
		\frac{R^2}{\delta^2(\Pi(V_\delta))^d} 
		e^{-\frac{1}{2}dn\lim\limits_{m\rightarrow\infty}\frac{1}{m}\sum_{i=1}
			^{m}B_{im}}e^{2nd\delta}
	\end{align*}
	Chose $\delta = 
	\frac{1}{8}\lim\limits_{m\rightarrow\infty}\frac{1}{m}\sum_{i=1}^{m}B_{im}$ 
	and 
	note that $C' = \frac{R^2}{\delta^2(\Pi(V_\delta))^d}$ is a fixed 
	constant.  
	Then $E\left[(\Pi(V_{1n}\cap G|\mathbf{Y}_1,...,\mathbf{Y}_n))^d\right] 
	\leq 
	C'e^{-nd\delta/4}$.  
	Since 
	$\lim\limits_{n\rightarrow\infty}\sum_{n=1}^{\infty}C'e^{-nd\delta/4}<\infty
	$ 
	then the Markov inequality and Borel Cantelli imply posterior consistency 
	a.s..
	
	Now suppose the prior is improper but admits a proper posterior. Consider 
	the posterior from the first observation $\Pi(\cdot|\mathbf{Y}_1)$. Under 
	Assumption 3b, $\Pi(\cdot|\mathbf{Y}_1)$ is proper. 
	Assumption 5 ensures 
	that 
	$f_{\bm{\tau}}(\mathbf{Y}_i|\alpha_0,\beta_0, 1)$ dominates $p_0$. Thus the 
	formal 
	posterior exists on a set of $\mathbf{P}$ measure 1. Further, 
	$\Pi(U|\mathbf{Y}
	_1)>0$ for some open $U$ containing $(\alpha_0,\beta_0)$. Thus $\Pi(\cdot|
	\mathbf{Y}_1)$ can be used as a proper prior on the likelihood containing $
	\mathbf{Y}_2,...,\mathbf{Y}_n$ which produces a posterior equivalent to the 
	original $\Pi(\cdot|\mathbf{Y}_1,...,\mathbf{Y}_n)$ and thus the same 
	argument 
	above using a proper prior can be applied to the posterior 
	$\Pi(\cdot|\mathbf{Y}
	_2,...,\mathbf{Y}_n)$ using $\Pi(\cdot|\mathbf{Y}_1)$ as a proper prior.
\end{proof}

\section{Proof of Theorem 2}

Let $\mathbf{Z}'\mathbf{Z} = r_\tau^2$ represent the spherical $\tau$-Tukey depth contour $T_\tau$ and $\mathbf{u}'\mathbf{Z}=d_{\tau}$ represent the $\lambda_{\bm{\tau}}$ hyperplane where $d_{\tau} = \alpha_{\tau} + \beta_{\tau \mathbf{x}}\mathbf{X}$.

1) Let $\hat{\mathbf{Z}}$ represent the point of tangency between $T_\tau$ and $\lambda_{\bm{\tau}}$. Then the normal vector to $\lambda_{\bm{\tau}}$ is $\mathbf{u}$. Then there exists a $c$ such that $\hat{\mathbf{Z}}=c\mathbf{u}$. Let $c=\frac{-r}{\sqrt{\mathbf{u}'\mathbf{u}}}$, then $\hat{\mathbf{Z}}'\mathbf{u}=-r_\tau=d_{\tau}$. Thus $\sqrt{r^2_\tau} = |\alpha_{\tau} + \beta_{\tau \mathbf{x}}\mathbf{X}|$.

2) Let $\tilde{\mathbf{Z}}$ represent a point on $T_\tau$ and $\tilde{\mathbf{u}}=\frac{\tilde{\mathbf{Z}}}{\sqrt{\tilde{\mathbf{Z}}'\tilde{\mathbf{Z}}}}$. Then $\tilde{\mathbf{u}}'\tilde{\mathbf{u}}=1$ implying $\tilde{\mathbf{u}}\in\mathcal{S}^{k-1}$. Note the normal of $\lambda_{\tilde{\tau}}$ is $\tilde{\mathbf{u}}$ which is a scalar multiple of $\tilde{\mathbf{Z}}$. Thus there exists a $\mathbf{u}$ such that $\lambda_{\bm{\tau}}$ is tangent to $T_\tau$ at every point on $T_\tau$.

3) Let $\mathbf{u}\in\mathcal{S}^{k-1}$ then the normal of $\lambda_{\bm{\tau}}$ is $\mathbf{u}$. Let $\mathbf{Z} = d_\tau \mathbf{u}$, which is normal to $\lambda_{\bm{\tau}}$. Further $\mathbf{Z}'\mathbf{Z} = d_\tau^2\mathbf{u}'\mathbf{u} = d_\tau^2$ is a point on $T_\tau$. Thus there is a point on $\lambda_{\bm{\tau}}$  that is tangent to $T_\tau$ for every $\mathbf{u}\in\mathcal{S}^{k-1}$.

%	Rewrite the $\lambda_{\bm{\tau}}$ hyperplane to be the function $z_2 = l_1(z_1) = a+bz_1$ where $a = (\alpha_{\bm{\tau}} +\beta_{{\bm{\tau}} x}\mathbf{X})/u_2$ and $b=u_1/u_2$. The (spherical) $\tau$ Tukey depth contour can be represented as $s(z_1,z_2)=z_1^2 + z_2^2 = d_{\tau}^2$. The $\lambda_{\bm{\tau}}$ can have only one point of contact with $s$ for the contour to be spherical. Thus $\lambda$ is tangent to $s$. The equation of a secent line passing through the center of $s$ is $z_2 = l_2(z_1) = cz_1$. The lines $l_1$ and $l_2$ are perpendicular. Thus $c = -\frac{1}{b}$. The two lines intersect at $(z_1^\star, z_2^\star)$ where $z_2^\star = -\frac{1}{b} z_1^\star = a+bz_1^\star$. Thus $z_1^\star =-\frac{a}{b + \frac{1}{b}}$. It follows that $z_2^\star = a-b\frac{a}{b+\frac{1}{b}}=\frac{a}{b^2+1}$. Then the contour becomes $s(z_1^\star,z_2^\star) = \left(\frac{a}{b+\frac{1}{b}}\right)^2 + \left(\frac{a^2}{b^2+1}\right)^2 = d_\tau^2$ which simplifies to $\frac{a^2}{b^2 + 1} = d_\tau^2$. Since $a = \alpha_{\bm{\tau}}/u_2$, $b=u_1/u_2$ and $u_1^2+u_2^2=1$ then $d_\tau^2 = (\alpha_{\bm{\tau}}+\beta_{{\bm{\tau}} x}\mathbf{X})^2\frac{\frac{1}{u_2^2}}{\frac{1-u_2^2}{u_2^2}}=(\alpha_{\bm{\tau}}+\beta_{{\bm{\tau}} x}\mathbf{X})^2$. Then $\alpha_{\bm{\tau}} +\beta_{{\bm{\tau}} x}\mathbf{X} = d_\tau$ when $u_2$ is positive and $\alpha_{\bm{\tau}} +\beta_{{\bm{\tau}} x}\mathbf{X} = -d_\tau$ when $u_2$ is negative.

%\section{Normal Ratio Implied Prior}

\section{Non-zero centered prior: second approach}

The second approach is to investigate the implicit prior in the untransformed 
response space of $Y_2$ against $Y_1$, $\mathbf{X}$ and 
an 
intercept. Denote $\mathbf{\Gamma}_\mathbf{u}= [u^\perp_1,u^\perp_2]'$.  Note 
that $
\mathbf{Y}_{\mathbf{u}i} = \beta_{{\bm{\tau}}\mathbf{y}}\mathbf{Y}_{\mathbf{u}
	i}^\perp
+
\beta_{{\bm{\tau}}\mathbf{x}}'\mathbf{X}_i+\alpha_{\bm{\tau}}$ can be rewritten 
as
\begin{align*}
	Y_{2i} &= \frac{1}{u_2 - \beta_{{\bm{\tau}}\mathbf{y}}u_2^\perp}
	\left((\beta_{{\bm{\tau}}
		\mathbf{y}}u_1^\perp - u_1) Y_{1i} + 
	\beta_{{\bm{\tau}}\mathbf{x}}'\mathbf{X}_i+\alpha_{\bm{\tau}}\right)\\
	&=\phi_{{\bm{\tau}} y} Y_{1i} + \phi_{{\bm{\tau}}\mathbf{x}}'\mathbf{X}_i+
	\phi_{{\bm{\tau}} 1}
\end{align*}

The interpretation of $
\phi_{\bm{\tau}}$ is fairly straight forward since the equation is in slope-intercept form. It can be verified that $
\phi_{{\bm{\tau}} y} = \phi_{{\bm{\tau}} y}(\beta_{{\bm{\tau}}\mathbf{y}}) = 
\frac{\beta_{{\bm{\tau}}\mathbf{y}}u_1^\perp - 
	u_1}{u_2 - \beta_{{\bm{\tau}}\mathbf{y}}u_2^\perp} = \frac{1}{u_1(u_2^\perp
	\beta_{{\bm{\tau}}
		\mathbf{y}} -u_2)}+\frac{u_2}{u_1}$ for $\beta_{{\bm{\tau}} y} \neq 
\frac{u_2}
{u_2^\perp}$ and $u_1 \neq 0$.  Suppose prior $\theta_{\bm{\tau}} = 
[\beta_{{\bm{\tau}}\mathbf{y}},\beta_{{\bm{\tau}}\mathbf{x}}',
\alpha_{\bm{\tau}}]'\sim 
F_{\theta_
	{\bm{\tau}}}(\theta_{\bm{\tau}})$ with support $
\Theta_{{\bm{\tau}}}$. If $F_{\beta
	\mathbf{y}}
$ is a continuous distribution, the density of $\phi_{\bm{\tau}}$ is 
\[f_{\phi_{{\bm{\tau}} y}}=f_{\beta_{{\bm{\tau}}\mathbf{y}}} 
(\phi_{{\bm{\tau}} y}^{-1}(\beta_{{\bm{\tau}}\mathbf{y}})) \left|
\frac{d}{d\beta_{{\bm{\tau}}
		\mathbf{y}}}
\phi_{{\bm{\tau}} y}^{-1}(\beta_{{\bm{\tau}}\mathbf{y}})\right| = 
f_{\beta_{{\bm{\tau}}
		\mathbf{y}}}
\left(\frac{1}{u_2^\perp}\left(\frac{1}{u_1\phi_{{\bm{\tau}} y} - u_2}
+u_2\right)\right)
\left| \frac{u_1}{u_2^\perp ( u_1 \phi_{{\bm{\tau}} y} - u_2)^2}\right|
\] with 
support not containing $\left\{-\frac{u_1^\perp}{u_2^\perp}\right\}$, for 
$u_2^\perp\neq 0$.

If $\beta_{{\bm{\tau}}\mathbf{y}}\sim N(\underline{\mu}_
{{\bm{\tau}} y},\underline{\sigma}_{{\bm{\tau}} y}^2),$ then the density 
of $
\phi_{{\bm{\tau}} y}$ is 
a 
shifted reciprocal Gaussian with density \[f_{\phi_{{\bm{\tau}} y}}
(\phi| 
\underline{a},\underline{b}^2) = 
\frac{1}
{\sqrt{2\pi 
		\underline{b}_{\bm{\tau}}^2}(\phi-u_2/u_2^\perp)^2}exp\left(-\frac{1}
{2\underline{b}_
	\tau^2}	\left(\frac{1}{\phi - u_2/u_2^\perp} - \underline{a}
\right)^2\right).\]  
The 
parameters are $\underline{a} = \underline{\mu}_{\bm{\tau}} u_1 u_2^
\perp - 
u_1u_2$ and $\underline{b} = u_1u_2^\perp \underline{\sigma}_{\bm{\tau}}
$.  
The 
moments of $\phi_{{\bm{\tau}} y}$ do not exist \citep{robert91}.  The 
density  is bimodal with 
modes at

\[m_1 = \frac{-\underline{a} + \sqrt{\underline{a}^2 + 8 
		\underline{b}^2}}{4 
	\underline{b}^2} + \frac{u_2}{u_2^\perp} \text{ and } m_2 = \frac{-
	\underline{a} - 
	\sqrt{\underline{a}^2 + 8 
		\underline{b}^2}}{4 
	\underline{b}^2} + \frac{u_2}{u_2^\perp} .\]

Elicitation can be tricky since moments do not exist. However, elicitation can rely on the 
modes 
and their relative heights

\[ \frac{f_{\phi_{{\bm{\tau}} y}}(m_1| 
	\underline{a},\underline{b}^2)}{f_{\phi_{{\bm{\tau}} y}}(m_2| 
	\underline{a},\underline{b}^2)} = \frac{\underline{a}^2+ \underline{a}
	\sqrt{\underline{a}^2+8\underline{b}^2} + 4\underline{b}^2}{\underline{a}^2- 
	\underline{a}\sqrt{\underline{a}^2+8\underline{b}^2}+ 4\underline{b}^2} exp
\left(\frac{\underline{a}\sqrt{\underline{a}^2 + 8\underline{b}^2}}
{\underline{b}^4}\right) \]

Plots of the reciprocal Gaussian are shown in Figure \ref{fig:prior3}. The left plot presents densities of the reciprocal Gaussian for several hyper-parameter values. The right plot shows contours of the log relative heights of the modes over the set $(\underline{a}, \underline{b}^2)\in [-5,5]\times [ 10,
100]$.

\begin{figure}
	\centering
	\includegraphics[width=1\linewidth]{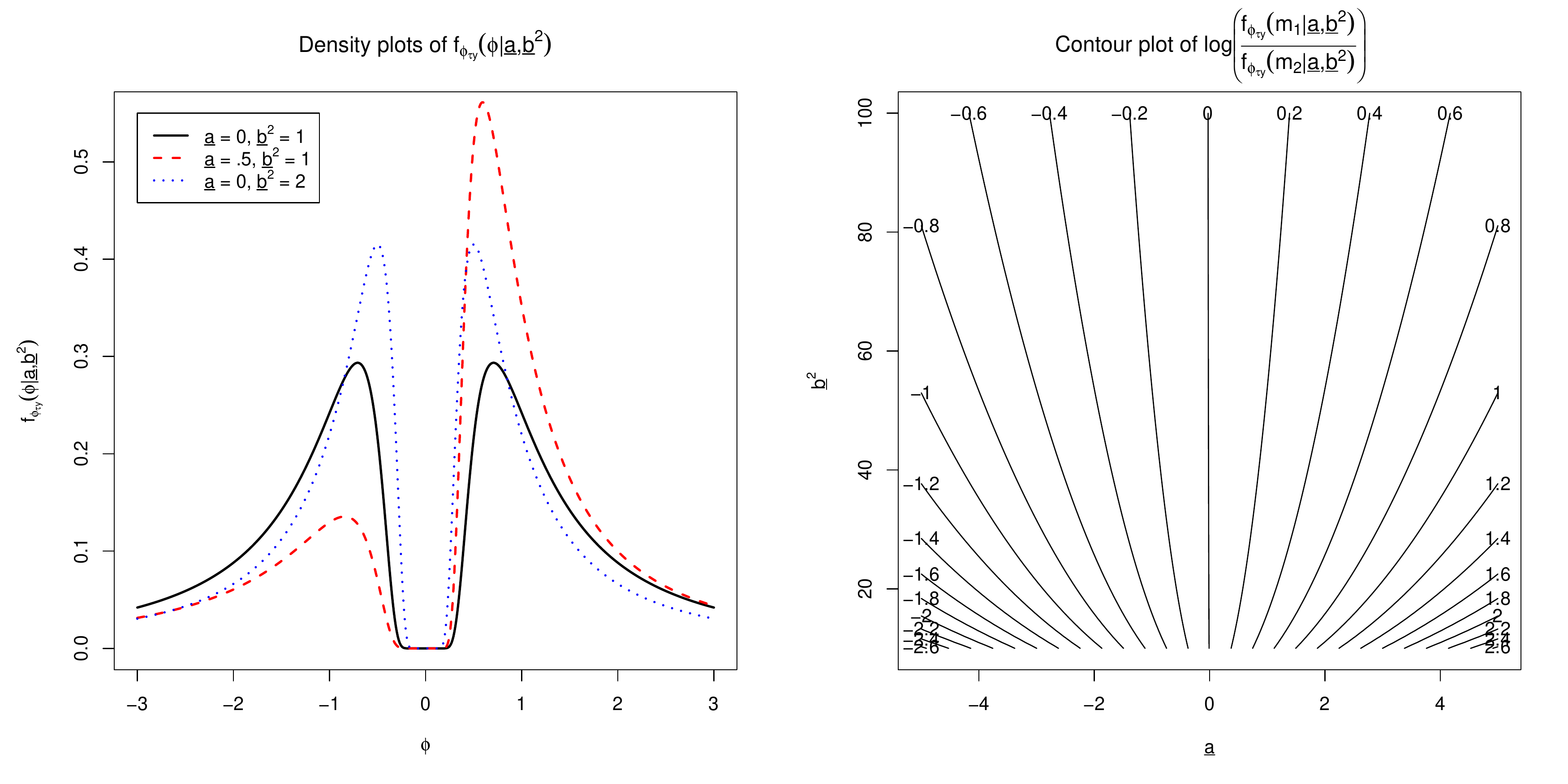}
	\caption{(left) Density of $f_{\phi_{{\bm{\tau}} y}}(\phi| 
		\underline{a},\underline{b}^2)$ for hyper parameters $\underline{a}=0$, 
		$\underline{b}^2=1$ (solid black), $\underline{a}=0.5$, $\underline{b}
		^2=1$ 
		(dash red), $\underline{a}=0$, $\underline{b}^2=2$ (dotted blue).  
		(right) A 
		contour plot showing the log relative heights of the modes at $m_1$ 
		over $m_2$ 
		over the set $(\underline{a}, \underline{b}^2)\in [-5,5]\times [ 10,
		100]$. }
	\label{fig:prior3}
\end{figure}

The distribution of $\phi_{{\bm{\tau}} \mathbf{x}}$ and $\phi_{{\bm{\tau}} 1}$ 
are ratio 
normals. The implied prior on $\phi_{{\bm{\tau}}1}$ is discussed.  The 
distribution 
of $\phi_{{\bm{\tau}} \mathbf{x}}$ will follow by analogy.  The implied 
intercept $
\phi_{{\bm{\tau}} 1} = \frac{\alpha_{\bm{\tau}}}{u_2 - \beta_{{\bm{\tau}}
		\mathbf{y}}u_2^\perp}$ is a 
ratio of normals distribution.  The ratio of normals
distributions can always be expressed as a location scale shift of $R = 
\frac{Z_1+
	a}{Z_2+b}$ where $Z_i\overset{iid}{\sim}N(0,1)$ for $i\in\{1,2\}$. That 
is, there exist 
constants $c$ and $d$ such that $\phi_{{\bm{\tau}} 1} = cR + 
d$ \citep{hinkley69,hinkley70,marsaglia65,marsaglia06}.\footnote{Proof: let $W_i\sim N(\theta_i,
	\sigma^2_i)$ for $i\in\{1,2\}$ with $corr(W_1,W_2)=\rho$.  Then $\frac{W_1}
	{W_2}
	= \frac{\sigma_1}{\sigma_2}\sqrt{1-\rho^2}\left(\frac{\frac{\theta_1}
		{\sigma_1}
		+Z_1}{\frac{\theta_2}{\sigma_2}+Z_2} + \frac{\rho}{\sqrt{1-\rho^2}}
	\right)$ 
	where 
	$Z_i\sim N(0,
	1)$ for $i\in\{1,2\}$ with $corr(Z_1,Z_2)=0$.  Thus $a = \frac{\theta_1}
	{\sigma_1}$, $b = \frac{\theta_2}{\sigma_2}$, $c = 
	\frac{\sigma_1}
	{\sigma_2}\sqrt{1-\rho^2}$ and $d=c\frac{\rho}{\sqrt{1-\rho^2}}$ where $
	\theta_1 
	= \underline{a}_{{\bm{\tau}} 1}$, $\theta_2 = u_2-\underline{a}_{{\bm{\tau}} 
		y}
	u_2^\perp $, $
	\sigma_1 = \underline{b}_{{\bm{\tau}} 1}$ and $\sigma_2 = \underline{b}
	_{{\bm{\tau}} y} 
	u_2^\perp$.} The density of $\phi_{{\bm{\tau}} 1}$ is
\[f_{\phi_{{\bm{\tau}} 1}}(\phi|\underline{a},
\underline{b}) = \frac{e^{-\frac{1}{2}(\underline{a}^2 + \underline{b}^2)}}
{\pi(1+\phi^2)}\left[1 + c e^{\frac{1}{2} c^2} \int_{0}^{c} e^{-\frac{1}{2}t^2} 
dt \right], \text{ where } c = \frac{\underline{b} + \underline{a} \phi}
{\sqrt{1+\phi^2}} .\]  Note, when $\underline{a} = \underline{b} = 0$, 
then the distribution reduces to the standard Cauchy distribution.  The 
ratio of normals distribution, like the 
reciprocal Gaussian distribution, has no moments and can be bimodal. Focusing on the positive quadrant of $(\underline{a},\underline{b})$, if $\underline{a}\leq 1$ and $ \underline{b}\geq 0$ then ratio of normals distribution is unimodal. If 
$\underline{a} \gtrsim 2.256058904$ then the ratio of normals distribution is bimodal.  There is a curve that separates the 
unimodal and bimodal regions.\footnote{The curve is approximately $\underline{b} = 
	\frac{18.621 - 
		63.411\underline{a}^2 - 54.668\underline{a}^3 + 17.716\underline{a}^4 - 
		2.2986\underline{a}^5}{2.256058904 - \underline{a}}$ for $\underline{a}\leq 2.256...
	$.} Figure 
\ref{fig:priorintmod} shows three plots for the density of the ratio of normals distribution and the bottom right plot shows the regions where the density is unimodal and bimodal. The unimodal region is to the left of the presented curve and the bimodal region is to the right of the presented curve.  If 
the ratio of normals distribution is bimodal, one mode will be to the left of $-\underline{b}/
\underline{a}$ and the other to the right of $-\underline{b}/
\underline{a}$.  The left mode tends to be much lower 
than the right mode for positive $(\underline{a},\underline{b})$.  Unlike the 
reciprocal Gaussian, closed form solutions for the 
modes do not exist.  The distribution is approximately elliptical with 
central tendency $\mu=\frac{\underline{a}}{1.01 \underline{b}-0.2713}$ and squared dispersion $\sigma^2 =\frac{\underline{a}^2+1}{\underline{b}^2 + 
	0.108\underline{b} - 3.795} - \mu^2$ when $\underline{a}<2.256$ and $4<
\underline{b}$ \citep{marsaglia06}. 

\begin{figure}
	\centering
	\includegraphics[width=0.9\linewidth]{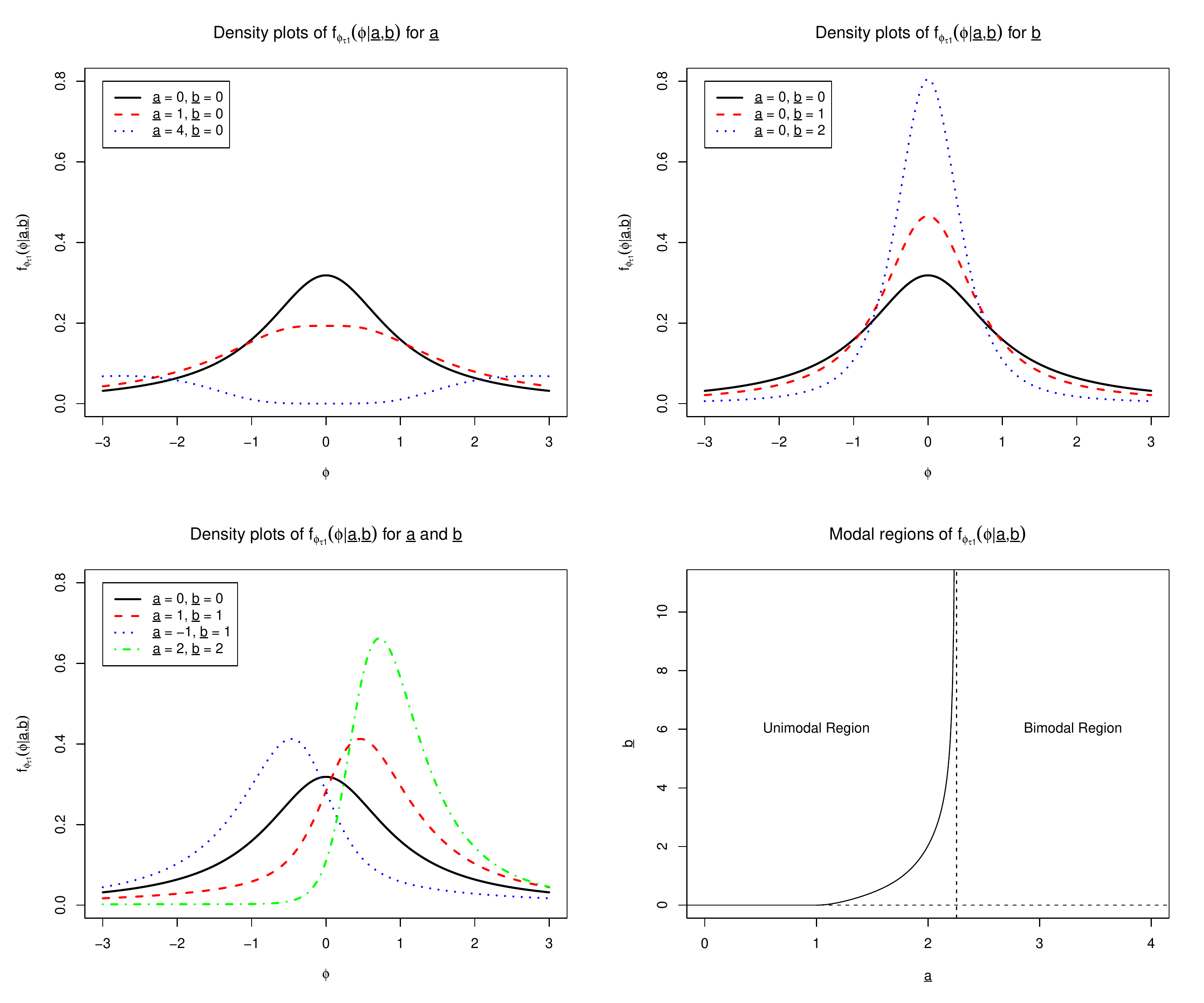}
	\caption{The top two plots and the bottom left plot show the density of the 
		ratio normal distribution with parameters $(\underline{a},\underline{b})
		$.  The 
		top left plot shows the density for different values of $\underline{a}$ 
		with $
		\underline{b}$ fixed at zero.  The parameters $(\underline{a},
		\underline{b}) = 
		(1,0)$ and $(4,0)$ result in the same density as $(\underline{a},
		\underline{b}) 
		= (-1,0)$ and $(-4,0)$. The 
		top right plot shows the density for different values of $\underline{b}$ 
		with $
		\underline{a}$ fixed at zero.  The parameters $(\underline{a},
		\underline{b}) = 
		(0,1)$ and $(0,2)$ result in the same density as $(\underline{a},
		\underline{b}) 
		= (0,-1)$ and $(0,-2)$.  The 
		bottom left plot shows the density for different values of $
		\underline{a}$ and $
		\underline{b}$.  The parameters $(\underline{a},\underline{b}) = 
		(1,1)$, $(-1,1)$and $(2,2)$ result in the same density as $
		(\underline{a},
		\underline{b}) 
		= (-1,-1)$, $(1,-1)$and $(-2,-2)$.  The bottom right graph shows the 
		regions of 
		the positive quadrant of the parameter space where the density is either 
		bimodal 
		or unimodal. }
	\label{fig:priorintmod}
\end{figure}

\section{Simulation} \label{app:simsubgrad}
\subsection{Convergence of subgradient conditions}
This section verified convergence of subgradient conditions (\ref{eq:subgrad1}) and (\ref{eq:subgrad2}). For DGPs 1-4, 
$E[\mathbf{Y}_{\mathbf{u}}
^\perp]= \mathbf{0}_2$ and for DGP 4, $E[\mathbf{X}]=0$. Define $\hat{H}_{{\bm{\tau}}}^{-}$ to 
be the empirical lower halfspace where the parameters in 
(\ref{eq:lwhalfsp}) are replaced with their Bayesian estimates.  Convergence of the first 
subgradient 
condition (\ref{eq:subgrad1}) requires
\begin{equation}\label{eq:subgrad1sim}
\frac{1}{n}\sum_{i=1}^{n}1_{(\mathbf{Y}_{i}\in \hat{H}_{{\bm{\tau}}}^{-})} 
\rightarrow \tau.
\end{equation}
Computation of  
$1_{(\mathbf{Y}_{i}\in \hat{H}_{{\bm{\tau}}}^{-})}$ is simple. Convergence of the second subgradient condition (\ref{eq:subgrad2}) requires

\begin{align}
	\label{eq:subgrad2sim}\frac{1}{n}\sum_{i=1}^{n}\mathbf{Y}_{ 
		\mathbf{u}i}^
	\perp 1_{(\mathbf{Y}
		_{i}\in \hat{H}_{\bm{\tau}}^{-})} &\rightarrow \tau E[\mathbf{Y}_{ 
		\mathbf{u}}^\perp]\\
	\intertext{ and }
	\label{eq:subgrad3sim}	
	\frac{1}{n}\sum_{i=1}^{n}\mathbf{X}_{i}1_{(\mathbf{Y}
		_{i}\in \hat{H}_{\bm{\tau}}^{-})} &\rightarrow \tau E[\mathbf{X}].
\end{align}
Similar to the first subgradient condition, computation of $\mathbf{Y}
_{\mathbf{u}i}^\perp1_{(\mathbf{Y}_{i}\in \hat{H}_{{\bm{\tau}}}^{-})}$ and $
\mathbf{X}_{ i}
1_{(\mathbf{Y}_{i}\in \hat{H}_{\bm{\tau}}^{-})}$ is simple.

Tables \ref{tab:simres1}, \ref{tab:simres2} and \ref{tab:simres3} show the 
results from the simulation.  Tables  \ref{tab:simres1} and \ref{tab:simres2} 
show the Root Mean Square Error (RMSE) of (\ref{eq:subgrad1sim}) and 
(\ref{eq:subgrad2sim}). Table 
\ref{tab:simres1} is using directional vector $\mathbf{u} = (1/\sqrt{2},1/
\sqrt{2})$ and Table \ref{tab:simres2} is using directional vector $\mathbf{u} 
= 
(0,1)$.  For Tables  \ref{tab:simres1} and \ref{tab:simres2} the first three rows show the RMSE for the first 
subgradient condition (\ref{eq:subgrad1sim}). The last three rows show the RMSE 
for 
the second subgradient condition (\ref{eq:subgrad2sim}).  The second column, $n$, 
is 
the sample size. The next five columns are the DGPs previously described.  It is clear that as sample size increases the RMSEs are decreasing, 
showing the convergence of the subgradient conditions.

\begin{table}[htb]
	\centering
	\begin{tabular}{rr|cccc}
		& & \multicolumn{4}{c}{Data Generating Process}\\ \hline
		& $n$ & 1 & 2 & 3 & 4 \\ 
		\hline
		& $10^2$ & 4.47e-02 & 2.91e-02 & 1.52e-02 & 1.75e-02 \\ 
		Sub Grad 1 & $10^3$ & 5.44e-03 & 4.59e-03 & 2.48e-03 & 2.60e-03 \\ 
		& $10^4$ & 9.29e-04 & 8.66e-04 & 5.42e-04 & 5.12e-04 \\ 
		\hline
		& $10^2$ & 6.34e-03 & 1.43e-02 & 4.34e-02 & 7.06e-02 \\ 
		Sub Grad 2 & $10^3$ & 2.01e-03 & 3.29e-03 & 1.32e-02 & 2.05e-02 \\ 
		& $10^4$ & 5.82e-04 & 8.00e-04 & 3.59e-03 & 4.91e-03 \\ 
	\end{tabular}
	\caption{Unconditional model RMSE of subgradient conditions for $\mathbf{u}
		=(1/\sqrt{2},1/\sqrt{2})$} 
	\label{tab:simres1}
\end{table}

\begin{table}[htb]
	\centering
	\begin{tabular}{rr|cccc}
		& & \multicolumn{4}{c}{Data Generating Process}\\ \hline
		& $n$ & 1 & 2 & 3 & 4 \\ 
		\hline
		& $10^2$ & 2.02e-02 & 1.89e-02 & 1.16e-02 & 1.36e-02 \\ 
		Sub Grad 1 & $10^3$ & 3.38e-03 & 3.61e-03 & 1.96e-03 & 1.98e-03 \\ 
		& $10^4$ & 7.71e-04 & 9.32e-04 & 3.87e-04 & 4.68e-04 \\ 
		\hline
		& $10^2$ & 9.74e-03 & 1.35e-02 & 2.59e-02 & 2.29e-02 \\ 
		Sub Grad 2 & $10^3$ & 2.08e-03 & 3.24e-03 & 7.11e-03 & 6.51e-03 \\ 
		& $10^4$ & 6.15e-04 & 9.89e-04 & 2.01e-03 & 1.83e-03 \\ 
	\end{tabular}
	\caption{Unconditional model RMSE of subgradient conditions for $\mathbf{u}
		=(0,1)$} 
	\label{tab:simres2}
\end{table}

\begin{table}[htb]
	\centering
	\begin{tabular}{r|cc}
		& \multicolumn{2}{c}{Direction $\mathbf{u}$}\\ \hline
		$n$ & $(1/\sqrt{2},1/\sqrt{2})$ & $(0,1)$ \\ 
		\hline
		$10^2$ & 5.17e-02 & 5.17e-02 \\ 
		$10^3$ & 1.41e-02 & 1.41e-02 \\ 
		$10^4$ & 3.90e-03 & 3.90e-03 \\ 
		\hline
	\end{tabular}
	\caption{Unconditional model RMSE of covariate subgradient condition for DGP 4} 
	\label{tab:simres3}
\end{table}

Table \ref{tab:simres3} shows RMSE of (\ref{eq:subgrad3sim}) for the covariate subgradient condition of DGP 4.  The three rows show sample size and 
the two columns show direction.  It is clear that as sample size increases the 
RMSEs are decreasing, showing convergence of the subgradient conditions.

\section{Application}\label{app:star}
\subsection{Fixed-$\mathbf{u}$ hyperplanes}

Figure \ref{fig:Testfixu1} shows fixed-$\mathbf{u}$ $\lambda_{\bm{\tau}}$ hyperplanes for various $\tau$ 
along a fixed $\mathbf{u}$ 
direction with model (\ref{eq:appunc}). The values of $\tau$ are $
\{0.01, 0.05, 0.1, 0.2, 0.3, 0.4, 0.5, 0.6, 0.7, 0.8, 0.9, 0.95, 0.99\}$. Two directions are presented: $\mathbf{u} = (1/\sqrt{2},1/\sqrt{2})$ 
(left) and $
\mathbf{u} = (1,0)$ (right).  The direction vectors are represented by the 
orange arrows passing through the Tukey median (red dot).  The 
hyperplanes 
to the far left of either graph are for $\tau=0.01$. The hyperplanes along the direction of the arrow are for larger values of $
\tau$, ending with $\tau=0.99$ hyperplanes on the far right.  

\begin{figure}[H]
	\centering
	\includegraphics[width=.9\linewidth]{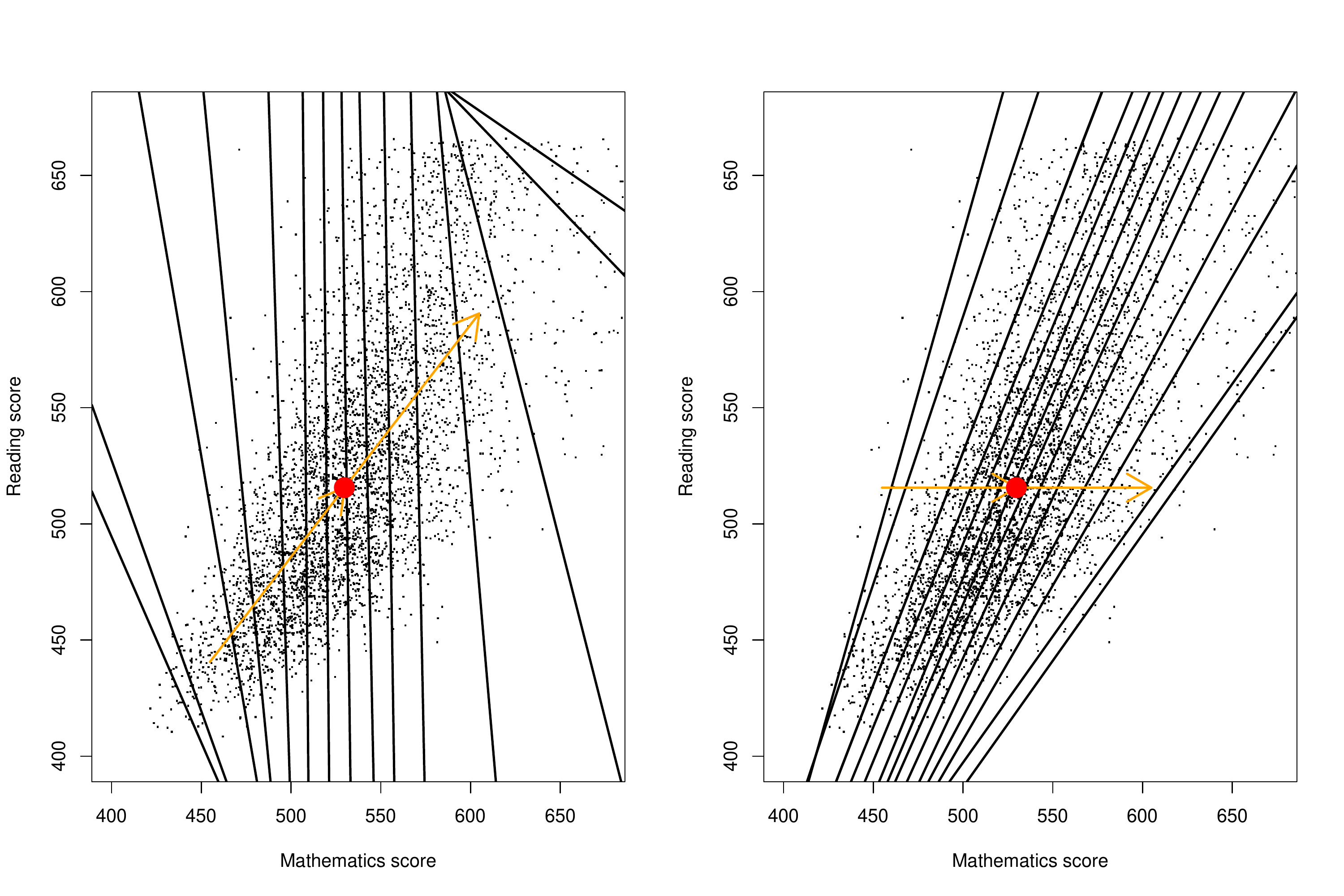}
	\caption[Fixed-u hyperplanes]{Left, fixed $\mathbf{u}=(1/\sqrt{2},1/\sqrt{2})$ 
		hyperplanes. 
		Right, fixed $\mathbf{u}=(1,0)$ hyperplanes.}
	\label{fig:Testfixu1}
\end{figure}

The left plot 
shows the hyperplanes initially tilt counter-clockwise for $\tau = 0.01$, tilt nearly 
vertical for $
\tau=0.5$ and then begin tilting counter-clockwise again for $\tau =0.99$.  The hyperplanes in the right plot are all almost parallel tilting slightly clockwise. To 
understand why this is happening, imagine traveling along the $\mathbf{u} = 
(1/\sqrt{2},1/
\sqrt{2})
$ vector through the Tukey median. Data can be thought of as a viscous liquid that the 
hyperplane must travel 
through.  
When the hyperplane hits a dense region of data, that part of the hyperplane is 
slowed down 
as it attempts to 
travel through it, resulting in the hyperplane tilting towards 
the 
region with less dense data. Since the density of the data changes as one 
travels through the $\mathbf{u} = (1/\sqrt{2},1/\sqrt{2})$ direction, the 
hyperplanes are tilting. However, the density of the data in the $
\mathbf{u} = (1,0)$ direction does not change much, so the tilt of the 
hyperplanes does not change.

\subsection{Sensitivity analysis}

\begin{figure}[H]
	\centering
	\includegraphics[width=0.7\linewidth]{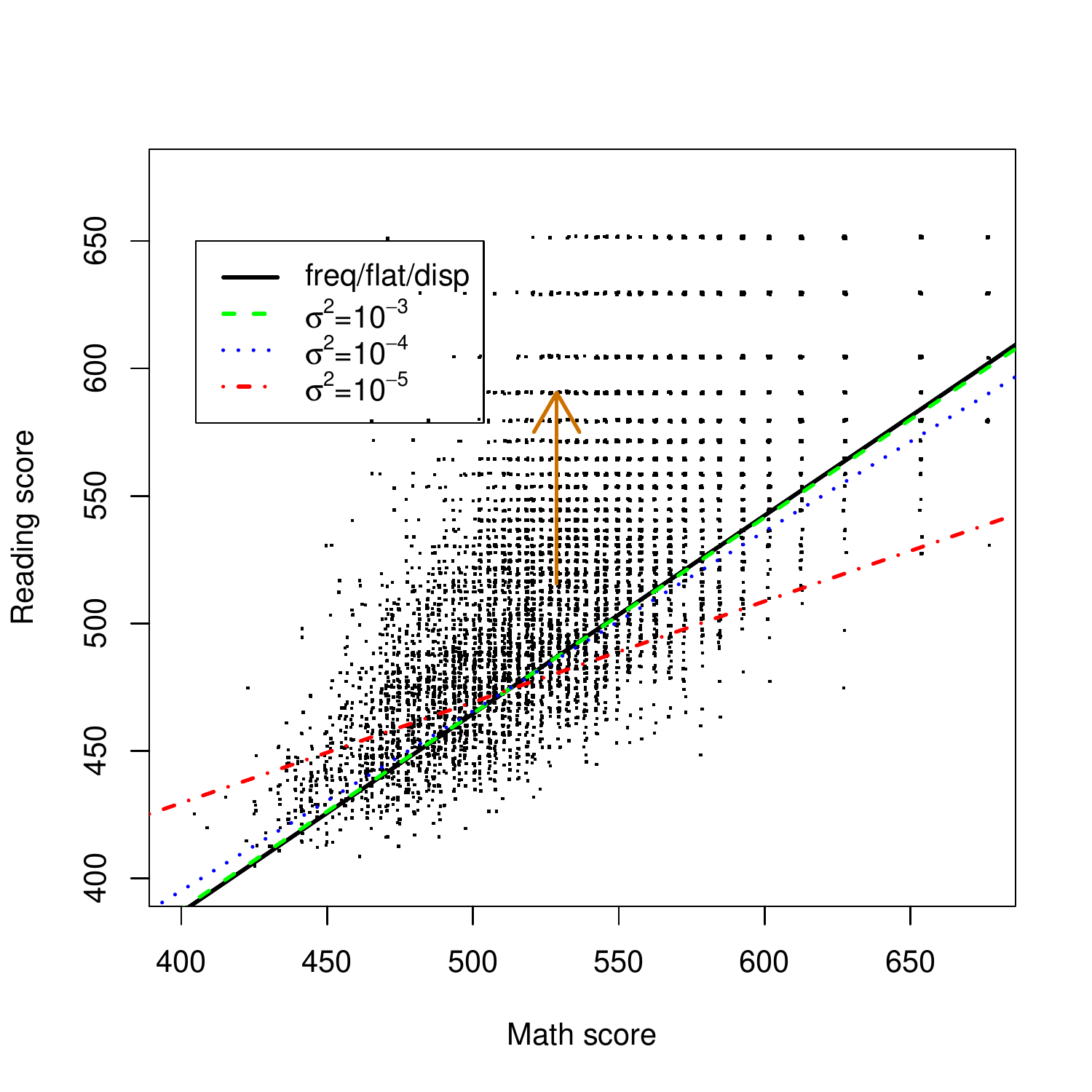}
	\caption[Prior influence ex-post]{Prior influence ex-post}
	\label{fig:testpr1}
\end{figure}

Figure \ref{fig:testpr1}  shows posterior sensitivity to different prior 
specifications of the location model with directional vector $\mathbf{u}=(0,1)$ 
pointing $90^\circ$ in 
the reading direction. The posteriors are compared against the frequentist estimate 
(solid 
black line). The first specification is the (improper) flat prior (i.e.\ 
Lebesgue 
measure) represented by the solid black line and cannot be visually 
differentiated from the frequentist estimate.  The rest of the specifications 
are proper priors with common mean, $\mu_{\theta_{\bm{\tau}}} = \mathbf{0}_2$.  
The dispersed prior has covariance $\Sigma_{\theta_{\bm{\tau}}} = 1000\mathbf{I}
_2$ and is represented by the solid black line and cannot be visually 
differentiated from the frequentist estimate or the estimate from the flat 
prior.  The next three priors have covariance matrices $
\Sigma_{\theta_{\bm{\tau}}} = diag(1000,\sigma^2)$ with $\sigma^2=10^{-3}$ 
(dashed green), $\sigma^2=10^{-4}$ (dotted blue) and $\sigma^2=10^{-5}$ (dash 
dotted red).  As the prior becomes more informative $\beta_{\bm{\tau}}$
converges to zero with resulting model $\hat{reading}_i = \alpha_{\bm{\tau}}$.

\bibliographystyle{chicago}

%\raggedright
\bibliography{BayesmultquantbibRev1arXiv}

\begin{thebibliography}{}

\bibitem[\protect\citeauthoryear{Alhamzawi, Yu, and Benoit}{Alhamzawi
  et~al.}{2012}]{alhamzawi12}
Alhamzawi, R., K.~Yu, and D.~F. Benoit (2012).
\newblock {B}ayesian adaptive lasso quantile regression.
\newblock {\em Statistical Modelling\/}~{\em 12\/}(3), 279--297.

\bibitem[\protect\citeauthoryear{Benoit and Van~den Poel}{Benoit and Van~den
  Poel}{2012}]{benoit12}
Benoit, D.~F. and D.~Van~den Poel (2012).
\newblock Binary quantile regression: a {B}ayesian approach based on the
  asymmetric {L}aplace distribution.
\newblock {\em Journal of Applied Econometrics\/}~{\em 27\/}(7), 1174--1188.

\bibitem[\protect\citeauthoryear{Benoit and {Van den Poel}}{Benoit and {Van den
  Poel}}{2017}]{BayesQR}
Benoit, D.~F. and D.~{Van den Poel} (2017).
\newblock {bayesQR}: A {B}ayesian approach to quantile regression.
\newblock {\em Journal of Statistical Software\/}~{\em 76\/}(7), 1--32.

\bibitem[\protect\citeauthoryear{Chernozhukov and Hong}{Chernozhukov and
  Hong}{2003}]{chernozhukov03}
Chernozhukov, V. and H.~Hong (2003).
\newblock An {MCMC} approach to classical estimation.
\newblock {\em Journal of Econometrics\/}~{\em 115\/}(2), 293--346.

\bibitem[\protect\citeauthoryear{Dagpunar}{Dagpunar}{1989}]{dagpunar89}
Dagpunar, J. (1989).
\newblock An easily implemented generalised inverse {G}aussian generator.
\newblock {\em Communications in Statistics - Simulation and
  Computation\/}~{\em 18\/}(2), 703--710.

\bibitem[\protect\citeauthoryear{Drovandi and Pettitt}{Drovandi and
  Pettitt}{2011}]{drovandi11}
Drovandi, C.~C. and A.~N. Pettitt (2011).
\newblock Likelihood-free {B}ayesian estimation of multivariate quantile
  distributions.
\newblock {\em Computational Statistics \& Data Analysis\/}~{\em 55\/}(9),
  2541--2556.

\bibitem[\protect\citeauthoryear{Dutta, Ghosh, Chaudhuri, et~al.}{Dutta
  et~al.}{2011}]{dutta11}
Dutta, S., A.~K. Ghosh, P.~Chaudhuri, et~al. (2011).
\newblock Some intriguing properties of {T}ukey’s half-space depth.
\newblock {\em Bernoulli\/}~{\em 17\/}(4), 1420--1434.

\bibitem[\protect\citeauthoryear{Embrechts and Hofert}{Embrechts and
  Hofert}{2013}]{embrechts13}
Embrechts, P. and M.~Hofert (2013).
\newblock A note on generalized inverses.
\newblock {\em Mathematical Methods of Operations Research\/}~{\em 77\/}(3),
  423--432.

\bibitem[\protect\citeauthoryear{Feng, Wang, Tu, and Kowalski}{Feng
  et~al.}{2012}]{feng12}
Feng, C., H.~Wang, X.~M. Tu, and J.~Kowalski (2012).
\newblock A note on generalized inverses of distribution function and quantile
  transformation.
\newblock {\em Applied Mathematics\/}~{\em 3\/}(12A), 2098--2100.

\bibitem[\protect\citeauthoryear{Feng, Chen, and He}{Feng
  et~al.}{2015}]{feng15}
Feng, Y., Y.~Chen, and X.~He (2015).
\newblock {B}ayesian quantile regression with approximate likelihood.
\newblock {\em Bernoulli\/}~{\em 21\/}(2), 832--850.

\bibitem[\protect\citeauthoryear{Finn and Achilles}{Finn and
  Achilles}{1990}]{finn90}
Finn, J.~D. and C.~M. Achilles (1990).
\newblock Answers and questions about class size: A statewide experiment.
\newblock {\em American Educational Research Journal\/}~{\em 27\/}(3),
  557--577.

\bibitem[\protect\citeauthoryear{Folger and Breda}{Folger and
  Breda}{1989}]{folger89}
Folger, J. and C.~Breda (1989).
\newblock Evidence from project star about class size and student achievement.
\newblock {\em Peabody Journal of Education\/}~{\em 67\/}(1), 17--33.

\bibitem[\protect\citeauthoryear{Fox and Rubin}{Fox and Rubin}{1964}]{fox64}
Fox, M. and H.~Rubin (1964, September).
\newblock Admissibility of quantile estimates of a single location parameter.
\newblock {\em Ann. Math. Statist.\/}~{\em 35\/}(3), 1019--1030.

\bibitem[\protect\citeauthoryear{Hallin, Lu, Paindaveine, and Šiman}{Hallin
  et~al.}{2015}]{hallin15}
Hallin, M., Z.~Lu, D.~Paindaveine, and M.~Šiman (2015, 08).
\newblock Local bilinear multiple-output quantile/depth regression.
\newblock {\em Bernoulli\/}~{\em 21\/}(3), 1435--1466.

\bibitem[\protect\citeauthoryear{Hallin, Paindaveine, and {\v{S}}iman}{Hallin
  et~al.}{2010}]{hallin10}
Hallin, M., D.~Paindaveine, and M.~{\v{S}}iman (2010).
\newblock Multivariate quantiles and multiple-output regression quantiles: from
  {L1} optimization to halfspace depth.
\newblock {\em The Annals of Statistics\/}~{\em 38\/}(2), 635--703.

\bibitem[\protect\citeauthoryear{Hastings}{Hastings}{1970}]{hastings70}
Hastings, W.~K. (1970, 04).
\newblock {Monte Carlo sampling methods using Markov chains and their
  applications}.
\newblock {\em Biometrika\/}~{\em 57\/}(1), 97--109.

\bibitem[\protect\citeauthoryear{Hinkley}{Hinkley}{1969}]{hinkley69}
Hinkley, D.~V. (1969).
\newblock On the ratio of two correlated normal random variables.
\newblock {\em Biometrika\/}~{\em 56\/}(3), 635--639.

\bibitem[\protect\citeauthoryear{Hinkley}{Hinkley}{1970}]{hinkley70}
Hinkley, D.~V. (1970).
\newblock Correction: `on the ratio of two correlated normal random variables'.
\newblock {\em Biometrika\/}~{\em 57\/}(3), 683.

\bibitem[\protect\citeauthoryear{Khare and Hobert}{Khare and
  Hobert}{2012}]{khare12}
Khare, K. and J.~P. Hobert (2012).
\newblock Geometric ergodicity of the {G}ibbs sampler for {B}ayesian quantile
  regression.
\newblock {\em Journal of Multivariate Analysis\/}~{\em 112}, 108 -- 116.

\bibitem[\protect\citeauthoryear{Kleijn and van~der Vaart}{Kleijn and van~der
  Vaart}{2006}]{kleijn06}
Kleijn, B.~J. and A.~W. van~der Vaart (2006).
\newblock Misspecification in infinite-dimensional {B}ayesian statistics.
\newblock {\em The Annals of Statistics\/}~{\em 38\/}(2), 837--877.

\bibitem[\protect\citeauthoryear{Koenker}{Koenker}{2005}]{koenker05}
Koenker, R. (2005).
\newblock {\em Quantile Regression}.
\newblock Econometric Society Monographs. Cambridge University Press.

\bibitem[\protect\citeauthoryear{Koenker}{Koenker}{2018}]{quantreg}
Koenker, R. (2018).
\newblock {\em quantreg: Quantile Regression}.
\newblock R package version 5.38.

\bibitem[\protect\citeauthoryear{Koenker and Bassett}{Koenker and
  Bassett}{1978}]{koenker78}
Koenker, R. and G.~Bassett (1978).
\newblock Regression quantiles.
\newblock {\em Econometrica: Journal of the Econometric Society\/}~{\em
  38\/}(1), 33--50.

\bibitem[\protect\citeauthoryear{Kong and Mizera}{Kong and
  Mizera}{2012}]{kong12}
Kong, L. and I.~Mizera (2012).
\newblock Quantile tomography: using quantiles with multivariate data.
\newblock {\em Statistica Sinica\/}~{\em 22\/}(4), 1589--1610.

\bibitem[\protect\citeauthoryear{Kottas and Krnjaji{\'c}}{Kottas and
  Krnjaji{\'c}}{2009}]{kottas09}
Kottas, A. and M.~Krnjaji{\'c} (2009).
\newblock {B}ayesian semiparametric modelling in quantile regression.
\newblock {\em Scandinavian Journal of Statistics\/}~{\em 36\/}(2), 297--319.

\bibitem[\protect\citeauthoryear{Kotz, Kozubowski, and Podgorski}{Kotz
  et~al.}{2001}]{kotz01}
Kotz, S., T.~Kozubowski, and K.~Podgorski (2001).
\newblock {\em The {L}aplace Distribution and Generalizations: A Revisit with
  Applications to Communications, Economics, Engineering, and Finance}.
\newblock Progress in Mathematics Series. Birkh{\"a}user Boston.

\bibitem[\protect\citeauthoryear{Kozumi and Kobayashi}{Kozumi and
  Kobayashi}{2011}]{kozumi11}
Kozumi, H. and G.~Kobayashi (2011).
\newblock {G}ibbs sampling methods for {B}ayesian quantile regression.
\newblock {\em Journal of Statistical Computation and Simulation\/}~{\em
  81\/}(11), 1565--1578.

\bibitem[\protect\citeauthoryear{Krueger}{Krueger}{1999}]{krueger99}
Krueger, A.~B. (1999).
\newblock Experimental estimates of education production functions.
\newblock {\em The Quarterly Journal of Economics\/}~{\em 114\/}(2), 497--532.

\bibitem[\protect\citeauthoryear{Laine}{Laine}{2001}]{laine01}
Laine, B. (2001).
\newblock Depth contours as multivariate quantiles: A directional approach.
\newblock Master's thesis, Univ. Libre de Bruxelles, Brussels.

\bibitem[\protect\citeauthoryear{Lancaster and Jae~Jun}{Lancaster and
  Jae~Jun}{2010}]{lancaster10}
Lancaster, T. and S.~Jae~Jun (2010).
\newblock {B}ayesian quantile regression methods.
\newblock {\em Journal of Applied Econometrics\/}~{\em 25\/}(2), 287--307.

\bibitem[\protect\citeauthoryear{Li, Xi, Lin, et~al.}{Li et~al.}{2010}]{li10}
Li, Q., R.~Xi, N.~Lin, et~al. (2010).
\newblock Bayesian regularized quantile regression.
\newblock {\em Bayesian Analysis\/}~{\em 5\/}(3), 533--556.

\bibitem[\protect\citeauthoryear{Liu}{Liu}{2008}]{liu08}
Liu, J.~S. (2008).
\newblock {\em Monte Carlo strategies in scientific computing}.
\newblock Springer Science \& Business Media.

\bibitem[\protect\citeauthoryear{Marsaglia}{Marsaglia}{1965}]{marsaglia65}
Marsaglia, G. (1965).
\newblock Ratios of normal variables and ratios of sums of uniform variables.
\newblock {\em Journal of the American Statistical Association\/}~{\em
  60\/}(309), 193--204.

\bibitem[\protect\citeauthoryear{Marsaglia}{Marsaglia}{2006}]{marsaglia06}
Marsaglia, G. (2006, May).
\newblock Ratios of normal variables.
\newblock {\em Journal of Statistical Software\/}~{\em 16}, 1--10.

\bibitem[\protect\citeauthoryear{McKeague, L{\v{o}}pez-Pintado, Hallin, and
  {\v{S}}iman}{McKeague et~al.}{2011}]{mckeague11}
McKeague, I.~W., S.~L{\v{o}}pez-Pintado, M.~Hallin, and M.~{\v{S}}iman (2011).
\newblock Analyzing growth trajectories.
\newblock {\em Journal of Developmental Origins of Health and Disease\/}~{\em
  2\/}(6), 322–329.

\bibitem[\protect\citeauthoryear{Mosteller}{Mosteller}{1995}]{mosteller95}
Mosteller, F. (1995).
\newblock The {T}ennessee study of class size in the early school grades.
\newblock {\em The future of children\/}~{\em 5\/}(2), 113--127.

\bibitem[\protect\citeauthoryear{Neal}{Neal}{2003}]{neal03}
Neal, R.~M. (2003, 06).
\newblock Slice sampling.
\newblock {\em Ann. Statist.\/}~{\em 31\/}(3), 705--767.

\bibitem[\protect\citeauthoryear{Paindaveine and \v{S}iman}{Paindaveine and
  \v{S}iman}{2011}]{paindaveinesiman11}
Paindaveine, D. and M.~\v{S}iman (2011).
\newblock On directional multiple-output quantile regression.
\newblock {\em Journal of Multivariate Analysis\/}~{\em 102\/}(2), 193 -- 212.

\bibitem[\protect\citeauthoryear{Rahman}{Rahman}{2016}]{rahman16}
Rahman, M.~A. (2016).
\newblock {B}ayesian quantile regression for ordinal models.
\newblock {\em {B}ayesian Analysis\/}~{\em 11\/}(1), 1--24.

\bibitem[\protect\citeauthoryear{Rahman and Karnawat}{Rahman and
  Karnawat}{2019}]{RahmanKarnawat19}
Rahman, M.~A. and S.~Karnawat (2019, 09).
\newblock Flexible {B}ayesian quantile regression in ordinal models.
\newblock {\em Advances in Econometrics\/}~{\em 40B}, 211--251.

\bibitem[\protect\citeauthoryear{Rice}{Rice}{2010}]{rice10}
Rice, J.~K. (2010).
\newblock The impact of teacher experience: Examining the evidence and policy
  implications. brief no. 11.
\newblock Technical report.

\bibitem[\protect\citeauthoryear{Robert}{Robert}{1991}]{robert91}
Robert, C. (1991).
\newblock Generalized inverse normal distributions.
\newblock {\em Statistics \& Probability Letters\/}~{\em 11\/}(1), 37--41.

\bibitem[\protect\citeauthoryear{Rousseeuw and Ruts}{Rousseeuw and
  Ruts}{1999}]{rousseeuw99}
Rousseeuw, P.~J. and I.~Ruts (1999).
\newblock The depth function of a population distribution.
\newblock {\em Metrika\/}~{\em 49\/}(3), 213--244.

\bibitem[\protect\citeauthoryear{Serfling}{Serfling}{2002}]{serfling02}
Serfling, R. (2002).
\newblock Quantile functions for multivariate analysis: approaches and
  applications.
\newblock {\em Statistica Neerlandica\/}~{\em 56\/}(2), 214--232.

\bibitem[\protect\citeauthoryear{Serfling and Zuo}{Serfling and
  Zuo}{2010}]{serflingzuo10}
Serfling, R. and Y.~Zuo (2010, 04).
\newblock Discussion.
\newblock {\em Ann. Statist.\/}~{\em 38\/}(2), 676--684.

\bibitem[\protect\citeauthoryear{Small}{Small}{1990}]{small90}
Small, C.~G. (1990).
\newblock A survey of multidimensional medians.
\newblock {\em International Statistical Review / Revue Internationale de
  Statistique\/}~{\em 58\/}(3), 263--277.

\bibitem[\protect\citeauthoryear{Sriram}{Sriram}{2015}]{sriram15}
Sriram, K. (2015).
\newblock A sandwich likelihood correction for {B}ayesian quantile regression
  based on the misspecified asymmetric {L}aplace density.
\newblock {\em Statistics \& Probability Letters\/}~{\em 107}, 18 -- 26.

\bibitem[\protect\citeauthoryear{Sriram, Ramamoorthi, Ghosh, et~al.}{Sriram
  et~al.}{2013}]{sriram13}
Sriram, K., R.~Ramamoorthi, P.~Ghosh, et~al. (2013).
\newblock Posterior consistency of {B}ayesian quantile regression based on the
  misspecified asymmetric {L}aplace density.
\newblock {\em {B}ayesian Analysis\/}~{\em 8\/}(2), 479--504.

\bibitem[\protect\citeauthoryear{Sriram, Ramamoorthi, and Ghosh}{Sriram
  et~al.}{2016}]{sriram16}
Sriram, K., R.~V. Ramamoorthi, and P.~Ghosh (2016).
\newblock On {B}ayesian quantile regression using a pseudo-joint asymmetric
  {L}aplace likelihood.
\newblock {\em Sankhya A\/}~{\em 78\/}(1), 87--104.

\bibitem[\protect\citeauthoryear{Taddy and Kottas}{Taddy and
  Kottas}{2010}]{taddy12}
Taddy, M.~A. and A.~Kottas (2010).
\newblock A {B}ayesian nonparametric approach to inference for quantile
  regression.
\newblock {\em Journal of Business \& Economic Statistics\/}~{\em 28\/}(3),
  357--369.

\bibitem[\protect\citeauthoryear{Tanner and Wong}{Tanner and
  Wong}{1987}]{tannerwong87}
Tanner, M. and W.~Wong (1987).
\newblock The calculation of posterior distributions by data augmentation.
\newblock {\em Journal of the American Statistical Association\/}~{\em
  82\/}(398), 528--540.

\bibitem[\protect\citeauthoryear{Thompson, Cai, Moyeed, Reeve, and
  Stander}{Thompson et~al.}{2010}]{thompson10}
Thompson, P., Y.~Cai, R.~Moyeed, D.~Reeve, and J.~Stander (2010).
\newblock {B}ayesian nonparametric quantile regression using splines.
\newblock {\em Computational Statistics \& Data Analysis\/}~{\em 54\/}(4),
  1138--1150.

\bibitem[\protect\citeauthoryear{Tukey}{Tukey}{1975}]{Tukey75}
Tukey, J.~W. (1975).
\newblock Mathematics and the picturing of data.

\bibitem[\protect\citeauthoryear{Waldmann and Kneib}{Waldmann and
  Kneib}{2014}]{waldmann14}
Waldmann, E. and T.~Kneib (2014).
\newblock {B}ayesian bivariate quantile regression.
\newblock {\em Statistical Modelling\/}~{\em 15\/}(4), 326--344.

\bibitem[\protect\citeauthoryear{Wang and Yang}{Wang and Yang}{2016}]{AdjBQR}
Wang, H.~J. and Y.~Yang (2016).
\newblock {\em AdjBQR: Adjusted {B}ayesian Quantile Regression Inference}.
\newblock R package version 1.0.

\bibitem[\protect\citeauthoryear{Word, Johnston, Bain, Fulton, Zaharias,
  Achilles, Lintz, Folger, and Breda}{Word et~al.}{1990}]{word90}
Word, E., J.~Johnston, H.~P. Bain, B.~D. Fulton, J.~B. Zaharias, C.~M.
  Achilles, M.~N. Lintz, J.~Folger, and C.~Breda (1990).
\newblock The state of {T}ennessee's student/teacher achievement ratio (star)
  project: Technical report 1985 -- 1990.
\newblock Technical report, Tennessee State Department of Education.

\bibitem[\protect\citeauthoryear{Yang, Wang, and He}{Yang
  et~al.}{2015}]{yangwanghe15}
Yang, Y., H.~J. Wang, and X.~He (2015).
\newblock Posterior inference in {B}ayesian quantile regression with asymmetric
  {L}aplace likelihood.
\newblock {\em International Statistical Review\/}~{\em 84\/}(3), 327--344.
\newblock 10.1111/insr.12114.

\bibitem[\protect\citeauthoryear{Yu, Lu, and Stander}{Yu et~al.}{2003}]{yu03}
Yu, K., Z.~Lu, and J.~Stander (2003).
\newblock Quantile regression: applications and current research areas.
\newblock {\em Journal of the Royal Statistical Society: Series D (The
  Statistician)\/}~{\em 52\/}(3), 331--350.

\bibitem[\protect\citeauthoryear{Yu and Moyeed}{Yu and Moyeed}{2001}]{yu01}
Yu, K. and R.~A. Moyeed (2001).
\newblock {B}ayesian quantile regression.
\newblock {\em Statistics \& Probability Letters\/}~{\em 54\/}(4), 437--447.

\bibitem[\protect\citeauthoryear{Zscheischler}{Zscheischler}{2014}]{zscheischler14}
Zscheischler, J. (2014).
\newblock {\em A global analysis of extreme events and consequences for the
  terrestrial carbon cycle}.
\newblock Ph.\ D. thesis, ETH Zurich.

\end{thebibliography}

\end{document}